\tikzset{squiggly/.style={decorate, decoration=snake}}
\tikzset{super thick/.style={line width=3pt}}
\tikzstyle{far>}=[decoration={markings, mark=at position 0.75 with {\arrow{>}}}, postaction={decorate}]
\tikzstyle{mid>}=[decoration={markings, mark=at position 0.55 with {\arrow{>}}}, postaction={decorate}]
\tikzstyle{mid<}=[decoration={markings, mark=at position 0.55 with {\arrow{<}}}, postaction={decorate}]
\tikzset{super thick/.style={line width=3pt}}
\tikzstyle{far>}=[decoration={markings, mark=at position 0.75 with {\arrow{>}}}, postaction={decorate}]
\tikzstyle{mid>}=[decoration={markings, mark=at position 0.55 with {\arrow{>}}}, postaction={decorate}]
\tikzstyle{mid<}=[decoration={markings, mark=at position 0.55 with {\arrow{<}}}, postaction={decorate}]
\tikzstyle{knot}=[preaction={super thick, white, draw}]
\tikzstyle{coupon}=[draw, very thick, rectangle, rounded corners=5pt]
\tikzset{Rightarrow/.style={double equal sign distance,>={Implies},->},
triplecd/.style={-,preaction={draw,Rightarrow}},
quadruplecd/.style={preaction={draw,Rightarrow,
shorten >=0pt
},
shorten >=1pt,
-,double,double
distance=0.2pt}}
\tikzset{
    tripleline/.style args={[#1] in [#2] in [#3]}{
        #1,preaction={preaction={draw,#3},draw,#2}
    }
}
\tikzstyle{triple}=[tripleline={[line width=.15mm,black] in
\tikzset{
    quadrupleline/.style args={[#1] in [#2] in [#3] in [#4]}{
        #1,preaction={preaction={preaction={draw,#4},draw,#3}, draw,#2}
    }
}
\tikzstyle{quadruple}=[quadrupleline={[line width=.3mm,white] in
\definecolor{violet}{RGB}{148,0,211}
\definecolor{DarkGreen}{RGB}{0,150,0}
\definecolor{medium-blue}{rgb}{0,0,.8}
\newcommand{\arxiv}[1]{\href{http://arxiv.org/abs/#1}{\tt arXiv:\nolinkurl{#1}}}
\newcommand{\arXiv}[1]{\href{http://arxiv.org/abs/#1}{\tt arXiv:\nolinkurl{#1}}}
\newenvironment{proofsketch}{%
  \proof}{\endproof}
\DeclareMathOperator{\br}{br}
\DeclareMathOperator{\End}{End}
\DeclareMathOperator{\loc}{\mathrm{loc}}
\DeclareMathOperator{\Hom}{Hom}
\DeclareMathOperator{\pt}{pt}
\DeclareMathOperator{\Sq}{Sq}
\DeclareMathOperator{\Bimod}{\mathbf{Bimod}}
\newcommand{\Z}{\mathbb{Z}}
\DeclareMathOperator{\SH}{SH}
\newcommand{\Mod}{\mathbf{Mod}}
\newcommand{\Pic}{\mathcal{P}ic}
\newcommand{\Vect}{\mathbf{Vect}}
\newcommand{\sVect}{\mathbf{sVect}}
\newcommand{\Rep}{\mathbf{Rep}}
\newcommand{\sWitt}{s\mathscr{W}itt}
\newcommand{\Witt}{\mathcal{W}itt}
\newcommand{\tVect}{\mathbf{2Vect}}
\newcommand{\SVect}{\mathbf{SVect}}
\newcommand{\tSVect}{\mathbf{2SVect}}
\newcommand{\tRep}{\mathbf{2Rep}}
\newcommand{\sPic}{\mathscr{P}ic}
\newcommand{\sBrPic}{\mathscr{B}r\mathscr{P}ic}
\newcommand{\cZ}{\mathcal{Z}}
\newcommand{\sAut}{\mathscr{A}ut}
\newcommand{\rH}{\mathrm{H}}
\newcommand{\rB}{\mathrm{B}}
\newcommand{\bC}{\mathbb{C}}
\def\semicolon{;}
\def\applytolist#1{
    \expandafter\def\csname multi#1\endcsname##1{
        \def\multiack{##1}\ifx\multiack\semicolon
            \def\next{\relax}
        \else
            \csname #1\endcsname{##1}
            \def\next{\csname multi#1\endcsname}
        \fi
        \next}
    \csname multi#1\endcsname}
\def\calc#1{\expandafter\def\csname c#1\endcsname{{\mathcal #1}}}
\def\bbc#1{\expandafter\def\csname bb#1\endcsname{{\mathbb #1}}}
\def\bfc#1{\expandafter\def\csname bf#1\endcsname{{\mathbf #1}}}
\def\sfc#1{\expandafter\def\csname s#1\endcsname{{\sf #1}}}
\def\fc#1{\expandafter\def\csname f#1\endcsname{{\mathfrak #1}}}
\def\rmc#1{\expandafter\def\csname rm#1\endcsname{{\mathrm #1}}}
\def\scc#1{\expandafter\def\csname sc#1\endcsname{{\mathscr #1}}}
\def\scc#1{\expandafter\def\csname sc#1\endcsname{{\mathscr #1}}}
\numberwithin{equation}{section}
\theoremstyle{plain}
\newtheorem{thm}[equation]{Theorem}
\newtheorem*{thm*}{Theorem}
\newtheorem{cor}[equation]{Corollary}
\newtheorem{lem}[equation]{Lemma}
\newtheorem{prop}[equation]{Proposition}
\newtheorem*{claim*}{Claim}
\theoremstyle{definition}
\newtheorem{defn}[equation]{Definition}
\newtheorem{ansatz}[equation]{Ansatz}
\newtheorem*{trick*}{Trick}
\newtheorem*{convention*}{Convention}
\newtheorem{ex}[equation]{Example}
\newtheorem{rem}[equation]{Remark}
\newtheorem*{BosonicNondegen}{\Cref{BosonicNondegen}}
\newtheorem*{FermionicNondegen}{\Cref{FermionicNondegen}}
\newtheorem*{FermionicSET}{\Cref{prop:genuinelyfermionicTOclassification}}
\newtheorem*{SWitt}{\Cref{prop:sWittAnomaly}}
\newcommand\define[1]{\emph{#1}}
\title{The Classification of 3+1d Symmetry Enriched Topological Order}\author{Thibault D. Décoppet}
\address{Department of Mathematics, Harvard University,
1 Oxford St, Cambridge, MA 02138, USA}
\email{\href{mailto:decoppet@math.harvard.edu}{decoppet@math.harvard.edu}}
\author{Matthew Yu}
 \address{Mathematical Institute, University of Oxford, Woodstock Road, Oxford, OX2 6GG, UK}
 \email{\href{mailto:yumatthew70@gmail.com}{yumatthew70@gmail.com}}
\begin{document}

\begin{abstract}
We use a 2-categorical version of (de-)equivariantization to classify (3+1)d topological orders with a finite $G$-symmetry. In particular, we argue that (3+1)d fermionic topological orders with $G$-symmetry correspond to nondegenerate $\mathbf{2SVect}$-central $G$-crossed braided fusion 2-categories.
We then show that the categorical data necessary to define these theories agrees with that arising from a fermionic generalization of the Wang-Wen-Witten construction of bosonic topological theories with $G$-symmetry saturating an anomaly.
\end{abstract}

\maketitle

\section{Introduction}

A deeper mathematical understanding of higher categories has yielded novel physical insights into quantum field theory and  many adjacent areas, such as topological quantum field theory and quantum computing. In the case of fusion 2-categories, which are important for the symmetries of (2+1)d QFTs, a complete classification has been achieved in \cite{Decoppet:2024htz}. There, it was shown that beyond the theory of braided fusion 1-categories, all that is needed for the classification is data coming from group theory and cohomology. 

The study of fusion 2-categories has also led to progress in understanding topological order (TO), which are theories where all the operators are symmetry operators. 
In (3+1)d, both bosonic and fermionic topological orders have been classified in \cite{JF}, building on previous work \cite{Lan_2018,Lan_2019}. 
In (4+1)d, a classification was obtained in \cite{JFY}. Such results have applications to the analysis of categorical symmetries of quantum field theories. This is due to the development of the SymTFT, or topological holography picture \cite{Freed:2022qnc,kong2020algebraic,Wen:2013oza}, where the categorical symmetries
of a quantum field theory in $d$-dimensions is captured by a TQFT in $(d+1)$-dimensions. See \cite{Kaidi:2022cpf,Bhardwaj:2023wzd,Bhardwaj:2023ayw,Bhardwaj:2023bbf,Bartsch:2023wvv,Bhardwaj:2024xcx} for explicit examples where the higher categorical setup for the SymTFT is used.  Given this connection, it becomes greatly beneficial to further investigate higher-dimensional SymTFTs and their boundary conditions. Progress in this direction is essential for the study of symmetries in QFTs, particularly in the physically relevant case of (3+1)-dimensions.

Another important context in which TQFTs play a crucial role is as effective IR descriptions of UV gauge theories. In particular, when the UV theory possesses a 't Hooft anomaly, the corresponding TQFT can be used to match and saturate this anomaly in the IR. Studying anomalous TQFTs thus provides valuable insight into the nonperturbative dynamics and global structure of the UV theory.  Anomaly matching has recently led to significant progress in (2+1)d and (1+1)d QCD/QED, thanks to our understanding of candidate infrared TQFTs and their associated anomalies \cite{Gomis:2017ixy,Cordova:2017vab,Benini:2017dus,Komargodski:2017keh,Cordova:2017kue,Choi:2018tuh,Cordova:2018qvg,Delmastro:2021otj}. In order to extend these methods to (3+1)d fermionic gauge theories, a mathematical framework describing how the aforementioned TQFT can be coupled to a global symmetry, as well as the corresponding anomalies, is needed. In the presence of an anomaly, not all UV gauge theories necessarily flow to a gapped IR theory:\ there could exist symmetry-enforced-gaplessness constraints, we refer to \cite{CO1,CO2} for examples. Assuming that the UV theory does flow to a gapped, i.e.\ topological, theory in the IR, our formalism using fusion 2-categories describes the possible TQFTs.

Our main objective is to provide a unified framework that encapsulates (3+1)d topological orders and $G$-SETs, braided fusion 2-categories, and Lagrangian algebras for certain (4+1)d SymTFTs. This will be achieved by a 2-categorical version of (de-)equivariantization. We leverage the structure of braided fusion 2-categories to classify (3+1)d $G$-SETs in terms of group theory and cohomology.
Having done so, we also outline the classification of all braided fusion 2-categories. 

\subsection*{Notation}\label{section:notation}
We begin by compiling a few key terms related to
symmetries involving fusion 2-categories. Throughout we work over the complex numbers.

Associated to a fusion 2-category $\mathfrak{C}$, the braided fusion 1-category that describes the line operators is given by $\Omega\mathfrak{C} := \End_{\fC}(\mathds{1})$, where $\mathds{1}$ is the monoidal unit of $\mathfrak{C}$. 
The most fundamental invariant of the braided fusion 1-category $\Omega\mathfrak{C}$ is its so called M\"uger, or symmetric, center $\mathcal{Z}_{(2)}(\Omega \mathfrak{C})$. This is the subcategory of objects whose double braiding with any other object is trivial, i.e.\ the invisible line operators. For our classification purposes, we note that in the case when $\fC$ is braided, then $\Omega \fC$ is always symmetric. 

It follows from the work of Deligne \cite{deligne2002} that symmetric fusion 1-categories split into two classes:\ A symmetric fusion 1-category is Tannakian if it admits a fiber functor to $\mathbf{Vect}$. Otherwise, it is called super-Tannakian, and admits a fiber functor to $\mathbf{SVect}$. This dichotomy induces an analogous division for braided fusion 1-categories, and more generally for fusion 2-categories.

\begin{defn}\label{def:stronglyfusion}
    A fusion 2-category $\mathfrak{C}$ is \textit{bosonic} if $\mathcal{Z}_{(2)}(\Omega\mathfrak{C})$ is Tannakian. A fusion 2-category $\mathfrak{C}$ is said to be \textit{fermionic} otherwise.
\end{defn}

\begin{rem}
Physically speaking, a fusion 2-category is ``fermionic'', if there is an emergent fermion in the category. An emergent fermion is a fermionic line operator, however the presence of this operator does not mean that the theory requires a spin structure to be defined, and is hence still bosonic. The simplest example is when a theory has a fermionic symmetry, but can be realized on a bosonic Hilbert space. We will nevertheless use the term ``fermionic fusion 2-category'' to indicate a category with emergent fermions following the terminology introduced in \cite{JFY,D9}.
\end{rem}

Fusion 2-categories with no line operators also play a special role within the general theory of fusion 2-categories. We therefore recall the following terminology.

\begin{defn}\label{def:strongly}
A bosonic \textit{strongly fusion} 2-category is a fusion 2-category $\fC$ such that $\Omega \fC \cong \Vect$. A fermionic \textit{strongly fusion} 2-category is a fusion 2-category $\fC$ such that $\Omega \fC \cong \sVect$.
\end{defn}

\subsection{Results}\label{subseciton:mainresults}
We categorically define and classify  (3+1)d TOs with a finite $G$-symmetry, also known as (3+1)d $G$-symmetry enriched topological orders ($G$-SETs). We present the cohomological data needed to construct such TOs, and discuss their $G$-anomalies. Our description of anomalous (3+1)d $G$-SETs  is completely general and applies to any finite unitary symmetry $G$,\footnote{The definition of fusion 2-categories is able to accommodate mixing between $G$ and fermion parity, however the case of anti-unitary $G$ goes beyond what has currently been rigorously developed for fusion 2-categories. We expect that once the theory of unitary higher fusion categories is available, the procedure we use to enlarge a fermionic theory with unitary symmetry can also be generalized to include anti-unitary symmetries.} hence leading to a description of the (3+1)d gapped IR theories.\footnote{While the symmetry groups could differ between the UV to IR theories, the anomaly for symmetry in the IR must pull back to the anomaly in the UV. In our setup we assume that the IR has a finite $G$-symmetry.} 
This categorifies the work of \cite{Barkeshli:2014cna}, which we also generalize to include fermionic topological theories.
These theories, also known as (3+1)d fermionic SETs, have also been studied in \cite{Yang:2023gvi,Cheng:2024awi}, as the boundary of an SPT. See also \cite{Kong:2020jne} for a classification of (nonanomalous) SETs in lower dimensions.

Our classification proceeds by describing how a (3+1)d TO with no symmetries can be equipped with $G$-symmetry. Topological orders in (3+1)d with no symmetries fall into three cases:\ All the excitations are bosons, among the excitations there is an emergent fermion, and among the excitations  there is a local fermion. Theories of each type have been classified by \cite{Kong:2014qka,kong2015boundary,kong2017boundary,Lan_2018,Lan_2019,JF}. 
We present an alternative construction of both all-boson topological orders and emergent-fermion topological orders via a 2-categorical analogue of (de-)equivariantization. In the 1-categorical setting, (de-)equivariantization exchanges a 0-form symmetry, given by the action of a finite group $G$ on a topological order, with a dual topological order carrying an action of the dual symmetry $\Rep(G)$. This correspondence can be viewed as arising from a particular gauging procedure, which may produce non-invertible dual symmetries. More specifically, it gives an equivalence of categories between the linear categories with $G$-action, and linear categories with $\Rep(G)$ action \cite{DGNO}. 
An analogous relationship holds in the 2-categorical setting.
We emphasize a classification strategy of (3+1)d TOs that involves (de-)equivariantization, as it will also play a crucial role in the classification of (3+1)d $G$-SETs. 
This approach therefore provides a holistic framework for understanding all (3+1)d TOs, i.e.\ nondegenerate braided fusion 2-categories, both with and without global symmetry.

\subsubsection{Nondegenerate Braided Fusion 2-Categories}\label{subsub:introNondegen}

On one hand, if $\fB$ is a bosonic braided fusion 2-category, then $\Omega \fB$ will be a Tannakian symmetric fusion 1-category, say $\mathbf{Rep}(G)$, for some finite group $G$. On the other hand, if $\fB$ is a fermionic braided fusion 2-category, then $\Omega \fB$ is a super-Tannakian symmetric fusion 1-category of the from $\Rep(\widetilde{G},z)$, for some finite super-group $(\widetilde{G},z)$. In particular, it contains $\Rep({G})$ with $G=\widetilde{G}/z$ as a symmetric tensor fusion sub-1-category. In either case, we find that there is a canonical braided tensor 2-functor $\mathbf{2Rep}({G})\rightarrow\mathfrak{B}$, which we can use to de-equivariantizing $\mathfrak{B}$. This produces a (not necessarily faithfully graded) $G$-crossed braided strongly fusion 2-categories, which is either bosonic or fermionic depending on the input $\mathfrak{B}$. Moreover, this procedure can be reversed -- this is equivariantization -- so that we can recover $\mathfrak{B}$ from its corresponding $G$-crossed braided strongly fusion 2-category.

We present two theorems classifying \textit{faithfully} graded $G$-crossed braided strongly fusion 2-categories. By equivariantization, this yields classifications of both bosonic and fermionic nondegenerate braided fusion 2-categories, which we will presently state. In the bosonic case, the result is well-known.

\begin{BosonicNondegen}[\cite{Lan_2018,JF}]\label{thm:nondegenBosonic}
    Nondegenerate bosonic braided fusion 2-categories are classified by a finite group $G$ and a class $\pi \in \rH^4(\rB G; \mathbb{C}^\times)$.
\end{BosonicNondegen}

\begin{rem}\label{rem:recoverBosonic}
    In particular, nondegenerate bosonic braided fusion 2-categories are of the form $\mathcal{Z}(\tVect_G^{\pi})$, that is, they arise as the Drinfeld center of the bosonic strongly fusion 2-category $\tVect_G^{\pi}$. The braided fusion 2-categories $\mathcal{Z}(\tVect_G^{\pi})$ were extensively studied in \cite{KTZ}. Physically, these categories classify all (3+1)d topological orders where the excitations are all bosons.
\end{rem}

We now turn to (3+1)d topological orders where the spectrum contains an emergent fermion, though the theory is itself still bosonic. These are described by fermionic nondegenerate braided fusion 2-categories, whose classification requires a generalized cohomology theory denoted $\SH^{*+\varpi}(\rB G)$. This cohomology theory goes by the name of $\varpi$-twisted (extended) supercohomology, and is associated to the space $\tSVect^\times$ of invertible objects and morphisms in $\tSVect$. We refer the reader to \S\ref{subsection:perlimFTO} for additional details.

\begin{FermionicNondegen}\label{thm:nondegenFermionic}
    Nondegenerate fermionic braided fusion 2-categories are classified by a finite group $G$, a class $\varsigma \in \SH^{5+\kappa}(\rB^2 \Z/2)$, where $\kappa$ is the nontrivial class in $\rH^2(\rB^2 \Z/2;\Z/2)$, a class $\tau\in \rH^2(\rB G;\Z/2)$, such that $\varsigma \circ \tau$ is trivial in $\SH^{5+\tau}(\rB G)$, and a class $\varpi \in \SH^{4+\tau}(\rB G)$.
\end{FermionicNondegen}
\begin{rem}\label{rem:recoverFermionic}
The last result provides a more explicit description of the data given in \cite[Corollary V.5]{JF} and refines a result of \cite{Lan_2019}.
Namely, in the case where $G$ is trivial, we recover the theories $\mathcal{S}$ and $\mathcal{T}$ investigated in \cite{JF2,JFR}, which are two nondegenerate fermionic braided fusion 2-categories $\fB$ with $\Omega\fB = \SVect$ and $\pi_0 \fB =\Z/2$. They are distinguished by the class $\varsigma\in\SH^{5+\kappa}(\rB^2 \Z/2)\cong\mathbb{Z}/2$. If $\varsigma$ is trivial, the corresponding theory, $\mathcal{S}$, is described by the nondegenerate braided fusion 2-category $\mathcal{Z}(\tSVect)$. If $\varsigma$ is nontrivial, the corresponding theory, $\mathcal{T}$, was overlooked in \cite{Lan_2019}.

More generally, if $\varsigma$ is trivial, but $G$ is arbitrary, our classification recovers all the nondegenerate braided fermionic fusion 2-categories of the form $\cZ(\tSVect^\varpi_G)$, that is, those arising as the Drinfeld center of a fermionic strongly fusion 2-category. However, it is known that distinct fermionic strongly fusion 2-categories can have equivalent Drinfeld center \cite{D9}.
This subtle point is addressed in \Cref{rem:redundancies} below, to which we direct the interested reader.
For a different, more physically minded, approach to this question, we refer the reader to \cite{Teixeira:2025qsg}.
\end{rem}

\begin{rem}
The above classification of nondegenerate braided fusion 2-categories also leads to a classification of Lagrangian algebras in certain fusion 3-categories of interest. More specifically, in \S\ref{subsection:lagrangianalg}, we present a classification of Lagrangian algebras in $\cZ(\mathbf{3Vect}_G)$, which has found application in \cite{Antinucci:2025fjp}. Our results thus lay the mathematical foundations necessary to push the classification of (3+1)d phases forward. Namely, the classification of Lagrangian algebras in $\cZ(\mathbf{Vect}_G)$ and $\cZ(\mathbf{2Vect}_G)$ has been used in the categorical Landau paradigm in (1+1)d and (2+1)d \cite{Bhardwaj:2023idu,Bhardwaj:2023fca,Bhardwaj:2024qrf,Bhardwaj:2024qiv,Bhardwaj:2025piv,Wen:2023otf,Wen:2024qsg,Wen:2025thg}.
\end{rem}

We recover the classification of genuinely fermionic (3+1)d topological orders given in \cite[Corollary V.4]{JF}. Mathematically, this corresponds to classifying nondegenerate $\mathbf{2SVect}$-central braided fusion 2-categories. Below, we explain how to classify these genuinely fermionic topological orders in the presence of a $G$-symmetry.

\subsubsection{(3+1)d $G$-SETs and their Anomalies} \label{subsub:classifyingSET}

We now focus on (3+1)d $G$-SETs in the strictly fermionic setting. By this, we mean that the corresponding TO contains local fermions and requires a spin structure to be defined. The classification in the bosonic case follows similarly, but is simpler and only requires removing a few adjectives from the fermionic classification. We refer the reader to \S\ref{section:G-TQFTs} for details.
In the work \cite{JF}, it was shown that fermionic (3+1)d topological orders are classified by nondegenerate $\mathbf{2SVect}$-central braided fusion 2-categories. That is, a braided fusion 2-category $\fB$ equipped with a sylleptic monoidal 2-functor $\tSVect \to \cZ_{(2)}(\fB)$. Here, the notation $\cZ_{(2)}(\fB)$ refers to the \textit{sylleptic center} of the braided fusion 2-category $\fB$, as defined in \cite{Cr}. This is a 2-categorical version of the M\"uger center defined for braided fusion 1-categories.

Equipping a fermionic TQFT with a finite $G$-symmetry amounts, at the level of 2-categories, to building a $G$-crossed braided extension. For a more details on how extensions encode the properties of symmetry actions, fractionalization patterns, and symmetry defects in (2+1)d, see \cite{Barkeshli:2014cna,Barkeshli:2019vtb,Bulmash:2020flp,Manjunath:2020kne,Barkeshli:2021ypb}.
Fermionic (3+1)d topological orders with $G$-symmetry correspond more precisely to $\mathbf{2SVect}$-central $G$-crossed braided fusion 2-categories satisfying a nondegeneracy condition -- namely, we require that the sylleptic 2-functor $\mathbf{2SVect}\rightarrow\mathcal{Z}_{(2)}(\mathfrak{B})$ is an equivalence. Using a version of (de-)equivariantization, these $G$-crossed braided extensions can be explicit classified. 

\begin{FermionicSET}\label{thm:GTQFT}
 Fermionic (3+1)d topological orders with $G$-symmetry are classified by a finite group $H$ surjecting onto $G$ together with a class in $\SH^{4}(\rB H)$.\footnote{We only consider topological orders with a trivial vacuum.}
\end{FermionicSET}

Given a fermionic (3+1)d topological order with a $G$-action, we can ask whether it is possible to insert $G$-defects into the $\mathbf{2SVect}$-central braided fusion 2-category $\fB$ such that the fusion and associativity relations
respect the group multiplication of $G$.
We will call the obstruction to being able to perform such a $\mathbf{2SVect}$-central $G$-crossed braided extension of $
\fB$ as the anomaly to gauging the $G$-symmetry, or, colloquially, the ``$G$-anomaly''. There is a generalized cohomology theory, which we denote by $\mathrm{SW}^*$, that classifies $G$-anomalies, and is represented by the 4-groupoid $\rB \sWitt$.

The set of isomorphism classes of objects in this 4-groupoid is the super-Witt group, introduced in \cite{DNO}, of Witt equivalence classes of nondegenerate $\mathbf{SVect}$-central braided fusion 1-categories, i.e.\ (genuinely) fermionic (2+1)d topological orders up to gapped boundary.
This 4-groupoid has been considered in \cite{JF2}, where it is argued that its homotopy groups classify fermionic topological orders.
Now, the fact that the space $\rB \sWitt$ describes the anomaly to gauging a $G$-symmetry on the $\mathbf{2SVect}$-central braided fusion 2-category $\fB$ is captured by the following fiber sequence:
\begin{equation}\label{eq:superfibseq}
    \begin{tikzcd}
\mathrm{B}\mathscr{S}\!\mathscr{P}ic(\mathfrak{B}) \arrow[r] & \mathrm{B}\mathscr{A}ut^{\br}_{\mathbf{2SVect}}(\fB) \arrow[r, "{[-]}"] & \rB \sWitt\,.
\end{tikzcd}
\end{equation}
Namely, a $G$-action on $\fB$ is precisely the data of a map of spaces $\rB G\rightarrow \rB \mathscr{A}ut^{\br}_{\mathbf{2SVect}}(\fB)$ into the delooping of the space of $\mathbf{2SVect}$-central braided autoequivalences of $\fB$. Moreover, $G$-defects can be inserted into $\mathfrak{B}$ if and only if this map of spaces can be lifted to $\mathrm{B}\mathscr{S}\!\mathscr{P}ic(\mathfrak{B})$, the delooping of the $\mathbf{2SVect}$-central Picard space of $\fB$. 
This yields the following mathematical result.

\begin{SWitt}\label{thm:obstruction}
 A $\mathbf{2SVect}$-central braided fusion 2-category $\mathfrak{B}$ equipped with a $G$-action $\rho:\rB G\rightarrow \mathrm{B}\mathscr{A}ut^{\br}_{\mathbf{2SVect}}(\fB)$ can be extended to a nondegenerate $\mathbf{2SVect}$-central $G$-crossed braided fusion 2-category if and only if $[\rho]\in \mathrm{SW}^5(\rB G)$, the anomaly of the action, is trivial.
\end{SWitt}

\noindent Physically speaking, obstructions to constructing fermionic (3+1)d topological orders with $G$-symmetry (aka fermionic (3+1)d $G$-SETs) are classified by homotopy classes of maps from $\rB G$ to the space $\rB \sWitt$. Therefore, the anomalies for fermionic (3+1)d topological orders with $G$-symmetry are captured by $\mathrm{SW}^5(\rB G)$.\footnote{The group that classifies anomaly of fermionic (3+1)d topological orders with $G$-symmetry can also be derived using the general definition of anomalies for TQFTs in \cite{Stockall:2025ppu}.}

In the works of \cite{Witten:2016cio,Wang:2017loc}, the authors gave a path integral construction of a $d$-dimensional bosonic TQFT that saturates an anomaly $\omega \in \rH^{d+1}(\rB G;\mathbb C^\times)$. The data presented there should generalize to allow for the construction of fermionic TQFTs, in which case a natural cocycle construction would involve a supercohomology class $\varpi \in \SH^{d+1}(\rB G)$. However, a path integral (i.e.\ state sum) construction for a boundary fermionic TQFT that saturates a fully general anomaly $\varpi$ has not yet been achieved. Some progress has been made in \cite{Kobayashi:2019lep}, where the authors give a state sum for the TQFT saturating $\varpi$ when two of its layers are nontrivial, i.e.\ when $\varpi$ is actually a class in \emph{restricted} supercohomology \cite{Freed:2006mx,Gu:2012ib}.
    
Even though the most general state sum construction is currently out of reach, we can nevertheless describe the TQFT abstractly from the point of view of fusion 2-categories. We conclude in \S\ref{subsection:GTQFT} and \S\ref{subsection:obstructiongauging}  that the data of Theorems \ref{prop:application} and \ref{prop:sWittAnomaly} matches the expected generalization of the data from \cite{Wang:2017loc}. Thence, we conclude that the two methods of constructing (3+1)d TQFTs with anomalies give the same result. The following ansatz can be taken as parallel to the data used by Wang-Wen-Witten \cite{Wang:2017loc} for their construction:
\begin{ansatz}[Fermionic Wang-Wen-Witten]\label{ansatz:fernionicTQFT}
 A (3+1)d fermionic topological theory with $G$-symmetry and {$\mathrm{SW}^5(\rB G)$} anomaly is realized as a $K$-gauge theory from the following data:
\begin{itemize}
    \item A short exact sequence $1 \rightarrow K \rightarrow H \rightarrow G\rightarrow 1$, where the normal subgroup $K$ is not necessarily abelian;
     \item A class in $\SH^4(\rB H)$;
    \item A class in $\SH^4(\rB K)$ giving the Dijkgraaf-Witten action for the $K$-gauge theory. 
\end{itemize}
\end{ansatz}
\begin{rem}
For the $K$-gauge theory to actually realize a specific cocycle representing a fixed anomaly, there are a couple more steps. Following \cite{Wang:2017loc}, the first is to compute the group $\mathrm{SW}^5(\rB G)$ and find its generators. The next step is to find a group $H$ which fits into an extension of the form 
\begin{equation}
    1\longrightarrow K \longrightarrow H \longrightarrow G \longrightarrow 1\,,
\end{equation}
where the anomaly vanishes when pulled back to $\mathrm{SW}^5(\rB H)$. Then one needs to solve a differential equation for the degree 4-cochain that witnesses the trivialization. Since $\rB\sWitt$ has a layer that goes beyond supercohomology, if one is to  follow \cite{Wang:2017loc} exactly, then one might want to restrict to a setting where the anomaly is just captured by a supercohomology cocycle. In particular, since the gauge theory that realizes the anomaly will be defined on a degree 4 supercohomology class, it is hard to see how it can accommodate something coming from $\pi_1(\rB \sWitt)$, the super-Witt group, as part of its anomaly. We give a short discussion on the ``beyond cocycle'' part of the obstruction at the end of \S\ref{subsection:obstructiongauging}. The cocycle approach to anomalies for constructing (3+1)d TQFTs with anomalies will be explored in more details for some physically relevant groups in future work \cite{DYY1}.
\end{rem}

\section{Preliminaries}\label{section:prelim}

\subsection{Bosonic Topological Orders}\label{subsection:BTQFT}
By a topological order we mean a topological field theory that is fully determined by its algebra of extended topological operators and satisfies remote detectability, i.e., such that there are no completely invisible operators, that is, operators linking trivially with all other operators. This leads to the following definition.
\begin{defn}\cite[Definition I.1]{JF}\label{def:toporder}
     A topological order in ($n$+1)-dimensions is a multifusion $n$-category with trivial center.\footnote{We will use topological order synonymously for anomalous fully extended TQFTs in the sense of \cite{Freed:2012bs}.}
\end{defn}
The trivial center can be thought of as a nondegeneracy condition on the category that describes the topological order. 

\begin{rem}
    This notion of topological order that includes remote detectability is described by a \textit{pure state} that is the ground state of a local, gapped Hamiltonian. Such a theory has no influence from the environment, and therefore is not subject to local decoherence, which could turn the pure state into a mixed state, hence destroying the topological order.
\end{rem}

Although the definition of topological order allows for multiple ground states, our discussion will focus solely on the \textit{fusion} case, thereby restricting local operators to the vacuum operator.\footnote{By fusion we mean that the vacuum is a simple rather than decomposable object. Furthermore, there are no nontrivial local operators, which can implement morphisms between line operators, thanks to the semisimplicity condition inherent to the notion of a higher (multi)fusion category.} This condition along with the principle of remote detectability implies that all codimension-1 operators arise as condensation descendant. This reduces the problem of classifying topological orders in any dimension down by one categorical level.

We now review the ``all bosons'' classification of topological orders in (3+1)d following \cite{Lan_2018}. The classification involving emergent fermions as well as local fermions unfolds in a similar way except for the fact that that super-vector spaces are involved \cite{Lan_2019,JF}. See also \cite{Thorngren:2020aph} for an explicit construction of the TQFT in the case when it contains an emergent fermion. The first step in the classification is to condense all the line operators of the theory.
For a (3+1)d topological order described by a fusion $3$-category $\mathbf{A}$, we write $\Omega \mathbf{A}: = \End_{\mathbf{A}}(\mathds{1})$ where $\mathds{1}$ is the monoidal unit in $\mathbf{A}$, for the braided fusion 2-category of surface defects. We have an adjoint map to $\Omega$ that we denote $\Mod(-)$, i.e.\ taking the category of modules, or condensation defects. As a shorthand, we will use $\Mod^n(-)$ to denote taking modules $n$-times. By this we mean, iteratively delooping and Karoubi completing $n$-times, following the prescription of \cite{Gaiotto:2019xmp,JF}. For us, we will consider at most $n=2$ and we will not address the technicalities arising when $n$ is larger. We refer the interested reader to \cite{Bhardwaj:2024xcx} for some comments in this direction.

The fusion 1-category of line defects in $\mathbf{A}$ is given by $\Omega^2\mathbf{A}$, which is a symmetric fusion 1-category. Physically one should think of lines as being able to move in an ambient (3+1)d setting without braiding. Since we are working in the bosonic setting, we can choose a fiber functor $F: \Omega^2\mathbf{A} \rightarrow \Vect$, and suspend it to $\Mod^2(F): \Mod^2(\Omega^2 \mathbf{A})\rightarrow \Mod^2 (\Vect) = \mathbf{3Vect}$ where $\Mod^2(\Omega^2 \mathbf{A})$ is the sub-3-category of $\mathbf{A}$ of operators arising as condensation descendants of line operators. Moreover, the functor $\Mod^2(F)$ turns $\mathbf{3Vect}$ into a module for $\Mod ^2(\Omega^2\mathbf{A})$. Taking the base change of this module along the inclusion $\Mod ^2(\Omega^2\mathbf{A}) \subset \mathbf{A}$ produces a module $\mathbf{M}$ for $\mathbf{A}$ given in formulas by:
\begin{equation}
    \mathbf{M} := \mathbf{A} \boxtimes_{\Mod^2 (\Omega^2\mathbf{A})} \mathbf{3Vect}.
\end{equation}
We then consider the fusion 3-category $\mathrm{End}_{\mathbf{A}}(\mathbf{M})$ of $\mathbf{A}$-linear endomorphism of $\mathbf{M}$. By definition, $\mathbf{M}$ witnesses a Morita equivalent between $\mathbf{A}$ and $\mathrm{End}_{\mathbf{A}}(\mathbf{M})$. But the category $\mathrm{End}_{\mathbf{A}}(\mathbf{M})$ has no nontrivial line operators by construction. By remote detectability, this implies that all codimension two operators arise as condensation descendants \cite{JF}, see also \cite{Lan_2018,Lan_2019}. 

In summary, all of the operators in $\mathrm{End}_{\mathbf{A}}(\mathbf{M})$ arise as condensation from the vacuum. This implies that any bosonic (3+1)d topological order arises from gauging a $G$-SPT set by the fiber functor $F$, since the fiber functor produces a gapped domain wall with $G$-symmetry. Hence, such theories are Dijkgraaf-Witten theories with Lagrangian description given by a class in $\rH^4(\rB G ;\mathbb C^\times)$, as first observed in \cite{Lan_2018}. This spells out the Physics associated to the classification in \Cref{rem:recoverBosonic}. We proceed to give a more detailed account of the categorical structure of bosonic (3+1)d TOs, which will be necessary for our main results.

\subsection{Fusion 2-Categories and their Braidings}

We use the definition of fusion 2-categories provided in \cite{douglas2018fusion}. More precisely, a fusion 2-category is a finite semisimple 2-category equipped with a rigid monoidal structure, whose monoidal unit is simple.

Given a monoidal 2-category $\fB$ with tensor product $\otimes$, a \textit{braiding} on $\fB$ consists of a natural equivalence 
\begin{equation}\label{eq:braidingF2C}
    b_{A|B}:A \otimes B \xrightarrow{\simeq} B \otimes A,
\end{equation}
for every object $A$, $B$ of $\mathfrak{B}$,
together with \define{hexagonators} $R_{(A|-,-)}$ and $S_{(-,-|A)}$ satisfying various axioms recorded for instance in \cite{SP} and \cite[Section 2.1.1]{Decoppet:2022dnz}.

We will be specifically interested in the case when $\fB$ is a braided strongly fusion 2-category, as recalled in Definition \ref{def:stronglyfusion}. Under this additional hypothesis, the connected components of $\mathfrak{B}$ form an abelian group $E$ as all the simple objects of $\mathfrak{B}$ are invertible by \cite{JFY}. It then follows easily, see for instance \cite[Section 2.2]{JFY2}, that braided bosonic strongly fusion 2-categories are classified by the finite abelian group $E$ together with a class in $\rH^5(\rB^2 E; \mathbb{C}^\times)$. The classification of braided fermionic strongly fusion 2-categories will be discussed below in \S\ref{subsection:FBF2C}.

Nondegeneracy for braided fusion 2-categories is based on the notion of the sylleptic center for braided monoidal 2-categories, which categorifies the symmetric center of a braided monoidal 1-category.

\begin{defn}[\cite{Cr}]\label{def:syllepticcenter}
    The sylleptic center of a braided monoidal 2-category $\fB$, denoted by $\cZ_{(2)}(\fB)$, is the sylleptic monoidal 2-category defined as follows:
    \begin{itemize}
        \item Objects are given by pairs $(B,v_{B,-})$ where $B$ is an object of $\fB$ and $v_{B,-}$ is an invertible modification $v_{B||X}$ given on $X$ in $\mathfrak{B}$ by:
        \begin{equation}
            \begin{tikzcd}[sep=small]
                B\otimes X \arrow[rr, "\simeq", "\Downarrow v_{B||X}"'{yshift=-6pt}] \arrow[rdd,"R_{B|X}",swap]&  & B\otimes X.\\
               & & \\
                & X\otimes B \arrow[ruu,"R_{X|B}",swap]&
            \end{tikzcd}
        \end{equation}
        \item A 1-morphism between $(B,v_B)$ and $(C,v_{C})$ is a 1-morphism $f:B \to C$ in $\mathfrak{B}$ such that the following diagram commutes for every object $X$ of $\mathfrak{B}$:
         \begin{equation}
            \begin{tikzcd}[sep=tiny]
                B\otimes X \arrow[rr, "\Downarrow v_{B||X}"{yshift=-6pt},swap] \arrow[ddd]\arrow[rdd,"R_{B|X}" {xshift=4pt},swap]&  & B\otimes X \arrow[ddd]\\
               & {} & \\
                & X\otimes B \arrow[ruu,"R_{X|B}"{xshift=-2pt},swap] \arrow[ddd, "\Downarrow v_{C||X}"{yshift=-3pt}]& \\
                C\otimes X \arrow[rr] \arrow[rdd,"R_{C|X}",swap]& & C\otimes X. \\
                 & {} & \\
                 & X\otimes C \arrow[ruu,"R_{X|C}",swap]&
            \end{tikzcd}
        \end{equation}
        \item The 2-morphisms are exactly the 2-morphisms in $\mathfrak{B}$.
    \end{itemize}
\end{defn}

For our purposes, it is useful to note that it was argued in \cite[Proposition 5.1.8]{xu2024higher} that the sylleptic center of a braided fusion 2-category is always a fusion 2-category. We can now introduce the precise mathematical condition corresponding to the physical principle of remote detectability in (3+1)d. We will come across many variants of this notion.

\begin{defn}\label{def:degeneracy}
A braided fusion 2-category $\fB$ is \textit{nondegenerate} if $\cZ_{(2)}(\fB)\cong \tVect$.
\end{defn}

\subsection{Fermionic Topological Orders as Central Fusion 2-Categories}\label{subsection:perlimFTO}
To transition from bosonic theories to fermionic ones we will consider braided fusion 2-categories enriched in $\tSVect$. Physically, this corresponds to adding ``local fermions'' so that the theory  requires a spin structure to be defined.

\begin{defn}
    A $\mathbf{2SVect}$-central braided fusion 2-category is a braided fusion 2-category $\fB$ equipped with a sylleptic monoidal functor $\tSVect \to \cZ_{(2)}(\fB)$.\footnote{We will only be concerned with $\tSVect$-enrichments, but this definition of enrichment is reasonable if one replaces $\tSVect$ by any sylleptic fusion 2-category.}
\end{defn}

To get the fully fledged fermionic TQFT, we also need to introduce a notion of nondegeneracy for $\mathbf{2SVect}$-central braided fusion 2-categories. 

\begin{defn}
A $\mathbf{2SVect}$-central braided fusion 2-category is \textit{nondegenerate} if the sylleptic monoidal functor $\tSVect \to \cZ_{(2)}(\fB)$ is an equivalence.
\end{defn}

\begin{ex}
    As an illustration of the enrichment principle, we note that in (2+1)d, fermionic topological orders are nondegenerate $\SVect$-central braided fusion 1-categories, also presented as spin Chern-Simons theories \cite{Dijkgraaf:1989pz,jenquin2006spin,Seiberg:2016rsg}, see also \cite{Delmastro:2019vnj,Debray:2023iwf} exploring abelian spin Chern-Simons with global symmetry.
\end{ex}

By analogy with the 1-categorical description of (2+1)d fermionic topological orders, (3+1)d fermionic topological orders  are described in terms of nondegenerate $\mathbf{2SVect}$-central braided fusion 2-categories, and their classification was established in \cite{JF}. Before stating the formal classification theorem, we review the generalized 
cohomology theory $\SH$ known as (extended) supercohomology \cite{Wang:2017moj}.\footnote{There is also a notion of restricted supercohomomology given in \cite{Freed:2006mx,Gu:2012ib}, which only considers two of the homotopy group; we will not be using that definition of supercohomology.} The homotopy groups of the spectrum corresponding to $\SH$ are given by
\begin{equation}
    \pi_{-2}\SH = \Z/2 , \quad  \pi_{-1}\SH = \Z/2, \quad  \pi_{0}\SH = \mathbb C^\times\,,
\end{equation}
with Postnikov extension data given by
\begin{equation}\label{eq:postnikov}
    \Sq^2:\pi_{-2}\SH \rightarrow \pi_{-1}\SH, \quad (-1)^{\Sq^2}:\pi_{-1}\SH \rightarrow \pi_{0}\SH.
\end{equation}
Another way of realizing supercohomology is as a shift of the Picard space 
\begin{equation}\label{eq:2SVectgroups}
    (\tSVect)^\times = \Z/2\boldsymbol{\cdot}\rB\Z/2\boldsymbol{\cdot}\rB^2 \mathbb{C}^\times.
\end{equation}
Physically, the $\rB\Z/2$ represents the vacuum and the fermion line $f$, and the $\Z/2$ represents the condensation of $f$ on a surface.

Having briefly reviewed supercohomology, we can now state the classification result from \cite{JF}.

\begin{thm}{\cite[Corollary V.4]{JF}}\label{thm:fermionicTQFT}
    Fermionic topological orders in (3+1)d with nondegenerate local ground states are, canonically, gauge theories for finite groups $G$, with the Dijkgraaf–Witten action given by a class in $\SH^4(\rB G)$.
\end{thm}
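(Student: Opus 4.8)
The plan is to transcribe, into the super setting, the ``condense all line operators'' argument for bosonic topological orders reviewed in \S\ref{subsection:BTQFT}; this is the strategy of \cite{JF}, refining \cite{Lan_2019}, so I only sketch it here.

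Let $\mathbf{A}$ be the fusion $3$-category with trivial center modelling a genuinely fermionic (3+1)d topological order, and consider its fusion $1$-category of line operators $\Omega^2\mathbf{A}$, which is now \emph{super}-Tannakian rather than Tannakian: by Deligne's theorem \cite{deligne2002}, $\Omega^2\mathbf{A}\simeq\Rep(\widetilde G,z)$ for a finite super-group $(\widetilde G,z)$. The first step is to condense the canonical maximal Tannakian subcategory $\Rep(G)\hookrightarrow\Omega^2\mathbf{A}$ of bosonic lines, i.e.\ those on which fermion parity acts trivially, where $G:=\widetilde G/\langle z\rangle$. Choosing a fiber functor $F:\Omega^2\mathbf{A}\to\sVect$ realizing this quotient and suspending it twice gives $\Mod^2(F):\Mod^2(\Omega^2\mathbf{A})\to\Mod^2(\sVect)=\mathbf{3SVect}$, which makes $\mathbf{3SVect}$ a module over the sub-$3$-category $\Mod^2(\Omega^2\mathbf{A})\subseteq\mathbf{A}$ of condensation descendants of lines. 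Taking the base change along this inclusion, I would form the $\mathbf{A}$-module $\mathbf{M}:=\mathbf{A}\boxtimes_{\Mod^2(\Omega^2\mathbf{A})}\mathbf{3SVect}$ and check, exactly as in \S\ref{subsection:BTQFT}, that $\mathbf{A}':=\End_{\mathbf{A}}(\mathbf{M})$ is a fusion $3$-category Morita equivalent to $\mathbf{A}$---hence presenting the same topological order---with $\Omega^2\mathbf{A}'\cong\sVect$, so that every line has been condensed except the transparent local fermion $f$.

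Having reached a theory with no nontransparent lines, I would argue as in the bosonic case (see \cite{JF,Lan_2019}) that remote detectability forces every operator of codimension at most two in $\mathbf{A}'$ to arise as a condensation descendant of the vacuum; equivalently, $F$ produces a gapped $G$-symmetric domain wall to the trivial theory, exhibiting $\mathbf{A}$ as a spin $G$-gauge theory. The only remaining datum is its Dijkgraaf--Witten action, a fermionic analogue of discrete torsion, and the final step is to identify it with a class in $\SH^4(\rB G)$: its three contributing layers $\rH^4(\rB G;\mathbb{C}^\times)$, $\rH^3(\rB G;\Z/2)$ and $\rH^2(\rB G;\Z/2)$ are read off from the homotopy type $(\tSVect)^\times=\Z/2\boldsymbol{\cdot}\rB\Z/2\boldsymbol{\cdot}\rB^2\mathbb{C}^\times$ of \eqref{eq:2SVectgroups}, with the Postnikov operations $\Sq^2$ and $(-1)^{\Sq^2}$ of \eqref{eq:postnikov} governing how they are glued; conversely, every such class produces a fermionic topological order by gauging. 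For the word ``canonically'', note that $G$ is canonically the super-Tannakian dual of the even part of $\Omega^2\mathbf{A}$ and that the $\SH^4(\rB G)$ class does not depend on the auxiliary choice of $F$ (different choices differ by the $G$-action), so the construction descends to an equivalence between the groupoid of genuinely fermionic (3+1)d topological orders and that of pairs $(G,\text{ a class in }\SH^4(\rB G))$.

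The main obstacle, exactly as in the bosonic setting, is the reduction step: one must make rigorous that $\Mod^2(\sVect)=\mathbf{3SVect}$ behaves formally like $\mathbf{3Vect}$, that the relative tensor product $\mathbf{M}$ and the endomorphism $3$-category $\End_{\mathbf{A}}(\mathbf{M})$ exist and remain fusion, and---most delicately---that the leftover invertible data is \emph{exactly} $\SH^4(\rB G)$ rather than some larger invertible $\tSVect$-enriched theory. This is where the genuinely fermionic features enter: it rests on pinning down the homotopy type $(\tSVect)^\times$ together with its deloopings and all of its $k$-invariants. Granting these inputs from \cite{JF,Lan_2019}, the remaining steps are a formal transcription of the bosonic template.
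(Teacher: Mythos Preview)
This theorem is not proven in the paper---it is quoted as \cite[Corollary V.4]{JF} and used as input---so there is no proof here to compare against. Your sketch is a reasonable outline of the \cite{JF} argument, transcribing the bosonic ``condense all lines'' template of \S\ref{subsection:BTQFT} to the super setting, which is exactly what the paper says one should do (``the classification involving\ldots\ local fermions unfolds in a similar way except for the fact that super-vector spaces are involved'').

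One point you should sharpen: you start with an arbitrary super-group $(\widetilde G,z)$, but for \emph{genuinely} fermionic orders---those modelled by nondegenerate $\tSVect$-enriched braided fusion 2-categories, which is how \S\ref{subsection:perlimFTO} frames the theorem---the super-group is automatically split, $\widetilde G\cong G\times\Z/2$. This is precisely the content of \Cref{rem:loopscenter}: the existence of the $\tSVect$-enrichment (equivalently, of a local fermion) forces the extension class $\tau\in\rH^2(\rB G;\Z/2)$ to vanish, and that is why the answer is untwisted $\SH^4(\rB G)$ rather than $\SH^{4+\tau}(\rB G)$. Without making this explicit, your sketch does not cleanly separate the genuinely fermionic case (\Cref{thm:fermionicTQFT}) from the emergent-fermion case classified in \Cref{FermionicNondegen}. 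Relatedly, your framing begins with an unenriched fusion $3$-category $\mathbf A$ with trivial center; strictly speaking the fermionic input is a $\tSVect$-enriched object, and it is the enrichment that singles out $f$ as \emph{local} and forces the splitting. The paper also records, just after the theorem, that the proof of \cite{JF} yields the sharper statement that every such order arises as the centralizer of $\tSVect\hookrightarrow\cZ(\tSVect_G^\varpi)$ (Equation~\eqref{eq:centralizer}); this refinement is used later, for instance in \Cref{lem:SH6}.
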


Another feature of supercohomology  is that the  supercohomology of a space $X$ can be twisted by a class in $\rH^2(X;\Z/2)$, leading to twisted supercohomology. This will be relevant for the classification of (3+1)d topological orders with an emergent fermion as well as (3+1)d $G$-SETs. This twisting arises because $(\tSVect)^\times$ has a nontrivial space of automorphisms given by $\sAut^{\mathrm{br}}(\SVect) \simeq \rB \Z/2$ -- with nontrivial braided natural isomorphism given by the fermion parity operator $(-1)^F$. Given a space $X$ equipped with a map $\kappa:X\to  \rB^2\Z/2$, or, equivalently, a $\rB\Z/2$-equivariant structure, the $\kappa$-twisted $n$-th cohomology group of $X$ is the group $\SH^{n+\kappa}(X)$ of homotopy classes of $\rB\Z/2$-equivariant maps from $X$ to $\rB^{n-2}\tVect^\times$.

\begin{rem}\label{rem:loopscenter}
    While the classification of fermionic strongly fusion 2-categories requires twisted supercohomology \cite{Decoppet:2024htz}, this is not the case in the context of genuinely fermionic TQFTs. More precisely, let $\tSVect^\varpi_G$ be the fermionic strongly fusion 2-category classified by the finite group $G$, and classes $\tau\in\rH^2(\rB G;\mathbb{Z}/2)$, and $\varpi\in \SH^{4+\tau}(\rB G)$. The class $\tau$ corresponds to a central extension of the finite group $G$ by $\mathbb{Z}/2$. This is a super-group $(\widetilde{G},z)$, where $0\neq z\in\mathbb{Z}/2$. Now, we have by inspection that $$\Omega\cZ(\tSVect^\varpi_G)\cong \mathbf{Rep}(\widetilde{G},z).$$
    But it is well-known that there is a symmetric monoidal functor $\SVect\rightarrow \mathbf{Rep}(\widetilde{G},z)$ if and only if $(\widetilde{G},z) = G\times\mathbb{Z}/2$, or equivalently $\tau$ is trivial. It follows that there exists a braided monoidal functor $\tSVect\rightarrow \cZ(\tSVect^\varpi_G)$ if and only if $\tau$ is trivial, and thereby explains why there is no twist in the statement of Theorem \ref{thm:fermionicTQFT} above. To motivate this physically, observe that the data of the braided functor $\SVect\rightarrow \mathbf{Rep}(\widetilde{G},z)$ corresponds to selecting the local fermion.
\end{rem}

The previous remark  allows us to point out that the proof of \cite[Corollary V.4]{JF} is more precise:\ It shows that every nondegenerate $\mathbf{2SVect}$-central braided fusion 2-category arises as the centralizer, in the sense of \cite[Definition 4.1.1]{xu2024higher}, of a braided monoidal 2-functor
$$\tSVect\rightarrow \cZ(\tSVect^\varpi_G)\,$$
for some $\varpi \in \SH^4(\rB G)$.

\subsection{Crossed Braided Fusion 2-Categories}\label{subsection:extensions}

As emphasized in \S\ref{subsub:classifyingSET}, group-graded (higher) fusion categories play a central role in the study of symmetry-enriched topological orders. Accordingly, we briefly review the relevant mathematical framework -- extension theory -- describing the data necessary for constructing such graded (higher) fusion categories.

We start by reviewing the construction in (2+1)d, where coupling a (2+1)d topological order given by a nondegenerate braided fusion 1-category $\cB$ to a 0-form $G$-symmetry amounts to introducing topological $g$-defects implementing the $G$-symmetry. Transporting a quasiparticles given by an object in $\cB$ around one of these $g$-defects implements an action of $g \in G$, potentially permuting the topological charge of the quasiparticle.
Mathematically, this amounts to building a $G$-crossed braided fusion 1-category $\mathcal{A} = \bigoplus_{g\in G} \mathcal{A}_g$, whose trivially graded component is $\mathcal{A}_e=\mathcal{B}$. We also say that $\mathcal{A}$ is a $G$-crossed braided extension of $\mathcal{B}$. Physically speaking, the sectors labeled $\cA_g$ describe topologically distinct $g$ defects, which interact in a complicated fashion described, for instance, in \cite[Definition 8.24.1]{etingof2016tensor}.
Most of our results will be concerned with \textit{faithfully graded} extensions, meaning that the subcategory $\cA_g$ is nonzero for all $g\in G$.
Provided that the grading is faithful, it is a well-known result of \cite{ENO2} that $G$-crossed braided extensions of $\mathcal{B}$ are classified by homotopy classes of maps $\rB G\rightarrow \rB \Pic(\mathcal{B})$, from the classifying space of $G$ to the delooping of the space $\Pic(\mathcal{B})$ of invertible $\mathcal{B}$-module 1-categories. A different perspective on this classification was given in \cite[Section 8.2]{davydov2021braided}.

Before moving on to (3+1)d, it is useful to recast the above definition of a $G$-crossed braided fusion 1-category. In order to do so, we employ the notion, studied in \cite{D7}, of a \textit{rigid algebra} in a fusion 2-category. A rigid algebra in fusion 2-category is an algebra whose multiplication map has a right adjoint as a map of bimodules. This notion internalizes that of a multifusion 1-category. Namely, rigid algebras in $\mathbf{2Vect}$ are precisely multifusion 1-category. On the other hand, fusion 1-categories are recovered by considering \textit{strongly connected} rigid algebras in $\mathbf{2Vect}$, that is, rigid algebras for which the unit 1-morphism is the inclusion of a direct summand. This notion was introduced in \cite{JFR}. Slightly more generally, strongly connected rigid algebras in $\mathbf{2Vect}_G$ are (not necessarily faithfully) $G$-graded fusion 1-categories \cite[Example 5.2.4]{D4}. More directly relevant to our present purposes are strongly connected rigid algebras in $\mathcal{Z}(\mathbf{2Vect}_G)$, which recover the $G$-crossed fusion 1-categories introduced in \cite{Tur}. Finally, taking the braiding into account, we have that (not necessarily faithfully graded) $G$-crossed braided fusion 1-categories are precisely strongly connected braided rigid algebras in $\mathcal{Z}(\mathbf{2Vect}_G)$. The obvious advantage of this definition is that it can be made regardless of the ambient dimension.

In (3+1)d, topological orders with $G$-symmetry are described by $G$-crossed braided fusion 2-categories. The above discussion naturally leads to the following definition.

\begin{defn}\label{def:GcrossedBF2C}
A (not necessarily faithfully graded) $G$-crossed braided fusion 2-category is a strongly connected braided rigid algebra in $\cZ(\mathbf{3Vect}_G)$.
\end{defn}

Inspired by the classification of faithfully graded $G$-crossed braided fusion 1-categories obtained in \cite{ENO2}, we exhibit a homotopy theoretic classification of faithfully graded $G$-crossed braided fusion 2-categories. In order to do so, we consider the Picard space $\sPic(\fB)$ of a braided fusion 2-category $\mathfrak{B}$. By definition, the space $$\sPic(\fB):=\mathbf{Mod}(\mathfrak{B})^{\times}$$ consists of invertible $\fB$-module 2-categories and invertible $\fB$-module morphisms. In particular, it is a group-like topological monoid under the relative tensor product, and may therefore be delooped.

\begin{thm}\label{thm:crossedextn}
Let $\fB$ be a braided fusion 2-category. Faithfully graded $G$-crossed braided extensions of $\fB$ are classified by homotopy classes of maps $\rB G \rightarrow \rB \sPic(\fB)$.
\end{thm}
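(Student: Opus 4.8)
The plan is to categorify the homotopy-theoretic classification of faithfully graded $G$-crossed braided extensions of braided fusion $1$-categories from \cite{ENO2}, carried out ``one categorical level up''. The starting observation is that, since $\fB$ is braided, a $\fB$-module $2$-category is canonically a $\fB$-bimodule $2$-category, so the relative tensor product $\boxtimes_\fB$ equips $\mathbf{Mod}(\fB)$ with a braided monoidal structure; restricting to invertible objects and invertible module morphisms gives a grouplike braided monoidal $2$-groupoid $\sPic(\fB)=\mathbf{Mod}(\fB)^\times$, whose single delooping $\rB\sPic(\fB)$ is the space appearing in the statement. Using \Cref{def:GcrossedBF2C}, a faithfully graded $G$-crossed braided extension of $\fB$ is a strongly connected braided rigid algebra $\fA$ in $\cZ(\mathbf{3Vect}_G)$ whose image under the forgetful functor $\cZ(\mathbf{3Vect}_G)\to\mathbf{3Vect}_G$ is a faithfully $G$-graded fusion $2$-category $\bigoplus_{g\in G}\fA_g$ with $\fA_e\simeq\fB$, the half-braiding on $\fA$ serving as the $G$-crossed braiding. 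I would construct mutually inverse assignments between the $\infty$-groupoid of such algebras and $\Map(\rB G,\rB\sPic(\fB))$; passing to $\pi_0$ then yields the claimed classification.

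First I would treat the direction from an extension to a map. Given $\fA$ as above, restriction along the summand inclusion $\fB=\fA_e\hookrightarrow\fA$ makes each graded piece $\fA_g$ a $\fB$-module $2$-category, and the multiplication of $\fA$ induces $\fB$-balanced functors $\fA_g\boxtimes_\fB\fA_h\to\fA_{gh}$. The crucial input is the $2$-categorical analogue of the invertibility lemma of \cite{ENO2}: faithfulness of the grading should force each $\fA_g$ to be an \emph{invertible} $\fB$-module $2$-category with the structure functors above being equivalences. To prove this I would use that $\fA$ is rigid, so its multiplication $m\colon\fA\otimes\fA\to\fA$ admits a right adjoint as a bimodule map (the defining feature of rigid algebras from \cite{D7}); restricting $m$ and its adjoint to the $(g,g^{-1})$-components and using strong connectedness to identify $(\fA\otimes\fA)_e$ with $\bigoplus_h\fA_h\boxtimes_\fB\fA_{h^{-1}}$, one extracts evaluation and coevaluation data exhibiting $\fA_{g^{-1}}$ as a two-sided inverse of $\fA_g$, which semisimplicity and finiteness of $\fB$ (guaranteeing good behaviour of $\boxtimes_\fB$) promote to an equivalence. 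Once invertibility is established, the associator, the $G$-crossed braiding of $\fA$, and their coherence isomorphisms (the hexagonators and pentagonator recorded in \cite{Decoppet:2022dnz,Cr}) assemble the assignment $g\mapsto\fA_g$ into a monoidal functor from the discrete monoidal $2$-groupoid $G$ into the braided $2$-groupoid $\sPic(\fB)$; rectifying this coherence datum — routine because $G$ is discrete — produces a based map $\rB G\to\rB\sPic(\fB)$.

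For the converse I would unwind a based map $\phi\colon\rB G\to\rB\sPic(\fB)$ by looping it to a monoidal functor $\Omega\phi\colon G\to\sPic(\fB)$, i.e.\ a coherent family of invertible $\fB$-module $2$-categories $\fA_g$ with fusion equivalences $\fA_g\boxtimes_\fB\fA_h\simeq\fA_{gh}$, $\fA_e\simeq\fB$, together with the braided constraints inherited from the braiding on $\sPic(\fB)$ induced by that of $\fB$. Setting $\fA:=\bigoplus_g\fA_g$ with algebra structure built from these equivalences, I would check that rigidity of $\fA$ follows from invertibility of the $\fA_g$ together with semisimplicity; that strong connectedness holds because the unit $1$-morphism is the summand inclusion $\fB\hookrightarrow\fA$, as in \cite{JFR}; that the lift to $\cZ(\mathbf{3Vect}_G)$ is exactly the coherent crossed braiding, i.e.\ a half-braiding; and that the grading is automatically faithful since each $\fA_g$ is invertible, hence nonzero. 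The two constructions are by design mutually inverse up to equivalence, and the remaining task is to verify that the inverting equivalences are natural, so that one obtains an equivalence of $\infty$-groupoids and in particular a bijection on equivalence classes; this is coherence bookkeeping of the same flavour as in \cite{ENO2,davydov2021braided}. (Alternatively, once $\sPic$ is set up with all of its structure, one could attempt to invoke a general abstract extension principle, but I would favour the explicit module-theoretic argument.)

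The main obstacle I anticipate is the invertibility lemma for the graded components: unlike in the $1$-categorical case, ``invertible $\fB$-module $2$-category'' is a subtle condition — one must rule out the weaker possibility that $\fA_g$ merely admits a dual — and controlling it requires the theory of the relative tensor product of module $2$-categories over a braided fusion $2$-category together with the adjoint/rigidity calculus of \cite{D7}. The secondary difficulty is purely organizational: turning the tower of hexagonators, pentagonator, and crossed-braiding constraints into an honest map of spaces and back, and checking naturality; this is precisely where the hypothesis of \emph{faithful} grading is indispensable, exactly as in \cite{ENO2}.
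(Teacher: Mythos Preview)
Your approach is correct in outline but takes a genuinely different route from the paper. You propose a direct categorification of the ENO argument: prove an invertibility lemma for the graded components $\fA_g$, then package the fusion equivalences, associator, and $G$-crossed braiding into a monoidal functor $G\to\sPic(\fB)$, and build an explicit inverse. The paper instead bootstraps from the already-established classification of faithfully $G$-graded extensions of a fusion 2-category via $\sBrPic$ \cite[Theorem 3.11]{D11}, and then analyses what \emph{additional} data promotes a $G$-graded extension to a $G$-crossed one and finally to a $G$-crossed \emph{braided} one. Concretely, the paper identifies the extra data as a $G$-action $\rB G\to\rB\sAut^{\mathrm{br}}(\fB)$ together with a homotopy filling a square over $\rB\sAut^{\mathrm{br}}(\cZ(\fB))$, and concludes by invoking the pullback square
\[
\begin{tikzcd}[sep=small]
\rB\sPic(\fB)\arrow[r]\arrow[d] & \rB\sAut^{\mathrm{br}}(\fB)\arrow[d]\\
\rB\sBrPic(\fB)\arrow[r] & \rB\sAut^{\mathrm{br}}(\cZ(\fB)),
\end{tikzcd}
\]
which is the 2-categorical analogue of the square in \cite{davydov2021braided}. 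This buys the paper a short proof: the invertibility lemma is absorbed into \cite{D11}, and the role of the crossed braiding is cleanly isolated as the filler of a square rather than unpacked into module-theoretic coherence. Your approach, by contrast, is more self-contained and makes the correspondence explicit at the level of objects, but at the cost of having to establish the invertibility lemma and the coherence bookkeeping from scratch---exactly the two obstacles you flag. One point you leave slightly implicit and that the paper's decomposition makes transparent: the $G$-action on $\fB$ is not separate input in your picture but must be recovered from the composite $G\to\sPic(\fB)\to\sAut^{\mathrm{br}}(\fB)$, and checking that this recovered action matches the one encoded in the crossed braiding is precisely the content of the pullback square.
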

\begin{proof}
Recall from \cite[Theorem 3.11]{D11} that, given a fusion 2-category $\mathfrak{C}$, faithfully $G$-graded extensions $\mathfrak{D}=\boxplus_{g\in G}\mathfrak{D}_g$ of $\mathfrak{C}=\mathfrak{D}_e$ are classified by homotopy classes of maps $\rB G\rightarrow \rB \sBrPic(\mathfrak{C})$ into the delooping of the space $\sBrPic(\mathfrak{C})$ of invertible $\mathfrak{C}$-$\mathfrak{C}$-bimodule 2-categories.

Let now $\mathfrak{A}$ be a faithfully graded $G$-crossed braided fusion 2-category.
Thanks to the forget monoidal 3-functor $\mathcal{Z}(\mathbf{3Vect}_G)\rightarrow\mathbf{3Vect}_G$, we see that $\mathfrak{A}=\boxplus_{g\in G}\mathfrak{A}_g$ is in particular a $G$-graded fusion 2-category.
We shall write $\mathfrak{B} = \mathfrak{A}_e$ for the braided fusion 2-category corresponding to the trivially graded component of $\mathfrak{A}$.
By assumption, the underlying fusion 2-category of $\mathfrak{A}$ is a faithfully $G$-graded extension of $\mathfrak{B}$, so that it corresponds to a map $\rB G\rightarrow \rB \sBrPic(\mathfrak{B})$.
But, as $\mathfrak{A}$ is a braided algebra in $\cZ(\mathbf{3Vect}_G)$, the map $\rB G\rightarrow \rB \sBrPic(\mathfrak{B})$ must factor through $\rB G\rightarrow \rB \sPic(\mathfrak{B})$.
Namely, the crossed braiding on $\mathfrak{A}$ supplies a natural 2-equivalence
$$c_{A,B}:A\otimes B \xrightarrow{\simeq} B\otimes A$$
for $A$ in $\mathfrak{A}$ and $B$ in $\mathfrak{B}=\mathfrak{A}_e$.
As $c_{A,B}$ is monoidal in $A$, it follows that each graded component $\mathfrak{A}_g$, which is a priori an invertible $\mathfrak{B}$-$\mathfrak{B}$-bimodule 2-category, is actually an invertible $\mathfrak{B}$-module 2-category. Moreover, as $c_{A,B}$ is monoidal in $B$, we indeed have the claimed lift $\rB G\rightarrow \rB \sPic(\mathfrak{B})$.

Let us now fix a braided fusion 2-category $\mathfrak{B}$.
We explain how to construct a faithfully graded $G$-crossed braided fusion 2-category from a map $\rB G\rightarrow \rB \sPic(\mathfrak{B})$.
Firstly, by considering the composite $\rB G\rightarrow \rB \sPic(\mathfrak{B})\rightarrow \rB \sBrPic(\mathfrak{B})$, we obtain a faithfully $G$-graded extension $\mathfrak{A}=\boxplus_{g\in G}\mathfrak{A}_g$ of $\mathfrak{B}$.
Let $g,h\in G$. Given that $\mathfrak{A}_g$ is an invertible $\mathfrak{B}$-module 2-category, we have equivalences of $\mathfrak{B}$-module 2-categories:
$$
\begin{tabular}{c c c}
$\mathfrak{A}_h$ & $\xrightarrow{\simeq}$ & $\mathbf{Fun}_{\mathfrak{B}}(\mathfrak{A}_g,\mathfrak{A}_{gh})\,,$\\
$A$ & $\mapsto$ & $(-)\otimes A$
\end{tabular}
\quad
\begin{tabular}{c c c}
$\mathfrak{A}_{ghg^{-1}}$ & $\xrightarrow{\simeq}$ & $\mathbf{Fun}_{\mathfrak{B}}(\mathfrak{A}_g,\mathfrak{A}_{gh})\,.$\\
$A$ & $\mapsto$ & $A\otimes (-)$
\end{tabular}
$$
Combining these equivalences together yields a $G$-action on $\mathfrak{A}$ that acts by conjugation on the grading.
Moreover, by construction, there is a coherent natural 2-equivalence
$$c_{A,B}:A\otimes B \xrightarrow{\simeq} g(B)\otimes A$$
for every $A$ in $\mathfrak{A}_g$ and $B$ in $\mathfrak{A}$. This provides $\mathfrak{A}$ with a $G$-crossed braiding. In summary, we have shown that $\mathfrak{A}$ is a faithfully graded $G$-crossed braided extension of $\mathfrak{B}$.

Finally, one checks that these two operations are inverse to each other.
\end{proof}

\subsection{Braided Extensions}\label{subsection:braidedext}

In \cite[Section 4]{davydov2021braided}, a different kind of extension theory tailored to braided fusion 1-categories was developed. More precisely, the authors classify braided extensions of braided fusion 1-categories by way of the braided Picard space. Our classification of fermionic braided fusion 2-categories in \S\ref{subsection:FBF2C} will employ a 2-categorical version of this construction, which utilizes the braided Picard space
\begin{equation}\label{eq:picbrequiv}
    \sPic^{\mathrm{br}}(\fB) := \cZ(\Mod (\fB))^\times. 
\end{equation}
This is a braided group-like topological monoid, so that it can be delooped twice.

Our next result is an analogue of \cite[Theorem 3.11]{D11} for braided extensions.

\begin{prop}\label{prop:braidedextn}
    Let $\fB$ be a braided fusion 2-category, and $E$ a finite abelian group. Faithfully $E$-graded braided extensions of $\mathfrak{B}$ are classified by homotopy classes of maps $\rB^2 E \rightarrow \rB^2 \sPic^{\mathrm{br}}(\mathfrak{B})$.
\end{prop}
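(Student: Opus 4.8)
The plan is to prove Proposition~\ref{prop:braidedextn} as the $2$-categorical counterpart of the braided extension theory of \cite[Section~4]{davydov2021braided}, in the same way that Theorem~\ref{thm:crossedextn} is the $2$-categorical counterpart of \cite{ENO2}: the argument runs parallel to \cite[Theorem~3.11]{D11}, but one categorical level higher and with the Drinfeld center inserted to account for the braiding. First I would record the structural facts behind the statement. Since $\fB$ is braided, i.e.\ an $E_2$-algebra, the $3$-category $\Mod(\fB)$ of $\fB$-module $2$-categories is monoidal under $\boxtimes_{\fB}$, so $\cZ(\Mod(\fB))$ is a braided monoidal $3$-category and $\sPic^{\mathrm{br}}(\fB)=\cZ(\Mod(\fB))^\times$ is a grouplike $E_2$-space, whence $\rB^2\sPic^{\mathrm{br}}(\fB)$ is defined (this was already observed just before the statement). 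I would also fix the forgetful maps of grouplike monoids $\sPic^{\mathrm{br}}(\fB)\to\sPic(\fB):=\Mod(\fB)^\times$ and $\sPic(\fB)\to\sBrPic(\fB)$, the latter sending a $\fB$-module $2$-category to the $\fB$-$\fB$-bimodule $2$-category obtained via the braiding; these are what let one recover the underlying graded extension of a braided one.

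Next comes the content. Given a faithfully $E$-graded braided extension $\mathfrak{D}=\boxplus_{e\in E}\mathfrak{D}_e$ with $\mathfrak{D}_0\simeq\fB$, faithfulness forces each $\mathfrak{D}_e$ to be an invertible $\fB$-module $2$-category, and the restriction of the tensor product of $\mathfrak{D}$ to the graded components, together with its associators, is precisely a monoidal functor from the discrete monoidal groupoid $E$ into $\sPic(\fB)$; by itself this is the \cite[Theorem~3.11]{D11}/\cite{ENO2} datum recording $\mathfrak{D}$ as an $E$-graded fusion $2$-category. The braiding of $\mathfrak{D}$ contributes two further pieces. Restricting $b_{-,-}$ to $\fB\otimes\mathfrak{D}_e$ and $\mathfrak{D}_e\otimes\fB$, together with the hexagonators, endows the module $2$-category $\mathfrak{D}_e$ with a half-braiding against all of $\Mod(\fB)$, i.e.\ lifts it canonically to an object of $\cZ(\Mod(\fB))$; and the braiding between the off-identity components $\mathfrak{D}_e$ and $\mathfrak{D}_{e'}$, subject to the hexagonator coherences, promotes the monoidal functor $E\to\sPic(\fB)$ to a \emph{braided} monoidal functor $E\to\sPic^{\mathrm{br}}(\fB)$. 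Since $E$ is abelian it is symmetric monoidal as a discrete groupoid, so a braided (equivalently $E_2$-)monoidal functor from $E$ into the $E_2$-space $\sPic^{\mathrm{br}}(\fB)$ is the same datum as a map of $E_2$-spaces, that is, a pointed map $\rB^2 E\to\rB^2\sPic^{\mathrm{br}}(\fB)$. Conversely, such a pointed map loops to $\rB E\to\rB\sPic^{\mathrm{br}}(\fB)$; postcomposing with $\sPic^{\mathrm{br}}(\fB)\to\sPic(\fB)\to\sBrPic(\fB)$ and invoking \cite[Theorem~3.11]{D11} reconstructs the underlying $E$-graded fusion $2$-category $\mathfrak{D}$, the lift through $\sPic^{\mathrm{br}}$ supplies the half-braidings making each $\mathfrak{D}_e$ central, and the extra delooping degree supplies the braiding of $\mathfrak{D}$ together with its hexagonators. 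One then checks that these two constructions are mutually quasi-inverse, exactly as in \cite[Section~4]{davydov2021braided}; the resulting equivalence is compatible, upon looping, with the $1$-categorical classification of \cite{davydov2021braided}.

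The hard part will be the coherence bookkeeping in the step just sketched: carefully matching the full package of braiding data on the graded $2$-category $\mathfrak{D}$ -- the braiding $b$, the two families of hexagonators $R$ and $S$, and the modifications relating them -- with the single assertion that the grading map is an $E_2$-map into $\cZ(\Mod(\fB))^\times$, so that precisely one layer becomes ``centrality of each $\mathfrak{D}_e$'' and the remainder becomes the one additional delooping. Making this precise amounts to running the obstruction-theoretic argument of \cite[Section~4]{davydov2021braided} one categorical level up, and it relies on the (by now standard, but technically delicate) facts that $\Mod(\fB)$ is a well-behaved monoidal $3$-category, that its Drinfeld center is braided monoidal, and that the forgetful comparison maps among $\sPic^{\mathrm{br}}(\fB)$, $\sPic(\fB)$ and $\sBrPic(\fB)$ behave as expected; as throughout the paper, we work in the range of categorical degrees where these constructions are under control.
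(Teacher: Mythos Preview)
Your proposal is correct in outline and would yield the result, but it takes a different route from the paper's proof. You work ``by hand'': you read off from a braided extension $\mathfrak{D}=\boxplus_e\mathfrak{D}_e$ the invertible $\fB$-module $2$-categories $\mathfrak{D}_e$, then observe that the braiding of $\mathfrak{D}$ restricted to $\fB\otimes\mathfrak{D}_e$ supplies the half-braiding lifting $\mathfrak{D}_e$ to $\cZ(\Mod(\fB))$, and finally that the remaining braidings and hexagonators assemble the assignment $e\mapsto\mathfrak{D}_e$ into a braided monoidal (hence $E_2$-)functor $E\to\sPic^{\mathrm{br}}(\fB)$. The paper instead works one abstraction level higher and bypasses all of your coherence bookkeeping: it first records that braided rigid algebras in $\cZ(\Mod(\fB))$ are exactly braided multifusion $2$-categories receiving a braided functor from $\fB$, and that an $E$-graded such algebra is by definition a lax braided monoidal functor $E\to\cZ(\Mod(\fB))$. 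The forward direction then becomes transport of the canonical algebra $\mathbf{2Vect}(E)\in\mathbf{3Vect}_E$ along the linearized map $\mathbf{3Vect}_E\to\cZ(\Mod(\fB))$. For the converse, rather than looping and invoking \cite[Theorem~3.11]{D11} to rebuild $\mathfrak{D}$, the paper shows directly that the lax functor $E\to\cZ(\Mod(\fB))$ is strong and lands in invertibles, by the single observation that the forgetful composite $\cZ(\Mod(\fB))\to\Mod(\fB)\to\mathbf{Bimod}(\fB)$ reflects invertibility of objects and morphisms (so one can appeal to \cite[Theorem~3.11]{D11} once, for that reflection, rather than rerunning its argument). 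Your approach has the virtue of being a transparent categorification of \cite[Section~4]{davydov2021braided} and making the role of the hexagonators explicit; the paper's approach has the virtue of reducing the entire coherence burden you flag as ``the hard part'' to two formal identifications and one reflection-of-invertibility statement.
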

\begin{proof}
We begin by making two formal observations. Firstly, it follows from the definitions that braided rigid algebras in $\cZ(\Mod (\fB))$ are precisely braided multifusion 2-categories $\mathfrak{A}$ equipped with a braided monoidal functor $\mathfrak{B}\rightarrow \mathfrak{A}$. Secondly, an $E$-graded braided algebra in $\cZ(\Mod (\fB))$, or, more generally, any braided monoidal higher category, is a lax braided monoidal functor $E\rightarrow \cZ(\Mod (\fB))$.

Given a map of spaces $\rB^2 E \rightarrow \rB^2 \sPic^{\mathrm{br}}(\mathfrak{B})$. We can equivalently consider the braided monoidal 2-functor $E\rightarrow \sPic^{\mathrm{br}}(\mathfrak{B})$. In particular, we get a braided monoidal 2-functor $E \rightarrow \cZ(\Mod (\fB))$. After linearizing and Cauchy completing the source, we therefore obtain a braided monoidal 2-functor $F:\mathbf{3Vect}_E \rightarrow \cZ(\Mod (\fB))$. But there is a canonical braided rigid algebra $\mathbf{2Vect}(E)$ in $\mathbf{3Vect}_E$. Its image under $F$ is therefore an $E$-graded braided rigid algebra in $\cZ(\Mod (\fB))$. This is the desired faithfully $E$-graded braided extension of $\mathfrak{B}$.

Conversely, given any $E$-graded braided extension $\mathfrak{A}$ of $\mathfrak{B}$, we obtain a lax braided monoidal functor $E \rightarrow \cZ(\Mod (\fB))$. We claim that this functor must factor through $\sPic^{\mathrm{br}}(\mathfrak{B})$ and is in fact (strong) braided monoidal. Both of these follow from \cite[Theorem 3.11]{D11} using the fact that the composite monoidal 2-functor $$\cZ(\Mod (\fB))\rightarrow \Mod (\fB)\rightarrow \mathbf{Bimod}(\fB)$$ reflects invertibility of objects as well as invertibility of morphisms.
\end{proof}

\subsection{The Wang-Wen-Witten Construction}\label{subsection:WWW}
To put the classification data for (3+1)d $G$-SETs in context, we now review the symmetry extension construction of \cite{Wang:2017loc}. This is a procedure used to construct a $d$-dimensional bosonic $G$-symmetric TQFT that saturates a particular value of the anomaly in $\rH^{d+1}(\rB G;\mathbb C^\times)$. The theory is built from the following set of data:

\begin{ansatz}[Wang-Wen-Witten]\label{ansatz:WWW}
    An $n$-dimensional topological theory with $G$-symmetry and anomaly valued in $\rH^{n+1}(\rB G; \mathbb{C}^\times)$ can be realized by a $K$-gauge theory, using the following set of data:
    \begin{itemize}
          \item A short exact sequence $1 \rightarrow K \rightarrow H \rightarrow G\rightarrow 1$, where the normal subgroup $K$ is abelian.
          \item A class $\rH^n(\rB H; \mathbb{C}^\times)$ parametrizing an $H$-invertible TQFT on the boundary that realizes the $G$-symmetry.
          \item A class in $\rH^{n}(\rB K; \mathbb{C}^\times)$ giving the Dijkgraaf-Witten action for the $K$-gauge theory with $G$-symmetry, that realizes the anomaly in the bulk. 
    \end{itemize}
\end{ansatz}
Presented in this way, the data in the ansatz has a natural generalization to the fermionic case, which we presented in \Cref{ansatz:fernionicTQFT}.
In the next proposition, we give the conditions that the class in $\rH^{n+1}(\rB G;\mathbb C^\times)$  must satisfy in order to actually construct the $K$-gauge theory.

\begin{prop}[\cite{Wang:2017loc}]
Let $G$ be a finite group, and $\omega \in \rH^{d+1}(\rB G; \mathbb C^\times)$. Let also
\begin{equation}
    1 \longrightarrow K \longrightarrow H \xlongrightarrow{p} G \longrightarrow 1
\end{equation}
be the central extension of $G$ by an abelian group $K$ parametrized by the class $e\in \rH^2(\rB G; K)$. Suppose that $\omega = e \cup z$ for $z \in \rH^{d-1}(\rB G; \widehat{K})$, and $p^* \omega =[0] \in \rH^{d+1}(\rB H;\mathbb C^\times)$. Then one can construct a $K$-gauge theory, with Dijkgraaf-Witten action given by a class in $\rH^d(\rB K; \mathbb{C}^\times)$, and $G$-symmetry that realizes the $\omega$ anomaly. 
\end{prop}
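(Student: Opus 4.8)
The plan is to reproduce the symmetry-extension construction of Wang--Wen--Witten at the level of group cochains, and then to pin down the anomaly of the resulting theory by an anomaly-inflow argument driven by the single identity $\delta\mu=p^*\omega$.

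First I would observe that the hypotheses are, if anything, stronger than what is needed to produce a boundary term. Pulling the extension $1\to K\to H\xrightarrow{p} G\to 1$ back along $p$ yields $1\to K\to H\times_G H\to H\to 1$, which splits via the diagonal section $h\mapsto(h,h)$; hence $p^*e=0$ in $\rH^2(\rB H;K)$, and this already forces $p^*\omega=p^*(e\cup z)=(p^*e)\cup(p^*z)=0$. Concretely, picking a $1$-cochain $\beta\in \mathrm{C}^1(\rB H;K)$ with $\delta\beta=p^*e$ and a cocycle representative $z\in\mathrm{Z}^{d-1}(\rB G;\widehat K)$, and using the tautological pairing $K\otimes\widehat K\to\mathbb C^\times$, one gets an explicit primitive
\begin{equation}
\mu\;:=\;\beta\cup p^*z\ \in\ \mathrm{C}^d(\rB H;\mathbb C^\times),\qquad \delta\mu\;=\;(p^*e)\cup(p^*z)\;=\;p^*\omega .
\end{equation}
More intrinsically, \emph{any} cochain $\mu$ with $\delta\mu=p^*\omega$ will serve, and two choices differ by a cocycle; so the construction depends, beyond the extension, on a class in $\rH^d(\rB H;\mathbb C^\times)$, precisely the extra input of \Cref{ansatz:WWW}.

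Next I would build the $d$-dimensional $G$-symmetric theory $T$ as the $K$-gauging of the trivial $H$-symmetric theory weighted by $\mu$: on a closed $d$-manifold $X$ carrying a background $G$-bundle $A\colon X\to\rB G$ that lifts along $p$, set
\begin{equation}
Z_T(X,A)\;=\;\frac{1}{|\rH^0(X;K)|}\sum_{\substack{\widetilde A\colon X\to\rB H\\ p\circ\widetilde A\,\simeq\, A}}\mu(\widetilde A)\,,
\end{equation}
the sum ranging over homotopy classes of lifts, which form a torsor under $\rH^1(X;K)$; this is visibly a $K$-gauge (Dijkgraaf--Witten) theory carrying the residual $G=H/K$ symmetry, with a non-liftable background $A$ (i.e.\ $A^*e\neq 0$) accommodated by treating $T$ as a theory relative to the bulk $\omega$-theory. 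To identify the $G$-anomaly of $T$, extend $A$ over a bounding $(d{+}1)$-manifold $Y$ and compare $\exp\!\big(2\pi i\!\int_Y\omega(A)\big)\cdot Z_T(X,A)$ for two extensions: the bulk factor changes by $\exp\!\big(2\pi i\!\int_X(p^*\omega)(\widetilde A)\big)=\exp\!\big(2\pi i\!\int_X(\delta\mu)(\widetilde A)\big)$, which is cancelled by the corresponding change in $\mu(\widetilde A)$; thus the bulk--boundary assignment is well defined, and the 't~Hooft anomaly of $T$ is represented by $\omega$. Finally, putting $A$ trivial, the contributing lifts are exactly those factoring through the fibre $\rB K\hookrightarrow\rB H$ of $p$, so $Z_T(X,\mathrm{triv})=\tfrac{1}{|\rH^0(X;K)|}\sum_{b\colon X\to\rB K}(\iota^*\mu)(b)$ is the Dijkgraaf--Witten sum for $K$, with action a cocycle $\nu$ restricting $\mu$ (a cocycle because $\omega$ restricts to zero on the fibre) whose class $[\nu]\in\rH^d(\rB K;\mathbb C^\times)$ is the advertised Dijkgraaf--Witten data.

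The hard part will be turning the schematic ``inflow cancels'' step into a rigorous statement. One must (i) upgrade the partition-function formula to an honest (at least once-)extended $G$-equivariant TQFT, with the correct normalisations on manifolds with boundary and corners and a coherent treatment of non-liftable backgrounds, and (ii) verify that the resulting relative theory represents $[\omega]$ on the nose rather than $[\omega]$ shifted by coboundaries coming from the choices of cocycle representatives, of the primitive $\mu$, and of the null-homotopy of $p\circ\iota$. Once this bookkeeping is organised, the identification of the anomaly with $[\omega]$ is forced by $\delta\mu=p^*\omega$, and the Dijkgraaf--Witten description follows by restriction to the fibre as above.
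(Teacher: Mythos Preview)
Your proposal is correct and follows essentially the same route as the paper's proof sketch: both find a primitive $\mu$ (the paper's $\pi$) with $\delta\mu=p^*\omega$, explicitly as $\beta\cup p^*z$ (the paper's $-a\cup z$), and use it to weight a sum over $H$-lifts of the $G$-background to define the boundary partition function. The only notable differences are presentational: you add the observation that $p^*\omega=0$ is automatic once $\omega=e\cup z$, and you spell out the inflow cancellation and the restriction to the fibre $\rB K$ more carefully, while the paper instead carries the construction one step further by introducing a Lagrange multiplier $b$ to rewrite the partition function in the BF-type form of its final displayed equation.
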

\begin{proofsketch}
Consider a bulk $(d+1)$-dimensional manifold $N$ with $\partial N = M$, and $g: N \rightarrow \rB G$ which lifts to $h: M\rightarrow \rB H$ on the boundary.
If  $p^* \omega$ is trivial in $\rH^{d+1}(\rB H;\mathbb C^\times)$, then $p^* \omega = d \pi$ for some cochain $\pi \in \mathrm C^d(\rB H;\mathbb C^\times)$. The boundary partition function is 
\begin{equation}\label{eq:boundaryaction}
   Z \sim \sum_{h\in [M,\rB H]}  \exp\left(-2 \pi i \int_M h^*\pi \right),
\end{equation}
and provides a trivial gapped boundary theory with symmetry $H$ and anomaly $\omega$. In particular, we see that $\rH^{d}(\rB H; \mathbb{C}^\times)$ parametrizes the symmetry preserving gapped boundaries. We abuse notation and let $\omega\in \mathrm{Z}^{d+1}(\rB G; \mathbb{C}^\times) $ also denote a cocycle representative of the anomaly. We choose $K$ so that $\omega = e \cup z$ for $e \in \mathrm{Z}^{2}(\rB G ; K)$ and $z \in \mathrm{Z}^{d-1}(\rB G ; \widehat{K})$. Such a finite abelian group $K$ always exists thanks to \cite{Tachikawa:2017gyf}. We take $a \in \mathrm{C}^1(\rB G;K)$ such that $d a = e$, and hence $\pi = p^*(- a \cup z)$. 
By substituting $- a \cup z$ into the exponent in \Cref{eq:boundaryaction} we get 
\begin{equation}
    \exp\left( 2\pi i \int_M h^* a \cup h^* z \right), 
\end{equation}
where we have implicitly identified $a$ with its pullback to a cocycle on $\rB H$.
By restricting $h^*$ to the part that pulls back cocycles valued in $\rB G$, and treating $z$ as a cocycle over $\rB G$, we get that $h^* z = g^*z$. Letting $a' = h^*a  \in \mathrm{C}^1(M; K)$ so that $d a' = g^* e$, and adding in the boundary term $d (a' \cup b)$ into the exponential, the full partition function can be written as 
\begin{equation}\label{eq:full}
   Z \sim \sum_{\substack{a' \in \mathrm{C}^1(M;K) \\ b\in \mathrm{C}^{d-2}(M;\widehat{K})}}  \exp\left(2 \pi i \int_M a' \cup g^*z +g^*e \cup b + a'\cup db \right),
\end{equation}
where the coupling to $G$-symmetry is through the terms $g^*z$ and $g^*e$.
\end{proofsketch}

\section{The Classification of Nondegenerate Braided Fusion 2-Categories}\label{section:BF2C}

We present a classification of nondegenerate braided fusion 2-categories different from that obtained in \cite{JF}.
More precisely, employing a 2-categorical version of the (de-)equivariantization correspondence given in \cite{DGNO}, we reduce the classification of nondegenerate braided fusion 2-categories to that of faithfully graded crossed braided fusion 2-categories.
In fact, the classification reduces to describing crossed braided extensions of braided strongly fusion 2-categories, which we carry out using results from the previous section.

\subsection{(De-)Equivariantization}

Our next result is a 2-categorical version of the (de-)equivariant\-ization correspondence of \cite[Theorem 4.44]{DGNO}, which posits an equivalence between the 2-category of braided fusion 1-categories containing $\mathbf{Rep}(G)$ and the 2-category of $G$-crossed braided fusion 1-categories. As explained in \cite[Remark 3.10]{D9}, this correspondence admits a streamlined proof by unpacking the notion of a strongly connected rigid algebra on both sides of the equivalence of braided fusion 2-categories 
\begin{equation}\label{eq:Z2VGZ2RG}
\mathcal{Z}(\mathbf{2Vect}_G)\simeq\mathcal{Z}(\mathbf{2Rep}(G)).
\end{equation}
We use a categorification of this argument to argue that (de-)equivariantization holds for braided fusion 2-categories.

\begin{thm}{(2-categorical (de-)equivariantization)}\label{thm:deequiv}
There is an equivalence of 3-categories between braided fusion 2-categories containing $\mathbf{2Rep}(G)$ fully faithfully and $G$-crossed braided fusion 2-categories.
\end{thm}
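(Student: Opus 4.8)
The plan is to categorify the streamlined argument for the $1$-categorical $(\text{de-})$equivariantization correspondence recorded in \cite[Remark 3.10]{D9}, which recognizes both sides of \cite[Theorem 4.44]{DGNO} as strongly connected braided rigid algebras inside the two equivalent braided fusion $2$-categories of \eqref{eq:Z2VGZ2RG}. The key input one categorical level up is that the fusion $3$-categories $\mathbf{3Vect}_G = \mathbf{Mod}(\mathbf{2Vect}_G)$ and $\mathbf{3Rep}(G) = \mathbf{Mod}(\mathbf{2Rep}(G))$ are Morita equivalent, which I would obtain by applying $\mathbf{Mod}(-)$ to the Morita equivalence $\mathbf{2Vect}_G \simeq_{\mathrm{Mor}} \mathbf{2Rep}(G)$ witnessed by $\mathbf{2Vect}$; passing to Drinfeld centers then gives a braided equivalence of fusion $3$-categories
\[
    \mathcal{Z}(\mathbf{3Vect}_G) \;\simeq\; \mathcal{Z}(\mathbf{3Rep}(G)).
\]

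First I would invoke Definition \ref{def:GcrossedBF2C}: a $G$-crossed braided fusion $2$-category is, by definition, a strongly connected braided rigid algebra in $\mathcal{Z}(\mathbf{3Vect}_G)$, and the $3$-category of these is exactly the left-hand side of the statement. Since the formation of strongly connected braided rigid algebras inside a braided fusion $3$-category — together with their morphisms and higher morphisms — is functorial, the braided equivalence above promotes to an equivalence between the $3$-category of $G$-crossed braided fusion $2$-categories and the $3$-category of strongly connected braided rigid algebras in $\mathcal{Z}(\mathbf{3Rep}(G))$. It then remains to identify the latter $3$-category with that of braided fusion $2$-categories containing $\mathbf{2Rep}(G)$ fully faithfully.

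This identification is the heart of the proof, and I would carry it out exactly as in the $1$-categorical case, one level up. Using $\mathbf{3Rep}(G) = \mathbf{Mod}(\mathbf{2Rep}(G))$ and the compatibility of the Drinfeld center with $\mathbf{Mod}(-)$, one has $\mathcal{Z}(\mathbf{3Rep}(G)) \simeq \mathbf{Mod}(\mathcal{Z}(\mathbf{2Rep}(G)))$, so that $\Omega\mathcal{Z}(\mathbf{3Rep}(G))$ contains $\Omega\mathbf{3Rep}(G) = \mathbf{2Rep}(G)$ as a braided monoidal sub-$2$-category. A strongly connected braided rigid algebra $A$ in $\mathcal{Z}(\mathbf{3Rep}(G))$ then produces, via its $2$-category of modules, a braided fusion $2$-category $\mathfrak{B}$, and strong connectedness — i.e.\ that the unit $1$-morphism of $A$ is the inclusion of a direct summand — translates precisely into the statement that the canonical braided monoidal $2$-functor $\mathbf{2Rep}(G) \to \mathfrak{B}$ is fully faithful. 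Conversely, a braided fusion $2$-category $\mathfrak{B}$ with a fully faithful braided embedding $\mathbf{2Rep}(G) \hookrightarrow \mathfrak{B}$ is de-equivariantized by forming $\mathfrak{B} \boxtimes_{\mathbf{2Rep}(G)} \mathbf{2Vect}$ — i.e.\ modules in $\mathfrak{B}$ over the image of the regular algebra — recovering such an $A$, with equivariantization providing the inverse. Checking that these assignments are mutually inverse and sufficiently natural to form an equivalence of $3$-categories completes the argument.

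The step I expect to be the main obstacle is precisely this last unpacking. Concretely, one must: (i) verify that $\mathbf{Mod}(-)$ preserves Morita equivalence of fusion $2$-categories, so that the displayed equivalence of centers is legitimate; (ii) pin down $\mathcal{Z}(\mathbf{3Rep}(G))$ with enough precision — in particular the equivalence $\mathcal{Z}(\mathbf{Mod}(\mathfrak{C})) \simeq \mathbf{Mod}(\mathcal{Z}(\mathfrak{C}))$ and the way $\Omega$ of the center exhibits the embedded braided fusion $2$-category $\mathbf{2Rep}(G)$; and (iii) match "strongly connected" against "genuine fully faithful braided sub-$2$-category", making sure that no hidden syllepticity or centrality condition on the embedding is being silently dropped, and that the construction is coherent enough to upgrade a bijection on equivalence classes to a full equivalence of $3$-categories. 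Once these coherences are in hand, the equivalence of $3$-categories follows automatically from the Morita invariance underlying $\mathcal{Z}(\mathbf{3Vect}_G) \simeq \mathcal{Z}(\mathbf{3Rep}(G))$.
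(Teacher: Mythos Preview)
Your overall strategy is exactly the paper's: establish the braided equivalence $\mathcal{Z}(\mathbf{3Vect}_G)\simeq\mathcal{Z}(\mathbf{3Rep}(G))$ from the Morita equivalence $\mathbf{3Vect}_G\simeq_{\mathrm{Mor}}\mathbf{3Rep}(G)$, and then match strongly connected braided rigid algebras on the two sides. The paper's proof differs from yours only in how it performs the identification on the $\mathbf{3Rep}(G)$ side, and there your unpacking takes an unnecessary and slightly off-target detour.

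Specifically, you invoke an equivalence $\mathcal{Z}(\mathbf{Mod}(\mathfrak{C}))\simeq\mathbf{Mod}(\mathcal{Z}(\mathfrak{C}))$ and then say that a strongly connected braided rigid algebra $A$ in $\mathcal{Z}(\mathbf{3Rep}(G))$ ``produces, via its $2$-category of modules, a braided fusion $2$-category $\mathfrak{B}$''. Neither step is needed, and the first is not the correct identity (the relevant one, used elsewhere in the paper, is $\mathcal{Z}(\mathbf{Mod}(\mathfrak{B}))\simeq\mathbf{Mod}^{\loc}(\mathfrak{B})$). The paper instead uses directly that for any braided fusion $2$-category $\mathfrak{B}$, braided rigid algebras in $\mathcal{Z}(\mathbf{Mod}(\mathfrak{B}))$ are precisely braided multifusion $2$-categories $\mathfrak{A}$ equipped with a braided monoidal $2$-functor $\mathfrak{B}\rightarrow\mathfrak{A}$; the algebra $A$ \emph{is} the braided multifusion $2$-category, not something whose modules you take. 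Applying this with $\mathfrak{B}=\mathbf{2Rep}(G)$ and then observing that ``strongly connected'' corresponds to ``fully faithful'' finishes the argument in one stroke, without any intermediate identification of $\Omega\mathcal{Z}(\mathbf{3Rep}(G))$ or explicit (de-)equivariantization formulas. Once you replace your detour by this direct description, your proof coincides with the paper's.
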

\begin{proof}
By definition, we have that $G$-crossed braided multifusion 2-categories are rigid braided algebras in $\mathcal{Z}(\mathbf{3Vect}_G)$. On the other hand, we have $\mathbf{Mod}(\mathbf{2Rep}(G))\simeq \mathbf{3Rep}(G)$ as (symmetric) fusion 3-categories as the right hand-side is connected. Moreover, rigid braided algebras in $\mathcal{Z}(\mathbf{Mod}(\mathbf{2Rep}(G)))$ are braided multifusion 2-categories equipped with a braided 2-functor from $\mathbf{2Rep}(G)$. Now, we have that the fusion 3-categories $\mathbf{3Vect}_G$ and $\mathbf{3Rep}(G)$ are Morita equivalent via $\mathbf{3Vect}$. It follows that there is an equivalence of braided fusion 3-categories
\begin{equation}\label{eq:Z3VGZ3RG}
\mathcal{Z}(\mathbf{3Vect}_G)\simeq\mathcal{Z}(\mathbf{3Rep}(G)).
\end{equation}
This is a consequence of the fact that Morita equivalent fusion higher categories have equivalent Drinfeld centers.
This establishes that the 3-category of $G$-crossed braided multifusion 2-categories is equivalent to the 3-category of braided multifusion 2-categories equipped with a braided 2-functor from $\mathbf{2Rep}(G)$.
Namely, braided rigid algebras are preserved by arbitrary braided monoidal functors.

In order to conclude the proof, it remains to show that this correspondence sends $G$-crossed braided fusion 2-categories to braided fusion 2-categories equipped with a fully faithful braided 2-functor from $\mathbf{2Rep}(G)$. We have recalled above that an algebra in a higher fusion category is called strongly connected if its unit 1-morphism is the inclusion of a summand. It is clear that strongly connected rigid algebras in $\mathcal{Z}(\mathbf{3Vect}_G)$ are $G$-crossed braided fusion 2-categories. On the other hand, strongly connected rigid algebras in $\mathcal{Z}(\mathbf{Mod}(\mathbf{2Rep}(G)))$ are braided fusion 2-categories equipped with a fully faithful braided 2-functor from $\mathbf{2Rep}(G)$. The proof is completed by appealing to Equation \eqref{eq:Z3VGZ3RG}.
\end{proof}

We are specifically interested in nondegenerate braided fusion 2-categories and their behavior under (de-)equivariantization. To this end, we will consider a 2-categorical version of \cite[Proposition 4.56]{DGNO}.

\begin{prop}\label{prop:nondegdeequiv}
There is an equivalence of 3-categories between nondegenerate braided fusion 2-categories containing $\mathbf{2Rep}(G)$ and faithfully graded $G$-crossed braided fusion 2-categories whose trivially graded component is nondegenerate.
\end{prop}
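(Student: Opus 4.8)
The plan is to upgrade the equivalence of Theorem~\ref{thm:deequiv} by tracking what the nondegeneracy conditions on each side correspond to under the Morita equivalence \eqref{eq:Z3VGZ3RG}. Recall that Theorem~\ref{thm:deequiv} identifies $G$-crossed braided fusion 2-categories $\mathfrak{D}=\boxplus_{g\in G}\mathfrak{D}_g$ with braided fusion 2-categories $\mathfrak{B}$ equipped with a fully faithful braided 2-functor $\mathbf{2Rep}(G)\rightarrow\mathfrak{B}$; de-equivariantization sends $\mathfrak{B}$ to the relative tensor product $\mathfrak{B}\boxtimes_{\mathbf{2Rep}(G)}\mathbf{2Vect}$, whose trivially graded component is $\mathfrak{D}_e=\mathfrak{B}\boxtimes_{\mathbf{2Rep}(G)}\mathbf{2Vect}$ computed at the level of the underlying (non-crossed) braided fusion 2-category. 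So what must be shown is the equivalence of two conditions: (i) $\mathfrak{B}$ is nondegenerate, i.e.\ $\cZ_{(2)}(\mathfrak{B})\cong\tVect$; and (ii) the grading on $\mathfrak{D}$ is faithful and $\mathfrak{D}_e$ is nondegenerate, i.e.\ $\cZ_{(2)}(\mathfrak{D}_e)\cong\tVect$.

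First I would set up the comparison of sylleptic centers under de-equivariantization. The key input is that $\cZ_{(2)}(-)$ is compatible with the de-equivariantization/gauging operation in a controlled way: de-equivariantizing by the Tannakian $\mathbf{2Rep}(G)\subseteq\mathfrak{B}$ produces a $G$-crossed theory whose sylleptic center is obtained from $\cZ_{(2)}(\mathfrak{B})$ by the same operation, but now the $\mathbf{2Rep}(G)$ that was sitting \emph{inside} $\cZ_{(2)}(\mathfrak{B})$ (it is automatically central, being Tannakian and lifted from $\mathbf{2Rep}(G)\to\mathfrak{B}$) gets cancelled. Concretely, $\mathbf{2Rep}(G)$ lands in $\cZ_{(2)}(\mathfrak{B})$, and $\cZ_{(2)}(\mathfrak{D})\simeq \cZ_{(2)}(\mathfrak{B})\boxtimes_{\mathbf{2Rep}(G)}\mathbf{2Vect}$ as a $G$-crossed braided fusion 2-category, with trivially graded piece $\cZ_{(2)}(\mathfrak{D}_e)$. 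Thus nondegeneracy of $\mathfrak{B}$ — meaning $\cZ_{(2)}(\mathfrak{B})=\mathbf{2Rep}(G)$ exactly, with no extra central objects beyond the image of $\mathbf{2Rep}(G)$ — is equivalent to $\cZ_{(2)}(\mathfrak{D})\simeq \mathbf{2Vect}_G$, which is equivalent to both saying the grading on $\mathfrak{D}$ is faithful (so $\pi_0\cZ_{(2)}(\mathfrak D)=G$ rather than a quotient) and that its trivially graded component $\cZ_{(2)}(\mathfrak{D}_e)$ is $\tVect$, i.e.\ $\mathfrak{D}_e$ is nondegenerate. I would make this rigorous by working inside the centers of the ambient fusion 3-categories, using \eqref{eq:Z3VGZ3RG} to translate between the two pictures, exactly paralleling how \cite[Proposition 4.56]{DGNO} is proved in the 1-categorical case; the crucial algebraic fact is that $\cZ_{(2)}$ of a braided rigid algebra is computed by the corresponding construction on the algebra, which is bookkeeping within $\cZ(\mathbf{3Vect}_G)\simeq\cZ(\mathbf{3Rep}(G))$.

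For the converse direction I would start from a faithfully graded $G$-crossed braided fusion 2-category $\mathfrak{D}$ with $\mathfrak{D}_e$ nondegenerate, equivariantize to get a braided fusion 2-category $\mathfrak{B}=\mathfrak{D}^G$ containing $\mathbf{2Rep}(G)$ fully faithfully, and run the sylleptic-center computation in reverse: $\cZ_{(2)}(\mathfrak{B})$ is the $G$-equivariantization of $\cZ_{(2)}(\mathfrak{D})$, and faithful grading plus $\cZ_{(2)}(\mathfrak{D}_e)=\tVect$ forces $\cZ_{(2)}(\mathfrak{D})\simeq\mathbf{2Vect}_G$, hence $\cZ_{(2)}(\mathfrak{B})\simeq(\mathbf{2Vect}_G)^G=\mathbf{2Rep}(G)$ — wait, more precisely $\cZ_{(2)}(\mathfrak{B})\simeq\mathbf{2Rep}(G)$ as the image of the canonical functor, which is exactly the statement that $\mathfrak{B}$ is nondegenerate in the sense that its sylleptic center is no bigger than the Tannakian subcategory we de-equivariantize by; combined with the full faithfulness from Theorem~\ref{thm:deequiv} this gives back a nondegenerate braided fusion 2-category containing $\mathbf{2Rep}(G)$. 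One then checks the two assignments are mutually inverse, which is immediate since they refine the equivalence already established in Theorem~\ref{thm:deequiv} — one only adds conditions on both sides and checks they match.

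The main obstacle I expect is justifying the compatibility $\cZ_{(2)}(\mathfrak{D})\simeq\cZ_{(2)}(\mathfrak{B})\boxtimes_{\mathbf{2Rep}(G)}\mathbf{2Vect}$ at the level of \emph{sylleptic} centers, rather than Drinfeld centers, since the sylleptic center is a subtler invariant (it involves modifications $v_{B\|X}$, not just objects) and its interaction with relative tensor products / de-equivariantization has not been set up as cleanly in the literature as the 1-categorical M\"uger center case. I would handle this either by the explicit rigid-algebra description — realizing $\cZ_{(2)}$ of a braided rigid algebra $A$ in $\cZ(\mathbf{3Vect}_G)$ as a braided rigid algebra in the appropriate sub-3-category and checking the grading and trivial component directly — or by citing \cite{xu2024higher} for the relevant functoriality of the sylleptic center under centralizers and Morita equivalence. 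The faithfulness-of-grading $\Leftrightarrow$ $\pi_0\cZ_{(2)}=G$ equivalence is comparatively routine and follows from the observation, already used implicitly around Corollary~\ref{cor:notfaithfullygradedextensions}, that a non-faithful $G$-grading on $\mathfrak{D}$ corresponds to a genuinely smaller Tannakian subcategory of $\mathfrak{B}$, namely $\mathbf{2Rep}(G/N)$ for $N$ the support of the grading, contradicting that we de-equivariantize by all of $\mathbf{2Rep}(G)$.
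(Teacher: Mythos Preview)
Your argument rests on a false premise: you assert that the image of $\mathbf{2Rep}(G)$ in $\mathfrak{B}$ automatically lands in the sylleptic center $\cZ_{(2)}(\mathfrak{B})$, and then reinterpret ``$\mathfrak{B}$ nondegenerate'' as ``$\cZ_{(2)}(\mathfrak{B})=\mathbf{2Rep}(G)$ exactly.'' Neither is correct. A braided embedding of a symmetric (Tannakian) 2-category only guarantees that its objects double-braid trivially \emph{with one another}, not with arbitrary objects of $\mathfrak{B}$. The basic example $\mathfrak{B}=\cZ(\tVect_G)$ already shows this: it is nondegenerate, so $\cZ_{(2)}(\mathfrak{B})\simeq\tVect$ by Definition~\ref{def:degeneracy}, yet it contains $\mathbf{2Rep}(G)$ fully faithfully --- so $\mathbf{2Rep}(G)$ is certainly not contained in the sylleptic center. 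With this, your proposed identity $\cZ_{(2)}(\mathfrak{D})\simeq\cZ_{(2)}(\mathfrak{B})\boxtimes_{\mathbf{2Rep}(G)}\tVect$ and the ensuing matching of conditions collapse. (A secondary issue: $\mathfrak{D}$ is only $G$-crossed braided, so ``$\cZ_{(2)}(\mathfrak{D})$'' is not defined; only $\cZ_{(2)}(\mathfrak{D}_e)$ makes sense.) Your final remark on faithfulness is also off: under Theorem~\ref{thm:deequiv} a non-faithfully graded $\mathfrak{D}$ still equivariantizes to a $\mathfrak{B}$ containing the full $\mathbf{2Rep}(G)$; what fails is nondegeneracy of $\mathfrak{B}$, not full faithfulness of the embedding.

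The paper's proof avoids this pitfall by working on the equivariantization side and arguing at the level of loops. Given $\mathfrak{D}=\boxplus_g\mathfrak{D}_g$ with grading supported on $N\trianglelefteq G$, one asks which objects $V\in\Rep(G)=\Omega\mathbf{2Rep}(G)\subset\mathfrak{D}^G$ lift to $\cZ_{(2)}(\mathfrak{D}^G)$. The $G$-crossed braiding shows this happens precisely when $N$ acts trivially on $V$; hence if $N\subsetneq G$ the representations of $G/N$ give nontrivial elements of $\cZ_{(2)}(\mathfrak{D}^G)$ and nondegeneracy fails. When $N=G$ one then checks directly that $\cZ_{(2)}(\mathfrak{D}^G)\simeq\cZ_{(2)}(\mathfrak{D}_e)$. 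No compatibility of $\cZ_{(2)}(-)$ with relative tensor products is needed.
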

\begin{proof}
It will be enough to prove that the $G$-equivariantization $\mathfrak{B}^G$ of a $G$-crossed braided fusion 2-category $\mathfrak{B} = \boxplus_{g\in G}\mathfrak{B}_g$ is nondegenerate if and only if it is both faithfully graded and its trivially graded component $\mathfrak{B}_e$ is nondegenerate. Firstly, let $N\trianglelefteq G$ be the normal subgroup of $G$ on which the grading of $\mathfrak{B}$ is supported. The braided fusion 2-category $\mathfrak{B}^G$ contains a canonical copy of $\mathbf{2Rep}(G)$ as the $G$-equivariantization of the canonical copy of $\mathbf{2Vect}$ in $\mathfrak{B}$. But it follows from the definition of a $G$-crossed braiding that a loop in $\mathbf{2Rep}(G)$, i.e.\ an object of $\mathbf{Rep}(G)=\Omega\mathbf{2Rep}(G)$, can be lifted to $\mathcal{Z}_{(2)}(\mathfrak{B}^G)$ if and only if the subgroup $N$ acts trivially on it.
Thence, there is an embedding $\mathbf{Rep}(G/N)\hookrightarrow \Omega\mathcal{Z}_{(2)}(\mathfrak{B}^G)$.
It is therefore enough to consider the case when the grading on $\mathfrak{B}$ is faithful, that is, $G=N$. But, under this assumption, we have $\mathcal{Z}_{(2)}(\mathfrak{B}^G)\simeq \mathcal{Z}_{(2)}(\mathfrak{B}_e)$, from which the result follows.
\end{proof}

\begin{rem}
We now provide physical intuition behind the above results. One should think of $G$-crossed braided fusion 2-categories as symmetries of (3+1)d quantum field theories where the codimension 1 operators implement the $G$ 0-form symmetry. In the case of a $G$-crossed braided fusion 2-categorical symmetry, 0-form symmetry operators arise as condensation operators of a 1-form symmetry. Therefore, gauging the 0-form $G$-symmetry, amounts to gauging  the ``genuine'' codimension $2$ operators, i.e.\ surface operators implementing a categorical 1-form symmetry. One then expects the dual symmetry to be a categorical 1-form symmetry, given by $\tRep(G)$.
\end{rem}

\subsection{Nondegenerate Braided Bosonic Fusion 2-Categories and TQFTs}\label{subsection:BBF2C}

We first classify faithfully $G$-crossed braided bosonic strongly fusion 2-categories. Appealing to equivariantization then yields the classification of nondegenerate braided bosonic fusion 2-categories. As a necessary first step, we must classify braided bosonic strongly fusion 2-categories. Thanks to \Cref{prop:braidedextn} applied to the braided fusion 2-category $\mathbf{2Vect}$, we obtain the following result.

\begin{lem}
Braided bosonic strongly fusion 2-categories are classified by a finite abelian group $E$ together with a class $\beta$ in $\rH^5(\rB^2 E;\mathbb C^\times)$.\footnote{More precisely, this statement gives the classification up to graded braided equivalence. The classification up to braided equivalence can be deduced by considering automorphisms of the grading group $E$.}
\end{lem}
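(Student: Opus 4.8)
The plan is to realize every braided bosonic strongly fusion 2-category as a faithfully graded braided extension of $\mathbf{2Vect}$, and then to read the classification off from \Cref{prop:braidedextn}. First I would unpack the structure of such a category $\fB$: by the bosonic hypothesis $\Omega\fB\cong\Vect$, and by \cite{JFY} all simple objects of $\fB$ are invertible, so $E:=\pi_0\fB$ is a finite abelian group and the induced grading $\fB\simeq\boxplus_{e\in E}\fB_e$ is faithful. The identity component $\fB_e$ is a connected fusion 2-category with $\Omega\fB_e\cong\Omega\fB\cong\Vect$, hence $\fB_e\simeq\Mod(\Vect)\simeq\mathbf{2Vect}$; since $\mathbf{2Vect}$ carries a unique braided structure up to equivalence, the braiding of $\fB$ restricts on $\fB_e$ to the standard one. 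Conversely any faithfully $E$-graded braided extension of $\mathbf{2Vect}$ is a braided fusion 2-category with $\Omega\cong\Vect$ and $\pi_0\cong E$, hence braided bosonic strongly fusion. This identifies, up to graded braided equivalence, braided bosonic strongly fusion 2-categories together with a chosen isomorphism $\pi_0\cong E$ with faithfully $E$-graded braided extensions of $\mathbf{2Vect}$ (the classification up to braided equivalence then being obtained by further quotienting by $\mathrm{Aut}(E)$, as in the footnote).

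Next I would apply \Cref{prop:braidedextn} with $\fB=\mathbf{2Vect}$: such extensions are classified by homotopy classes of maps $\rB^2 E\to\rB^2\sPic^{\mathrm{br}}(\mathbf{2Vect})$, where by definition $\sPic^{\mathrm{br}}(\mathbf{2Vect})=\cZ(\Mod(\mathbf{2Vect}))^\times$. The remaining work is a computation of this space: $\Mod(\mathbf{2Vect})\simeq\mathbf{3Vect}$, its Drinfeld center $\cZ(\mathbf{3Vect})\simeq\mathbf{3Vect}$ (the center of the monoidal unit among higher fusion categories is trivial, generalizing $\cZ(\Vect)\simeq\Vect$ and $\cZ(\mathbf{2Vect})\simeq\mathbf{2Vect}$), and the core of invertibles of $\mathbf{3Vect}$ is the Eilenberg--MacLane space $\rB^3\mathbb{C}^\times\simeq K(\mathbb{C}^\times,3)$: up to equivalence there is a single invertible object, a single invertible $1$- and $2$-morphism, and the invertible $3$-morphisms form $\mathbb{C}^\times$. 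Hence $\rB^2\sPic^{\mathrm{br}}(\mathbf{2Vect})\simeq\rB^5\mathbb{C}^\times\simeq K(\mathbb{C}^\times,5)$, and homotopy classes of maps $\rB^2 E\to K(\mathbb{C}^\times,5)$ are exactly the classes $\beta\in\rH^5(\rB^2 E;\mathbb{C}^\times)$, which gives the statement.

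I expect the main obstacle to be the careful identification of $\sPic^{\mathrm{br}}(\mathbf{2Vect})$: one must justify that $\cZ(\mathbf{3Vect})\simeq\mathbf{3Vect}$ and that passing to cores of invertibles collapses the 3-category $\mathbf{3Vect}$ down to $K(\mathbb{C}^\times,3)$ — in particular that there are no hidden nontrivial invertible $\mathbf{2Vect}$-module 2-categories, nor nontrivial invertible lower morphisms between them — so that after the two deloopings one lands in $K(\mathbb{C}^\times,5)$ with no contribution in any other degree. A secondary point worth making explicit is the uniqueness of the braided structure on the identity component $\mathbf{2Vect}$, which is what ensures that the reduction to $E$-graded braided extensions of $\mathbf{2Vect}$ loses no data; both of these are sanity-checked by the fact that the analogous one-categorical argument recovers the classification of braided pointed fusion 1-categories by $\rH^4(\rB^2 E;\mathbb{C}^\times)$.
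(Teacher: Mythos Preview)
Your proposal is correct and follows precisely the approach the paper indicates: the paper simply says the lemma follows from \Cref{prop:braidedextn} applied to $\mathbf{2Vect}$, and you have carried this out in detail, including the identification $\sPic^{\mathrm{br}}(\mathbf{2Vect})=\cZ(\mathbf{3Vect})^\times\simeq K(\mathbb{C}^\times,3)$. Your care about the reduction to faithfully graded extensions of $\mathbf{2Vect}$ and about the homotopy type of $\mathbf{3Vect}^\times$ is exactly what the paper's one-line justification leaves implicit.
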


We write $\tVect^\beta_{E}$ for the braided bosonic strongly fusion 2-categories corresponding to the above data. We must now investigate the structure of the Picard space of such braided fusion 2-categories.

\begin{lem}\label{lem:bosonicpic}
    The homotopy groups of $\sPic^{}(\tVect^\beta_{E})$ are given by:
\begin{equation}\label{eq:pic2Vec}
\begin{tabular}{|c|c|c|c|c|}
\hline
$\pi_0$ & $\pi_1$ & $\pi_2$ & $\pi_3$\\
\hline \\[-1em]
$*$ & $E$ &  $0$ &  $\mathbb{C}^{\times}$\\
\hline
\end{tabular}
\end{equation}
Moreover, the single nontrivial $k$-invariant is given by $\beta$.
\end{lem}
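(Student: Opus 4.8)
The plan is to compute $\sPic(\tVect^\beta_E) = \mathbf{Mod}(\tVect^\beta_E)^\times$ by identifying its homotopy groups one at a time and then pinning down the $k$-invariant. First I would recall that $\sPic(\fB)$ is the group-like monoid of invertible $\fB$-module 2-categories and invertible module morphisms, so $\pi_i \sPic(\fB) = \pi_{i-1}$ of the appropriate mapping space; concretely $\pi_0 \sPic(\fB)$ is the group of invertible module 2-categories, $\pi_1\sPic(\fB)$ is the group of invertible module autoequivalences of the unit module $\fB$ (which is $\mathbf{Aut}^{\mathrm{br}}$-type data), $\pi_2$ records invertible 2-morphisms, and $\pi_3 = \Omega^3\sPic(\fB) = \mathrm{Inv}(\Omega^2\fB)$. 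The key structural input is that $\tVect^\beta_E$ is a \emph{strongly} fusion 2-category with $\Omega\tVect^\beta_E \simeq \Vect$, so $\Omega^2\tVect^\beta_E \simeq \Vect$ as well, giving $\pi_3 = \mathbb{C}^\times$ immediately, and the sparseness of its internal structure is what forces $\pi_0 = *$ and $\pi_2 = 0$.

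The main computations are $\pi_0$ and $\pi_1$. For $\pi_0$: an invertible $\tVect^\beta_E$-module 2-category is in particular an invertible module over the underlying fusion 2-category, and since $\tVect^\beta_E$ has invertible simple objects forming the group $E$ with trivial $\Omega$, its Brauer–Picard-type group of module categories is trivial — every invertible module is equivalent to the free one. More efficiently, I would invoke \Cref{prop:braidedextn}: faithfully $E'$-graded braided extensions of $\tVect^\beta_E$ are classified by maps $\rB^2 E' \to \rB^2\sPic^{\mathrm{br}}(\tVect^\beta_E)$, but here I want $\sPic$, not $\sPic^{\mathrm{br}}$, so instead I would appeal directly to the computation of $\mathbf{Mod}(\tVect^\beta_E)$: it is the 3-category of module 2-categories, whose invertible part is detected by the fact that $\tVect^\beta_E$ is Morita-trivial in the relevant sense over $\mathbf{2Vect}$ up to the grading/braiding data already accounted for. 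For $\pi_1$: invertible module endofunctors of the regular module $\tVect^\beta_E$ correspond to invertible objects of $\tVect^\beta_E$ itself (acting by tensoring), which is exactly $\pi_0\tVect^\beta_E = E$; one must check there are no extra automorphisms beyond these, which follows because $\End_{\tVect^\beta_E}$ of the unit is $\Vect$ so there is no room for a nontrivial monoidal natural structure contributing at this level. Then $\pi_2\sPic$ records 2-morphisms of such module autoequivalences, which again land in $\Omega^2 = \Vect$-data that is forced to be trivial by semisimplicity and the strong fusion condition, giving $\pi_2 = 0$.

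For the $k$-invariant: the only possibly nontrivial Postnikov class of a space with $\pi_1 = E$, $\pi_2 = 0$, $\pi_3 = \mathbb{C}^\times$ is a class in $\rH^4(\rB E; \mathbb{C}^\times)$ — wait, one must be careful: with $\pi_2 = 0$ the first nontrivial $k$-invariant connecting $\pi_1$ and $\pi_3$ lives in $\rH^4(\rB E;\mathbb{C}^\times)$, but the statement as phrased says it is given by $\beta \in \rH^5(\rB^2 E; \mathbb{C}^\times)$, which is the natural home after delooping: $\rB\sPic(\tVect^\beta_E)$ has $\pi_2 = E$, $\pi_4 = \mathbb{C}^\times$, and its $k$-invariant is a map $\rB^2 E = K(E,2) \to K(\mathbb{C}^\times, 5)$, i.e.\ a class in $\rH^5(\rB^2 E;\mathbb{C}^\times)$. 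I would argue that this $k$-invariant is exactly $\beta$ by tracing through \Cref{prop:braidedextn} / \Cref{thm:crossedextn}: braided extensions of $\tVect^\beta_E$ are controlled by $\rB^2\sPic^{\mathrm{br}}$, and unwinding how $\beta$ enters the braided strongly fusion 2-category $\tVect^\beta_E$ as its defining cohomology class (the single $k$-invariant of $(\tVect^\beta_E)^\times = E \cdot \rB^2\mathbb{C}^\times$ is $\beta$ itself), naturality of the constructions forces the $k$-invariant of the Picard space to agree with it. The hard part will be this last identification of the $k$-invariant — verifying that the associator/braiding data packaged by $\beta$ is literally what obstructs splitting the Postnikov tower of $\sPic$, rather than some cohomology operation applied to $\beta$; I would handle it by comparing with the known case $E$ cyclic or by observing that $\sPic(\tVect^\beta_E)$ sits inside $\sAut^{\mathrm{br}}(\cZ(\tVect^\beta_E))$ via the pullback square \eqref{eq:DavydovNikshychpullback}, where the relevant class is transparently $\beta$.
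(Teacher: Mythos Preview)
Your computations of $\pi_1$, $\pi_2$, $\pi_3$ are correct, but you are making the $k$-invariant much harder than necessary. The structural fact you are missing is that $\Omega\sPic(\fB) \simeq \fB^\times$ on the nose: the loop space of $\mathbf{Mod}(\fB)^\times$ at the unit is $\End_{\Mod(\fB)}(\fB)^\times \simeq \fB^\times$, so the connected cover of $\sPic(\tVect^\beta_E)$ is literally $(\tVect^\beta_E)^\times$. Since $\tVect^\beta_E$ is \emph{by definition} the braided strongly fusion 2-category classified by $\beta$ via \Cref{prop:braidedextn}, the $k$-invariant of $(\tVect^\beta_E)^\times$ is tautologically $\beta$, and there is nothing further to check. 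The paper expresses this as the fiber sequence
\[
\rB\sPic(\tVect^\beta_E)\longrightarrow \rB^2 E \xlongrightarrow{\beta} \rB^5\mathbb{C}^\times,
\]
which simultaneously records $\pi_{\geq 1}$ and the $k$-invariant. No comparison with cyclic $E$, no detour through the square \eqref{eq:DavydovNikshychpullback}, and no worry about spurious cohomology operations is required.

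Your argument for $\pi_0 = *$, however, has a genuine gap. Asserting that ``every invertible module is equivalent to the free one'' because the simples are invertible and $\Omega$ is trivial is not a proof; nothing you have said rules out exotic invertible module 2-categories. The paper closes this gap by invoking an external result, \cite[Proposition 3.4.1]{Decoppet:2023bay}, stating that every invertible $\tVect^\beta_E$-\emph{bimodule} 2-category arises from a monoidal autoequivalence. From this the triviality of invertible \emph{modules} follows: for braided $\fB$ any invertible left module $\fM$ becomes an invertible bimodule via the braiding, hence $\fM \simeq {}_\phi\fB$ for some autoequivalence $\phi$, and as a left module ${}_\phi\fB \simeq \fB$. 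You would need either this input or an independent argument of comparable strength; the strong fusion hypothesis by itself does not obviously do the job.
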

\begin{proof}
The groups $\pi_{\geq 1}(\sPic^{}(\tVect^\beta_{E}))$ as well as the $k$-invariant can be read off directly from the braided bosonic strongly fusion 2-category $\tVect^\beta_{{E}}$. Namely, by construction,
there is a fiber sequence 
\begin{equation}
    \rB \sPic(\tVect^\beta_{E}) \longrightarrow \rB^2 E \xlongrightarrow{\beta} \rB^5\mathbb C^\times\,.
\end{equation}
Thence, we only have to argue that $\pi_{0}(\sPic^{}(\tVect^\beta_{ E}))$ is trivial. But it was shown in \cite[Proposition 3.4.1]{Decoppet:2023bay} that every invertible $\tVect^\beta_{ E}$-bimodule 2-category is given by an autoequivalence of $\tVect^\beta_{ E}$. It follows immediately that every invertible $\tVect^\beta_{ E}$-module 2-category must be trivial as asserted.
\end{proof}

\Cref{thm:crossedextn} now immediately affords us a classification of faithfully graded $G$-crossed braided bosonic strongly fusion 2-categories. Namely, it follows from obstruction theory that homotopy classes of maps $\rB G\rightarrow \rB \sPic(\tVect^\beta_{E})$ correspond exactly to a homotopy class of maps $\tau: \rB G \to \rB^2 E$ together with a trivialization of the composite $\beta \circ \tau$. Recalling that the data of a trivialization of $\beta \circ \tau $ forms a torsor over $\rH^4(\rB G;\mathbb C^\times)$, we obtain the following classification result.

\begin{prop}\label{thm:AllBoson}
Fix $G$ a finite group. Faithfully graded $G$-crossed braided bosonic strongly fusion 2-categories are classified by a finite abelian group $E$, a class $\beta \in \rH^5(\rB^2 E;\mathbb C^\times )$, a class $\tau \in \rH^2(\rB G;E)$ such that $\beta \circ \tau$ is trivial, and a class $\pi \in \rH^4(\rB G;\mathbb C^\times)$.
\end{prop}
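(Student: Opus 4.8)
The plan is to assemble the classification directly from the three ingredients already established: the classification of braided bosonic strongly fusion 2-categories (the preceding Lemma), the computation of the homotopy type of $\sPic(\tVect^\beta_E)$ (\Cref{lem:bosonicpic}), and the homotopy-theoretic classification of faithfully graded $G$-crossed braided extensions (\Cref{thm:crossedextn}). Since a faithfully graded $G$-crossed braided bosonic strongly fusion 2-category is, by definition, a faithfully graded $G$-crossed braided extension of a braided bosonic strongly fusion 2-category $\fB$, and the latter are classified by $(E,\beta)$, the first step is to fix such a $\fB = \tVect^\beta_E$ and count the homotopy classes of maps $\rB G \to \rB\sPic(\tVect^\beta_E)$ afforded by \Cref{thm:crossedextn}.

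Next I would unwind this mapping space using \Cref{lem:bosonicpic}. The space $\sPic(\tVect^\beta_E)$ is connected with $\pi_1 = E$, $\pi_2 = 0$, $\pi_3 = \mathbb{C}^\times$, and a single nontrivial $k$-invariant $\beta \in \rH^5(\rB^2 E;\mathbb C^\times)$; equivalently there is a fiber sequence $\rB^2\mathbb{C}^\times \to \rB\sPic(\tVect^\beta_E) \to \rB^2 E$ classified by $\beta\colon \rB^2 E \to \rB^5\mathbb{C}^\times$. Postcomposing a map $\rB G \to \rB\sPic(\tVect^\beta_E)$ with the projection to $\rB^2 E$ gives a class $\tau \in [\rB G, \rB^2 E] = \rH^2(\rB G; E)$, and the obstruction to lifting $\tau$ back along the fiber sequence is precisely the composite $\beta\circ\tau \in \rH^5(\rB G;\mathbb{C}^\times)$ (here one is using that $\rB G$ is, up to the relevant connectivity, a space on which such obstruction theory is clean, and that no intermediate obstruction appears because $\pi_2$ vanishes). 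When $\beta\circ\tau$ is trivial, the set of lifts is a torsor over $[\rB G, \rB^2\mathbb{C}^\times] = \rH^4(\rB G;\mathbb{C}^\times)$, which is exactly the statement already quoted in the paragraph preceding the Proposition that the trivializations of $\beta\circ\tau$ form a torsor over $\rH^4(\rB G;\mathbb C^\times)$. Choosing a basepoint trivialization, such a torsor element is recorded by a class $\pi \in \rH^4(\rB G;\mathbb{C}^\times)$.

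Finally I would package the count: ranging over all braided bosonic strongly fusion 2-categories $\fB$ means ranging over all pairs $(E,\beta)$ by the preceding Lemma, and for each such pair the faithfully graded $G$-crossed braided extensions are classified by a pair $(\tau,\pi)$ with $\tau \in \rH^2(\rB G;E)$ satisfying $\beta\circ\tau = 0$ and $\pi \in \rH^4(\rB G;\mathbb{C}^\times)$. This yields exactly the asserted data $(E,\beta,\tau,\pi)$, and the argument is essentially the same obstruction-theoretic bookkeeping already carried out in the discussion immediately above the Proposition statement, so in fact the proof can be given in a sentence or two referencing \Cref{thm:crossedextn}, \Cref{lem:bosonicpic}, and the preceding Lemma.

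The only genuinely delicate point—and the one I would be most careful about—is the obstruction-theory step: making sure that the obstruction to lifting $\tau$ along $\rB^2\mathbb{C}^\times \to \rB\sPic(\tVect^\beta_E) \to \rB^2 E$ is honestly just $\beta\circ\tau$ with no lower obstruction, and that the set of lifts is a $\rH^4(\rB G;\mathbb{C}^\times)$-torsor rather than merely a quotient or subset of it. This is where the vanishing of $\pi_2(\sPic(\tVect^\beta_E))$ and the fact that $\beta$ is the \emph{only} $k$-invariant are essential, and where one should also note (as the footnote to \Cref{lem:bosonicpic}'s source does for a related statement) that we are classifying up to graded braided equivalence; accounting for automorphisms of $E$ would be needed for a classification up to plain braided equivalence, but that is not what the Proposition claims.
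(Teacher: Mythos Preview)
Your proposal is correct and follows essentially the same approach as the paper: apply \Cref{thm:crossedextn} to $\fB=\tVect^\beta_E$, use \Cref{lem:bosonicpic} to identify $\rB\sPic(\tVect^\beta_E)$ as the fiber of $\beta:\rB^2 E\to\rB^5\mathbb{C}^\times$, and read off the obstruction and torsor via elementary obstruction theory. One small slip: the fiber of $\rB\sPic(\tVect^\beta_E)\to\rB^2E$ is $\rB^4\mathbb{C}^\times$, not $\rB^2\mathbb{C}^\times$ (consistent with your own conclusion that lifts form a torsor over $\rH^4(\rB G;\mathbb{C}^\times)$).
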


Given that the only nondegenerate braided bosonic strongly fusion 2-category is $\mathbf{2Vect}$, using \Cref{prop:nondegdeequiv}, we recover the well-known classification of nondegenerate braided bosonic fusion 2-categories, i.e.\ (3+1)d bosonic TOs. 

\begin{thm}[\cite{Lan_2018,JF}]\label{BosonicNondegen}
    Nondegenerate bosonic braided fusion 2-categories are classified by a finite group $G$ and a class $\pi \in \rH^4(\rB G; \mathbb{C}^\times)$.
\end{thm}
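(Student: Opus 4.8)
The plan is to derive the classification from the $2$-categorical de-equivariantization correspondence, in the spirit of the bosonic/fermionic dichotomy outlined in \S\ref{subsub:introNondegen}: combine \Cref{prop:nondegdeequiv} with the classification of faithfully graded $G$-crossed braided bosonic strongly fusion $2$-categories in \Cref{thm:AllBoson}, and feed in the fact recalled just above the theorem that $\mathbf{2Vect}$ is the \emph{unique} nondegenerate braided bosonic strongly fusion $2$-category.

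First I would attach the group to the category. Given a nondegenerate bosonic braided fusion $2$-category $\fB$, the loop category $\Omega\fB$ is symmetric (as $\fB$ is braided) and Tannakian (as $\fB$ is bosonic), hence $\Omega\fB\simeq\mathbf{Rep}(G)$ for a finite group $G$ which, by Deligne's theorem, is determined up to isomorphism by $\fB$. As recalled in \S\ref{subsub:introNondegen}, this equivalence deloops to a canonical braided $2$-functor $\mathbf{2Rep}(G)\to\fB$; it is fully faithful because it identifies $\mathbf{2Rep}(G)\simeq\mathbf{Mod}(\Omega\fB)$ with the condensation completion of $\Omega\fB$ inside $\fB$. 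Conversely, if $\fB$ is the $G'$-equivariantization of a faithfully graded $G'$-crossed braided $2$-category with nondegenerate---hence, by the uniqueness statement, trivial---loop category, then $\Omega\fB\simeq\mathbf{Rep}(G')$, so $G'\cong G$. Thus $\fB\mapsto(G,\ \mathbf{2Rep}(G)\hookrightarrow\fB)$ identifies the equivalence classes of nondegenerate bosonic braided fusion $2$-categories with the union, over finite groups $G$, of the equivalence classes of nondegenerate bosonic braided fusion $2$-categories carrying a fully faithful braided embedding of $\mathbf{2Rep}(G)$, taken up to the braided autoequivalences of $\mathbf{2Rep}(G)$.

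Next I would apply \Cref{prop:nondegdeequiv}: for fixed $G$, de-equivariantizing along $\mathbf{2Rep}(G)\hookrightarrow\fB$ gives an equivalence between the $3$-category of such $\fB$ and the $3$-category of faithfully graded $G$-crossed braided bosonic fusion $2$-categories $\fD=\boxplus_{g\in G}\fD_g$ with $\fD_e$ nondegenerate. Since the monoidal unit of $\fD$ lies in $\fD_e$ we have $\Omega\fD=\Omega\fD_e$, and because de-equivariantizing $\mathbf{Rep}(G)$ out of $\Omega\fB\simeq\mathbf{Rep}(G)$ leaves $\mathbf{Vect}$, the $2$-category $\fD_e$---hence all of $\fD$---is strongly fusion. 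Now I invoke \Cref{thm:AllBoson}: such $\fD$ are classified by a finite abelian group $E$, a class $\beta\in\rH^5(\rB^2E;\mathbb{C}^\times)$, a class $\tau\in\rH^2(\rB G;E)$ with a trivialization of $\beta\circ\tau$, and a class $\pi\in\rH^4(\rB G;\mathbb{C}^\times)$, where $\fD_e=\tVect^\beta_E$. Imposing that $\fD_e$ be nondegenerate forces $\tVect^\beta_E\simeq\mathbf{2Vect}$ by the uniqueness statement, i.e.\ $E$ is trivial; then $\beta$, $\tau$, and its trivialization carry no information, and the only surviving datum is $\pi\in\rH^4(\rB G;\mathbb{C}^\times)$.

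Assembling the two equivalences, nondegenerate bosonic braided fusion $2$-categories carrying a fixed $\mathbf{2Rep}(G)$ are classified by $\rH^4(\rB G;\mathbb{C}^\times)$; taking the union over all finite groups $G$---which, by the first step, is exactly what is needed---yields the statement. I would close by noting that, tracing through, the $2$-category attached to $(G,\pi)$ is the $G$-equivariantization of the $G$-crossed extension of $\mathbf{2Vect}$ determined by $\pi$, which equals the Drinfeld center $\cZ(\tVect^\pi_G)$, recovering \Cref{rem:recoverBosonic}; and that ``classified by $(G,\pi)$'' is to be read up to the action of $\mathrm{Aut}(G)$ on $\rH^4(\rB G;\mathbb{C}^\times)$, parallel to the footnote of \Cref{thm:AllBoson}. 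The only inputs that are not purely formal are the two cited results and the uniqueness of $\mathbf{2Vect}$, so the step I expect to require the most care is the bookkeeping in the second paragraph---checking that $G$ is a genuine invariant of $\fB$, that the canonical functor $\mathbf{2Rep}(G)\hookrightarrow\fB$ is fully faithful, and that the union over $G$ neither over- nor under-counts---since a careless treatment there is the natural place to miscount.
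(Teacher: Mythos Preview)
Your proposal is correct and follows essentially the same route as the paper: invoke \Cref{prop:nondegdeequiv} together with \Cref{thm:AllBoson}, and use that $\mathbf{2Vect}$ is the unique nondegenerate braided bosonic strongly fusion 2-category to force $E$ trivial, leaving only $(G,\pi)$. The paper compresses all of this into a single sentence preceding the theorem, whereas you have spelled out the bookkeeping (attaching $G$ via $\Omega\fB\simeq\mathbf{Rep}(G)$, checking full faithfulness of $\mathbf{2Rep}(G)\hookrightarrow\fB$, verifying that the de-equivariantization is strongly fusion) more carefully than the paper does.
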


\subsection{Nondegenerate Braided Fermionic Fusion 2-Categories and TQFTs}\label{subsection:FBF2C}
We first classify braided fermionic strongly fusion 2-categories, and then compute their Picard spaces in order to apply extension theory.

\begin{lem}\label{lem:piPicbr}
The homotopy groups of $\sPic^{\br}(\tSVect)$ are given by
\begin{equation}\label{eq:picbr2svect}
\begin{tabular}{|c|c|c|c|c|}
\hline
$\pi_0$ & $\pi_1$ & $\pi_2$ & $\pi_3$\\
\hline \\[-1em]
$\Z/2$ & $\Z/2$ &  $\Z/2$ &  $\mathbb{C}^{\times}$\\
\hline
\end{tabular}
\end{equation}
\end{lem}
\begin{proof}
By way of \Cref{eq:picbrequiv}, it suffices to compute the homotopy groups of $\cZ(\mathbf{3SVect})^\times$.  We first consider $\Omega \cZ(\mathbf{3SVect})\cong \cZ_{(2)}(\tSVect)\cong \tSVect$, where the first equivalence is in \cite[Section IV.B]{JF} and the second holds by inspection.
It therefore only remains to argue that $\pi_0(\cZ(\mathbf{3SVect})^\times)\cong \Z/2$.
To this end, observe that $\sPic(\tSVect) = 0$.
Namely, equivalence classes of invertible $\mathbf{2SVect}$-module 2-categories correspond to Morita equivalence classes of invertible $\mathbf{SVect}$-central fusion 1-categories, of which there is only one.
Moreover, given that $\mathbf{2SVect}$ is symmetric, the data of a braiding on an invertible $\mathbf{2SVect}$-module 2-category amounts precisely to that of a braided monoidal autoequivalence of the identity 2-functor $\mathrm{Id}_{\mathbf{2SVect}}$ on $\mathbf{2SVect}$.
But we have that $\sAut^{\br}(\mathrm{Id}_{\mathbf{2SVect}})\cong\sAut^{\mathrm{sym}}(\mathrm{Id}_{\mathbf{SVect}})\cong\mathbb{Z}/2$, from which it follows that $\pi_0(\cZ(\mathbf{3SVect})^\times)\cong \Z/2$. This concludes the proof.

\end{proof}

It is important to remark that by nondegeneracy, the group $\pi_0(\sPic^{\br}(\tSVect))\simeq \Z/2$ acts nontrivially on $\tSVect^\times$. More precisely, the fiber sequence 
 \begin{equation}\label{eq:actionSH5}
     \begin{tikzcd}
         \SH^5 \arrow[r]& \mathrm{B}^2\sPic^{\br}(\tSVect) \arrow[d]\\
         & \rB^2 \Z/2
     \end{tikzcd}
 \end{equation}
coincides with the fiber sequence classifying the action of $\mathrm{B}\mathbb{Z}/2$ on $\SH^5$.

By \Cref{prop:braidedextn}, homotopy classes of maps $\rB^2E \rightarrow \rB^2 \sPic^{\br}(\tSVect)$ parametrize braided fermionic strongly fusion 2-categories. As explained in \cite[Section 4.2]{D11}, it is more convenient to re-express this data as follows:\ First, we have a map $\kappa:\rB^2 E \rightarrow \pi_0(\sPic^{\br}(\tSVect))$, inducing an action of $\rB^2 E$ on $\pi_{\geq 1}(\sPic^{\br}(\tSVect))\simeq \SH^5$. Secondly, we have a map $\rB^2 E \rightarrow \SH^5$ compatible with $\kappa$. This is precisely the data of a class in $\SH^{5+\kappa}(\rB^2 E)$. Hence we see that:

\begin{cor}\label{cor:BFSF2C}
Braided fermionic strongly fusion 2-categories are  classified by a finite abelian group $E$, a class $\kappa \in \rH^2(\rB^2 E;\Z/2)$, and a class $\varsigma  \in \SH^{5+\kappa}(\rB^2 E)$.
\end{cor}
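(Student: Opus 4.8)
The plan is to deduce the classification directly from \Cref{prop:braidedextn} applied to the braided fusion 2-category $\tSVect$. The proposition tells us that faithfully $E$-graded braided extensions of $\tSVect$ are classified by homotopy classes of maps $\rB^2 E \to \rB^2 \sPic^{\br}(\tSVect)$, and since every braided fermionic strongly fusion 2-category is (by definition) a braided fusion 2-category $\fC$ with $\Omega\fC \cong \sVect$, hence a faithfully $\pi_0(\fC)$-graded braided extension of $\tSVect$ after setting $E := \pi_0(\fC)$, this gives a bijection between such categories and maps $\rB^2 E \to \rB^2 \sPic^{\br}(\tSVect)$. So the whole content is to unpack the mapping space $[\rB^2 E, \rB^2 \sPic^{\br}(\tSVect)]$ using the homotopy groups computed in \Cref{eq:piPicbr} and the $k$-invariant information from the fiber sequence \eqref{eq:actionSH5}.

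The key step is the Postnikov/obstruction-theoretic analysis of $\rB^2\sPic^{\br}(\tSVect)$. By \Cref{eq:piPicbr}, the space $\sPic^{\br}(\tSVect)$ has $\pi_0 = \Z/2$, and its 1-connected cover has homotopy groups $\Z/2, \Z/2, \mathbb{C}^\times$ in degrees $1,2,3$ — which is exactly the space $\rB^2$ of the supercohomology spectrum shifted appropriately, i.e.\ the connected component contributes $\SH^5$ in the twisted sense. The fiber sequence \eqref{eq:actionSH5} shows that $\rB^2\sPic^{\br}(\tSVect)$ sits over $\rB^2\Z/2$ with fiber (the space representing) $\SH^5$, and that the resulting $\rB\Z/2$-action on $\SH^5$ is the nontrivial one coming from $(-1)^F$. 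Therefore a map $\rB^2 E \to \rB^2\sPic^{\br}(\tSVect)$ is the same data as: first, the composite $\kappa : \rB^2 E \to \rB^2\Z/2$, i.e.\ a class $\kappa \in \rH^2(\rB^2 E;\Z/2)$; and second, a lift of $\rB^2 E$ along the fibration, which — since the action of $\rB\Z/2$ on $\SH^5$ is encoded precisely so that lifts are classified by $\kappa$-twisted cohomology — amounts to a class $\varsigma \in \SH^{5+\kappa}(\rB^2 E)$. This is the reformulation already quoted verbatim in the paragraph preceding the corollary (citing \cite[Section 4.2]{D11}), so the proof is essentially an invocation of that bookkeeping.

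Concretely I would write: ``Combining \Cref{prop:braidedextn} applied to $\fB = \tSVect$ with the computation of $\sPic^{\br}(\tSVect)$ in \Cref{eq:piPicbr}, braided fermionic strongly fusion 2-categories with connected components $E$ correspond to homotopy classes of maps $\rB^2 E \to \rB^2\sPic^{\br}(\tSVect)$. As in \cite[Section 4.2]{D11}, using the fiber sequence \eqref{eq:actionSH5}, such a map is equivalent to the data of a map $\kappa : \rB^2 E \to \pi_0\sPic^{\br}(\tSVect) \cong \Z/2$, i.e.\ a class in $\rH^2(\rB^2 E;\Z/2)$, together with a compatible map $\rB^2 E \to \SH^5$, i.e.\ a class $\varsigma \in \SH^{5+\kappa}(\rB^2 E)$.'' One should also note (a footnote, paralleling the one after \Cref{thm:AllBoson}) that this is the classification up to graded braided equivalence, with the ungraded version obtained by quotienting by $\mathrm{Aut}(E)$.

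The main obstacle is making sure the identification of the twist is correct — specifically, that the $\rB\Z/2$-action on the fiber $\SH^5$ appearing in \eqref{eq:actionSH5} really is the one whose lifts are counted by $\SH^{5+\kappa}$ rather than, say, untwisted $\SH^5$ or some other variant. This is exactly the content flagged in the sentence ``It is important to remark that by nondegeneracy, the group $\pi_0(\sPic^{\br}(\tSVect))\simeq \Z/2$ acts nontrivially on $\tSVect^\times$'' together with the claim that \eqref{eq:actionSH5} ``coincides with the fiber sequence classifying the action of $\rB\Z/2$ on $\SH^5$'' — once that is granted, the corollary is immediate, so in the write-up I would simply cite that discussion and \cite[Section 4.2]{D11} rather than reprove the obstruction-theory lemma.
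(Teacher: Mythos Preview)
Your proposal is correct and follows essentially the same route as the paper: apply \Cref{prop:braidedextn} with $\fB=\tSVect$, then use the fiber sequence \eqref{eq:actionSH5} together with the homotopy groups from \Cref{eq:piPicbr} to decompose a map $\rB^2 E\to\rB^2\sPic^{\br}(\tSVect)$ into the bottom class $\kappa$ and a $\kappa$-twisted lift $\varsigma$, citing \cite[Section 4.2]{D11} for the obstruction-theoretic bookkeeping. The only addition you make beyond the paper's argument is the explicit remark that a braided fermionic strongly fusion 2-category is automatically a faithfully $\pi_0(\fC)$-graded braided extension of $\tSVect$, which is a helpful clarification but not a different idea.
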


\begin{rem}\label{rem:SandT}
In the case when $E = \Z/2$ and $\kappa$ is trivial, we have that $\SH^{5}(\rB^2 E) = 0$. The corresponding braided fermionic strongly fusion 2-category is $\mathbf{2SVect}_{\mathbb{Z}/2}$.
On the other hand, when $\kappa$ is the nontrivial class in $\rH^2(\rB^2 \mathbb{Z}/2;\Z/2)\cong\mathbb{Z}/2$, we have $\SH^{5+\kappa}(\rB^2 E)\cong\mathbb{Z}/2$. These two classes correspond to the nondegenerate braided fermionic fusion 2-categories denoted $\mathcal S = \mathcal{Z}(\mathbf{2SVect})$ and $\mathcal T$ in \cite{JF2,JFR}. Additionally, it follows from the above result that there are isomorphisms of groups $\mathrm{Aut}^{\br}(\mathcal{S})\cong\mathbb{Z}/16$ and $\mathrm{Aut}^{\br}(\mathcal{T})\cong\mathbb{Z}/16$. This was first observed in \cite[Equation 19]{JF2}.
Using our methods, this follow from the well-known fact that $\SH^{4+\kappa}(\rB^2 E)\cong\mathbb{Z}/16$.
Namely, the spaces $\sAut^{\br}(\mathcal{S})$ and $\sAut^{\br}(\mathcal{T})$ are both equivalent to the delooping of the space of maps from $\mathrm{B}^2E$ to $\mathrm{SH}^5$ with action classified by $\kappa$. We note that this identification is implicitly using the fact that the grading group $E$ has no automorphisms.
\end{rem}

We write $\tSVect^{(\kappa,\varsigma)}_{E}$ for the braided fermionic strongly fusion 2-category corresponding to the data in the corollary above.
In order to classify crossed braided extensions of $\tSVect^{(\kappa,\varsigma)}_{E}$, we presently compute the fundamental group of the corresponding Picard space.

\begin{lem}\label{lem:PicBFSF2C}
The group $\pi_0(\sPic^{}(\tSVect^{(\kappa,\varsigma)}_{E}))$ is trivial.
\end{lem}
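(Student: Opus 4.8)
The plan is to follow the same strategy that proved \Cref{lem:bosonicpic}, namely to show that every invertible $\tSVect^{(\kappa,\varsigma)}_{E}$-module $2$-category is equivalent to the trivial one, by first upgrading it to a bimodule via the braiding. Concretely, I would first observe that the monoidal forgetful functor $\sPic^{}(\tSVect^{(\kappa,\varsigma)}_{E}) \to \sBrPic(\tSVect^{(\kappa,\varsigma)}_{E})$ sends an invertible module $2$-category to an invertible bimodule $2$-category, since a braiding on a fusion $2$-category allows one to transport a left module structure to a bimodule structure. Thus it suffices to show that the composite $\pi_0(\sPic^{}(\tSVect^{(\kappa,\varsigma)}_{E})) \to \pi_0(\sBrPic(\tSVect^{(\kappa,\varsigma)}_{E})) \to \pi_0(\sAut(\tSVect^{(\kappa,\varsigma)}_{E}))$ lands in the trivial component \emph{and} that the fibers of these maps are trivial at the basepoint. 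The second map (bimodule to autoequivalence) is injective on $\pi_0$ for strongly fusion $2$-categories because, as for $\tVect^\beta_E$ in \cite[Proposition 3.4.1]{Decoppet:2023bay}, every invertible bimodule over a strongly fusion $2$-category is an invertible bimodule twisted by an autoequivalence, so I would cite or adapt that statement to the fermionic case $\tSVect^{(\kappa,\varsigma)}_{E}$; the analogous computation of $\sBrPic(\tSVect)$ underlying \Cref{eq:piPicbr} should already record this.

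Next I would analyze the map $\pi_0(\sPic) \to \pi_0(\sAut)$ directly. An invertible module $2$-category $\cM$ over $\tSVect^{(\kappa,\varsigma)}_{E}$ has, as underlying semisimple $2$-category, a single equivalence class (since $\tSVect^{(\kappa,\varsigma)}_{E}$ is strongly fusion, its Cauchy-complete module $2$-categories are $2$-vector spaces or super-$2$-vector spaces up to the action), and in particular the action of $\pi_0 \tSVect^{(\kappa,\varsigma)}_{E} = E$ on the set of simple objects of $\cM$ must be free and transitive for invertibility, so $\cM$ has a single orbit; rescaling, $\cM$ is equivalent to $\tSVect^{(\kappa,\varsigma)}_{E}$ itself as a module over itself, up to an automorphism of $E$ and a cohomological twist. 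The content is then purely that the ``twist'' can always be absorbed, i.e. that $\sPic \to \sAut$ is trivial on $\pi_0$: this is where I would lean on the description of $\sAut(\tSVect^{(\kappa,\varsigma)}_{E})$ and the fact that the regular module carries a compatible module structure for any automorphism of the monoidal data. An efficient packaging is the fiber sequence $\rB\sPic(\tSVect^{(\kappa,\varsigma)}_{E}) \to \rB\sAut^{\mathrm{br}}(\tSVect^{(\kappa,\varsigma)}_{E}) \to \rB\sAut^{\mathrm{br}}(\cZ(\tSVect^{(\kappa,\varsigma)}_{E}))$ obtained from the pullback square \eqref{eq:DavydovNikshychpullback} (with $\mathfrak{B} = \tSVect^{(\kappa,\varsigma)}_{E}$): taking $\pi_0$ and $\pi_1$, triviality of $\pi_0(\sPic)$ follows once I know that $\pi_1 \sAut^{\mathrm{br}}(\tSVect^{(\kappa,\varsigma)}_{E}) \to \pi_1 \sAut^{\mathrm{br}}(\cZ(\tSVect^{(\kappa,\varsigma)}_{E}))$ is surjective and $\pi_0 \sAut^{\mathrm{br}}(\tSVect^{(\kappa,\varsigma)}_{E}) \to \pi_0 \sAut^{\mathrm{br}}(\cZ(\tSVect^{(\kappa,\varsigma)}_{E}))$ is injective, both of which are plausible from the explicit form of $\cZ(\tSVect^{(\kappa,\varsigma)}_{E})$ (whose loop category is $\mathbf{Rep}(\widetilde E, z)$ or a variant) and standard reflection-of-invertibility arguments already invoked in the proof of \Cref{prop:braidedextn}.

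I expect the main obstacle to be the fermionic bookkeeping in establishing the analogue of \cite[Proposition 3.4.1]{Decoppet:2023bay}, i.e. that every invertible $\tSVect^{(\kappa,\varsigma)}_{E}$-bimodule $2$-category comes from an autoequivalence: the presence of the nontrivial $\pi_0 = \Z/2$ (the condensation of the fermion line) and the twist $\kappa$ mean the module $2$-categories are governed by $\tSVect$- rather than $\tVect$-valued cochain data, so the orbit/transitivity argument for simple objects needs care, and one must check that the fermion-parity automorphism $(-1)^F$ does not produce a genuinely new invertible module. I would handle this by reducing to the underlying graded structure: forgetting down to the $E$-grading, an invertible bimodule over $\tVect_E$ (or its super-analogue) is classical and tracked by \cite{Decoppet:2023bay}, and the extra $\tSVect$-enrichment data on the bimodule is then a torsor over lower cohomology that is killed by the freedom to modify by an autoequivalence — in other words the obstruction to trivializing the module structure is exactly absorbed into the image in $\sAut^{\mathrm{br}}$. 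If a fully self-contained argument proves unwieldy, the cleanest route is simply to invoke the already-established homotopy type of $\sPic^{\br}(\tSVect)$ from \Cref{eq:piPicbr} together with \Cref{thm:crossedextn}/\Cref{prop:braidedextn} functoriality to transport the vanishing of $\pi_0(\sPic(\tSVect))$ (the case $E = \Z/2$, $\kappa$ trivial of \Cref{rem:SandT}) along the extension $\tVect_E \to \tSVect^{(\kappa,\varsigma)}_{E}$.
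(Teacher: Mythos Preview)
Your overall reduction---pass from invertible modules to invertible bimodules via the braiding, then show every invertible bimodule is quasi-trivial (i.e.\ arises from an autoequivalence)---is exactly the same first step as the paper's proof. Where you diverge is in how to establish quasi-triviality, and here your proposal has a genuine gap.

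The paper does not attempt to adapt \cite[Proposition 3.4.1]{Decoppet:2023bay} to the fermionic setting, nor does it chase the long exact sequence attached to the pullback square \eqref{eq:DavydovNikshychpullback}. Instead it uses two ingredients you never invoke: first, \cite[Lemma 4.2.3]{D9} to realize any invertible $\tSVect^{(\kappa,\varsigma)}_{E}$-bimodule 2-category as a graded component of some graded extension of $\tSVect^{(\kappa,\varsigma)}_{E}$; second, the result of \cite{JFY} that every simple object of a (fermionic) strongly fusion 2-category is invertible. Since a graded extension of a strongly fusion 2-category is again strongly fusion, the graded component in question contains an invertible object, and hence the bimodule is quasi-trivial. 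This is a two-line argument once those references are in hand.

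By contrast, your approaches all leave real work undone. The pullback-square route requires verifying that $\pi_1 \sAut^{\mathrm{br}}(\tSVect^{(\kappa,\varsigma)}_{E}) \to \pi_1 \sAut^{\mathrm{br}}(\cZ(\tSVect^{(\kappa,\varsigma)}_{E}))$ is surjective and the $\pi_0$ map is injective; you call this ``plausible'' but do not check it, and for general $(E,\kappa,\varsigma)$ the center is not something you have computed. Your direct adaptation of the bosonic argument is, as you yourself flag, exactly where the fermion-parity bookkeeping bites, and you do not resolve it. The final ``transport along the extension $\tVect_E \to \tSVect^{(\kappa,\varsigma)}_{E}$'' suggestion is not a well-defined operation on Picard groups. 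The missing idea is simply the \cite{JFY} invertibility result, which short-circuits all of this.
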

\begin{proof}
Said differently, we must prove that every invertible $\tSVect^{(\kappa,\varsigma)}_{E}$-module 2-category is trivial. In order to do so, just as in the bosonic case treated above in \Cref{lem:bosonicpic}, it will suffice to show that every invertible $\tSVect^{(\kappa,\varsigma)}_{E}$-bimodule 2-category is quasi-trivial, i.e.\ arises from an autoequivalence of the fusion 2-category $\tSVect^{(\kappa,\varsigma)}_{E}$. 
So as to see this, note that it follows from \cite[Lemma 4.2.3]{D9} that every invertible $\tSVect^{(\kappa,\varsigma)}_{E}$-bimodule 2-category arises as a graded component in a graded extension of $\tSVect^{(\kappa,\varsigma)}_{E}$. But it follows from \cite{JFY}, that every such invertible $\tSVect^{(\kappa,\varsigma)}_{E}$-bimodule 2-category is quasi-trivial given that the corresponding graded component must contain an invertible object. This concludes the proof. 
\end{proof}

\begin{prop}\label{thm:EF}
   For $G$ a finite group. Faithfully graded $G$-crossed braided fermionic strongly fusion 2-categories are classified by a finite abelian group $E$, a class $\kappa \in \rH^2(\rB^2 E;\Z/2)$, a class $\varsigma \in \SH^{5+\kappa}(\rB^2 E)$, a class $\tau\in \rH^2(\rB G;E)$, such that $\varsigma \circ \tau$ is trivial in $\SH^{5+\kappa\circ\tau}(\rB G)$, and a class $\varpi \in \SH^{4+\kappa\circ\tau}(\rB G)$.
\end{prop}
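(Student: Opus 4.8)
The plan is to mirror the proof of the bosonic \Cref{thm:AllBoson}, replacing each bosonic ingredient by its fermionic counterpart: $\sPic^{\mathrm{br}}(\tVect)\rightsquigarrow\sPic^{\mathrm{br}}(\tSVect)$ and ordinary cohomology $\rightsquigarrow$ twisted supercohomology. First I would reduce to an obstruction-theoretic statement. Given a faithfully graded $G$-crossed braided fermionic strongly fusion 2-category $\mathfrak{D}=\boxplus_{g\in G}\mathfrak{D}_g$, the trivially graded component $\mathfrak{D}_e$ inherits an honest braiding from the $G$-crossed braiding, and $\Omega\mathfrak{D}_e=\End_{\mathfrak{D}_e}(\mathds{1})=\End_{\mathfrak{D}}(\mathds{1})=\Omega\mathfrak{D}\cong\sVect$, so $\mathfrak{D}_e$ is a braided fermionic strongly fusion 2-category; by \Cref{cor:BFSF2C} it is $\tSVect^{(\kappa,\varsigma)}_E$ for a unique triple $(E,\kappa,\varsigma)$. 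Conversely, any faithfully graded $G$-crossed braided extension of $\tSVect^{(\kappa,\varsigma)}_E$ is again fermionic strongly fusion. By \Cref{thm:crossedextn}, for fixed $(E,\kappa,\varsigma)$ such extensions are classified by $[\rB G,\rB\sPic(\tSVect^{(\kappa,\varsigma)}_E)]$, so the task becomes to identify this set with the claimed data $(\tau,\varpi)$.

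Next I would determine the homotopy type of $\sPic(\tSVect^{(\kappa,\varsigma)}_E)$, exactly as in \Cref{lem:bosonicpic}: we have $\pi_0=\ast$ by \Cref{lem:PicBFSF2C}, and the higher homotopy groups together with the $k$-invariants are read off from the braided-extension presentation of $\tSVect^{(\kappa,\varsigma)}_E$ over $\tSVect$ provided by \Cref{cor:BFSF2C} and \Cref{prop:braidedextn}. Combining this with the homotopy groups of $\sPic^{\mathrm{br}}(\tSVect)$ (\Cref{eq:piPicbr}) and the fiber sequence \Cref{eq:actionSH5} exhibiting the nontrivial $\rB\Z/2$-action of the fermion parity $(-1)^F$, one gets a fiber sequence
\[
\rB\sPic(\tSVect^{(\kappa,\varsigma)}_E)\longrightarrow \rB^2 E\xrightarrow{\ \varsigma\ }\SH^5,
\]
the twisted analogue of the bosonic $\rB\sPic(\tVect^\beta_E)\to\rB^2 E\xrightarrow{\beta}\rB^5\bC^\times$ — here $\SH^5$ denotes the space representing degree-$5$ supercohomology, the sequence is to be read $\rB\Z/2$-equivariantly with $\rB^2 E$ carrying the $\rB\Z/2$-structure determined by $\kappa$, and $\varsigma$ is the corresponding class in $\SH^{5+\kappa}(\rB^2 E)$.

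Then I would run obstruction theory by applying $[\rB G,-]$ to this fiber sequence. The composite to $\rB^2 E$ is a class $\tau\in\rH^2(\rB G;E)$, which pulls the twist back to $\kappa\circ\tau\in\rH^2(\rB G;\Z/2)$; a lift to $\rB\sPic(\tSVect^{(\kappa,\varsigma)}_E)$ is a trivialization of the $\kappa\circ\tau$-twisted class $\varsigma\circ\tau\in\SH^{5+\kappa\circ\tau}(\rB G)$ obtained by pulling $\varsigma$ back along $\tau$, so a lift exists iff $\varsigma\circ\tau$ is trivial; and the set of trivializations is a torsor over $[\rB G,\Omega\SH^5]=\SH^{4+\kappa\circ\tau}(\rB G)$ (using $\Omega\SH^5\simeq\SH^4$), whose elements give $\varpi$. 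Unwinding the $k$-invariants, one checks these twisted groups assemble from the layers $\rH^{2}(\rB G;\Z/2)$, $\rH^{3}(\rB G;\Z/2)$, $\rH^{4}(\rB G;\bC^\times)$ (resp. one degree higher) glued by the Postnikov operations $\Sq^2$ and $(-1)^{\Sq^2}$ of \Cref{eq:postnikov} twisted by $\kappa\circ\tau$, which is the definition of twisted supercohomology. Finally, the classification bijection of \Cref{thm:crossedextn} together with that of \Cref{cor:BFSF2C} promotes this to a bijection on isomorphism classes, giving the statement.

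The main obstacle is the homotopy-theoretic input in the second step, and within it the equivariant subtlety: in the bosonic case the target of the fiber sequence is an ordinary Eilenberg–MacLane space, whereas here it carries the nontrivial $(-1)^F$-action, so one must genuinely argue the fiber sequence in the $\rB\Z/2$-equivariant world — otherwise a null-homotopy over $\rB G$ would trivialize the twist $\kappa\circ\tau$ rather than only the value $\varsigma\circ\tau$, and the statement would collapse to the untwisted one. This equivariance is exactly the origin of the twisted supercohomology and of the new datum $\kappa$, and is what separates this result from \Cref{thm:AllBoson}. Establishing the fiber sequence at all also requires, as in the bosonic case, that every invertible $\tSVect^{(\kappa,\varsigma)}_E$-bimodule 2-category be quasi-trivial, which is \Cref{lem:PicBFSF2C}.
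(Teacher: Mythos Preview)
Your proposal is correct and follows essentially the same route as the paper: reduce via \Cref{cor:BFSF2C} and \Cref{thm:crossedextn} to computing $[\rB G,\rB\sPic(\tSVect^{(\kappa,\varsigma)}_E)]$, then identify the relevant fiber sequence (using \Cref{lem:PicBFSF2C} for $\pi_0$) and run obstruction theory. The only cosmetic difference is in packaging: the paper writes the fiber sequence with bundled target $\rB\sPic(\tSVect^{(\kappa,\varsigma)}_E)\to\rB^2E\xrightarrow{(\kappa,\varsigma)}\rB^2\sPic^{\br}(\tSVect)$, whereas you unbundle it as an $\rB\Z/2$-equivariant sequence with target $\SH^5$ and twist $\kappa$ --- these are equivalent via \Cref{eq:actionSH5}, and your explicit emphasis on the equivariant subtlety is a welcome clarification of why the twist $\kappa\circ\tau$ appears.
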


\begin{proof}
By \Cref{cor:BFSF2C}, braided fermionic strongly fusion 2-categories are all of the form $\tSVect^{(\kappa,\varsigma)}_{E}$.
By \Cref{thm:crossedextn}, faithfully graded $G$-crossed braided extensions of $\tSVect^{(\kappa,\varsigma)}_{E}$ are classified by homotopy classes of maps $\mathrm{B}G\rightarrow \rB \sPic(\tSVect^{(\kappa,\varsigma)}_{E})$. In order to unpack the data of such maps, observe that the space $\rB\sPic(\tSVect^{(\kappa,\varsigma)}_{E})$ fits in the following fiber sequence
\begin{equation}\label{eq:pic2svectpullback}
\rB\sPic(\tSVect^{(\kappa,\varsigma)}_{E})\rightarrow\mathrm{B}^2E\xrightarrow{(\kappa,\varsigma)}\rB^2\sPic^{\br}(\tSVect).
\end{equation}
In particular, it follows that homotopy classes of maps $\rB G\rightarrow \rB\sPic(\tSVect^{(\kappa,\varsigma)}_{E})$ correspond to homotopy classes of maps $\tau: \rB G \rightarrow \rB^2 E$ together with a trivialization of the composite $\varsigma \circ \tau$. But, such trivializations form a torsor over the group $\SH^{4+\kappa\circ\tau}(\rB G)$. We may therefore reformulate this last piece of data as the requirement that $\varsigma \circ \tau$ be trivial in $\SH^{5+\kappa\circ\tau}(\rB G)$ together with the data of a class $\varpi \in \SH^{4+\kappa\circ\tau}(\rB G)$.
\end{proof}

It follows from an $S$-matrix argument, that is, using \cite[Theorem 2.57]{JFR}, that every nondegenerate braided fermionic strongly fusion 2-category $\mathfrak{B}$ must satisfy $\pi_0(\mathfrak{B})\cong\mathbb{Z}/2$. But all such braided fermionic braided strongly fusion 2-categories are described in \Cref{rem:SandT}.
In particular, the only nondegenerate braided fermionic strongly fusion 2-categories are $\mathcal S = \mathcal{Z}(\mathbf{2SVect})$ and $\mathcal T$. Using \Cref{prop:nondegdeequiv}, we obtain a classification of nondegenerate braided fermionic fusion 2-categories, i.e.\ (3+1)d fermionic TOs, that ought to be compared with \cite[Corollary V.5]{JF}.

\begin{thm}\label{FermionicNondegen}
   Nondegenerate fermionic braided fusion 2-categories are classified by a finite group $G$, a class $\varsigma \in \SH^{5+\kappa}(\rB^2 \Z/2)\cong\mathbb{Z}/2$, where $\kappa$ is the nontrivial class in $\rH^2(\rB^2 \Z/2;\Z/2)$, a class $\tau\in \rH^2(\rB G;\Z/2)$, such that $\varsigma \circ \tau$ is trivial in $\SH^{5+\tau}(\rB G)$, and a class $\varpi \in \SH^{4+\tau}(\rB G)$.
\end{thm}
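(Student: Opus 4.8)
The plan is to mirror the derivation of \Cref{BosonicNondegen} from \Cref{prop:nondegdeequiv} and \Cref{thm:AllBoson}, replacing the latter by \Cref{thm:EF}. Namely, one reduces the classification of nondegenerate fermionic braided fusion $2$-categories to that of faithfully graded $G$-crossed braided fermionic strongly fusion $2$-categories whose trivially graded component is nondegenerate, and then uses that the only nondegenerate braided fermionic strongly fusion $2$-categories are $\mathcal{S}=\cZ(\tSVect)$ and $\mathcal{T}$.

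Let $\fB$ be a nondegenerate fermionic braided fusion $2$-category. Since $\fB$ is fermionic, $\Omega\fB$ is super-Tannakian, so $\Omega\fB\simeq\Rep(\widetilde G,z)$ for a super-group $(\widetilde G,z)$, whose maximal Tannakian subcategory is $\Rep(G)$ with $G=\widetilde G/\langle z\rangle$; as recorded in \S\ref{subsub:introNondegen}, this gives a canonical fully faithful braided tensor $2$-functor $\tRep(G)\to\fB$, with $G$ intrinsically determined by $\fB$. By \Cref{prop:nondegdeequiv} and the fact that $\fB$ is nondegenerate, de-equivariantizing $\fB$ along this functor yields a faithfully graded $G$-crossed braided fusion $2$-category $\mathfrak{D}=\boxplus_{g\in G}\mathfrak{D}_g$ whose trivially graded component $\mathfrak{D}_e$ is nondegenerate, and the construction is reversed by equivariantization. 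Moreover $\mathfrak{D}$ is strongly fusion: de-equivariantization acts on loops by de-equivariantizing $\Omega\fB=\Rep(\widetilde G,z)$ along its canonical Tannakian subcategory $\Rep(G)$, which produces $\Rep(\langle z\rangle,z)\simeq\sVect$, so $\Omega\mathfrak{D}\simeq\Omega\mathfrak{D}_e\simeq\sVect$ (cf.\ the discussion in \S\ref{subsub:introNondegen}).

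Now $\mathfrak{D}_e$ is a nondegenerate braided fermionic strongly fusion $2$-category, so by the $S$-matrix argument recalled before the statement (via \cite[Theorem 2.57]{JFR}) it has $\pi_0(\mathfrak{D}_e)\cong\Z/2$, and hence by \Cref{rem:SandT} it equals $\mathcal{S}$ or $\mathcal{T}$. In terms of \Cref{cor:BFSF2C} this pins down $E=\Z/2$, $\kappa\in\rH^2(\rB^2\Z/2;\Z/2)$ the nontrivial class, and $\varsigma\in\SH^{5+\kappa}(\rB^2\Z/2)\cong\Z/2$. Specializing \Cref{thm:EF} to $E=\Z/2$ and this $\kappa$, a faithfully graded $G$-crossed braided fermionic strongly fusion $2$-category with trivially graded component of this form is classified by the extra data of a class $\tau\in\rH^2(\rB G;\Z/2)$ such that $\varsigma\circ\tau$ is trivial in $\SH^{5+\kappa\circ\tau}(\rB G)$, together with a class $\varpi\in\SH^{4+\kappa\circ\tau}(\rB G)$. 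Since $\kappa$ is the fundamental class of $\rB^2\Z/2$ --- the nontrivial element of $\rH^2(\rB^2\Z/2;\Z/2)\cong[\rB^2\Z/2,\rB^2\Z/2]$ corresponds to the identity map --- we have $\kappa\circ\tau=\tau$, hence $\SH^{5+\kappa\circ\tau}(\rB G)=\SH^{5+\tau}(\rB G)$ and $\SH^{4+\kappa\circ\tau}(\rB G)=\SH^{4+\tau}(\rB G)$. Running \Cref{prop:nondegdeequiv} in the equivariantization direction then turns the data $(G,\varsigma,\tau,\varpi)$ into the claimed classification of nondegenerate fermionic braided fusion $2$-categories.

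The two genuine inputs are structural rather than computational: that the canonical $2$-functor $\tRep(G)\to\fB$ is fully faithful and that $G$ is canonically attached to $\fB$, so that one really obtains a classification rather than merely an equivalence of $3$-categories onto ``$2$-categories receiving a fully faithful $\tRep(G)$''; and that de-equivariantizing along it lands among strongly fusion $2$-categories, which is what licenses the use of \Cref{thm:EF}. Both follow from the super-Tannakian structure of $\Omega\fB$ and the behaviour of $\Omega(-)$ under de-equivariantization. Everything else --- the reduction to $E=\Z/2$ with $\kappa$ nontrivial via the $S$-matrix argument, the identification of $\kappa$ with the identity map, and the matching of the twisted supercohomology groups --- is bookkeeping.
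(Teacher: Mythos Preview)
Your proof is correct and follows essentially the same route as the paper: reduce via \Cref{prop:nondegdeequiv} to faithfully graded $G$-crossed braided fermionic strongly fusion $2$-categories with nondegenerate identity component, identify the latter as $\mathcal{S}$ or $\mathcal{T}$ via the $S$-matrix argument and \Cref{rem:SandT}, and then invoke \Cref{thm:EF} with $E=\Z/2$. You are in fact more explicit than the paper on two points it leaves implicit --- that the de-equivariantization is strongly fusion (via $\Omega\mathfrak{D}\simeq\sVect$) and that $\kappa\circ\tau=\tau$ because $\kappa$ is the identity class --- both of which are correct and worth spelling out.
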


\begin{rem}\label{rem:fermioniccenters}
Some, but not all, of the nondegenerate fermionic braided fusion 2-categories in \Cref{FermionicNondegen} arise as the Drinfeld center of a fusion 2-category. For instance, the braided fusion 2-category $\mathcal{T}$ discussed in \Cref{rem:SandT} does not by \cite{JFR,D9}. More generally, we expect that a nondegenerate fermionic braided fusion 2-category arises as a Drinfeld center if and only if the class $\varsigma$ is trivial.
Very explicitly, it would suffice to check that the Drinfeld center of the fermionic strongly fusion 2-category classified by $\tau\in \rH^2(\rB G;\Z/2)$ and $\varpi \in \SH^{4+\tau}(\rB G)$ as in \cite[Theorem 4.5]{D11} is the nondegenerate braided fermionic fusion 2-category classified by $\tau$ and $\varpi$.
\end{rem}

\begin{rem}\label{rem:redundancies}
The equivalence relation that has to be imposed on the data given in \Cref{FermionicNondegen} is quite subtle.
Firstly, it is well-known that the group $\mathrm{Aut}(G)$ of automorphisms of the finite group $G$ acts on the data of the classification.
This is because our argument classifies nondegenerate braided fusion categories equipped with a fully faithful braided 2-functor from $\mathbf{2Rep}(G)$ whereas the statement of \Cref{FermionicNondegen} only considers the underlying nondegenerate braided fusion 2-categories.
There is however a more delicate redundancy at play here which arises from the fact that the braided fermionic strongly fusion 2-categories $\mathcal{S}$ and $\mathcal{T}$ have nontrivial automorphisms \cite{JF2,JFR}.
This subtle point already appeared in the classification of Morita equivalence classes of fermionic fusion 2-categories \cite{D9}, and was examined from a physical standpoint in \cite{Teixeira:2025qsg}.
Below, we explain how this can be seen from our methods.

For definitiveness, we treat the case when $\varsigma$ is trivial.
Then, \Cref{FermionicNondegen} proceeds by constructing a faithfully $G$-crossed braided extension of the braided fusion 2-category $\mathcal{S}=\mathcal{Z}(\mathbf{2SVect})$.
As explained in \Cref{rem:SandT} above, the braided fusion 2-category $\mathcal{S}$ has group of automorphisms $\mathrm{Aut}^{\br}(\mathcal{S})\cong\mathbb{Z}/16$.
Now, faithfully graded $G$-crossed braided extension of $\mathcal{S}$ are classified by homotopy classes of map $f:\mathrm{B}G\rightarrow \mathrm{B}\sPic(\mathcal{S})=\mathrm{B}^2\mathcal{S}^{\times}$.
But two such maps $f_1$ and $f_2$ will produce equivalent $G$-crossed braided fusion 2-categories if and only if there exists a braided automorphism $\varpi$ of $\mathcal{S}$ together with a homotopy witnessing the commutativity of the following diagram:
\begin{equation}
\begin{tikzcd}[sep=small]
\mathrm{B}G \arrow[r, "f_1"] \arrow[rd, "f_2"'] & \mathrm{B}^2\mathcal{S}^{\times}\,. \arrow[d, "\mathrm{B}^2\varphi", "\rotatebox{90}{$\simeq$}"'] \\
& \mathrm{B}^2\mathcal{S}^{\times}
\end{tikzcd}
\end{equation}
This observation explains the additional redundancies known to appear in the classification of fermionic nondegenerate braided fusion 2-categories.
With some work, they can be described explicitly in terms of supercohomology and we expect that the corresponding formulas coincide with those given in \cite[Theorem 2.9]{Teixeira:2025qsg}.
Finally, we wish to record that the action of both $\mathrm{Aut}^{\br}(\mathcal{S})$, or $\mathrm{Aut}^{\br}(\mathcal{T})$ depending on $\varsigma$, and $\mathrm{Aut}(G)$ completely capture the redundancies appearing in our classification result.
\end{rem}

\begin{ex}
We unpack the data of Theorem \ref{FermionicNondegen} for some explicit finite groups $G$. 
In order to demonstrate the existence of nontrivial (3+1)d fermionic TOs, a necessary condition is to show that the composition of $\varsigma \circ \tau$ is trivial.
We consider the case when $G\cong \Z/2^k$ and $\varsigma\in \SH^{5+\kappa}(\rB^2 \Z/2)$ is nontrivial.
Given a choice of $\tau\in \rH^2(\rB G;\Z/2)$, we indicate in the table below when the class $\varsigma \circ \tau\in\SH^{5+\tau}(\rB G)$ is trivializable.
Additionally, recall that $\varpi\in\SH^{4+\tau}(\rB G)$ parameterizes the choices of trivialization of $\varsigma \circ \tau$.
We collect these pieces of data in the following table:

$$\setlength{\tabcolsep}{4pt}
\renewcommand{\arraystretch}{1.2}
\begin{tabular}{c | c c c | c|c}
\hline\hline
$G_f$ & $G$ &  & $\tau$ & $\varsigma \circ \tau $ & $\varpi$ \\ 
\hline\hline 

$\Z/2^k\times \Z/2^F,\ k\geq 2$ 
& $\Z/2^k$ 
&  
& $0$ 
& \text{trivializable} 
& 0
\\ \hline 

$\Z/4^F$ 
& $\Z/2$ 
&  
& $x^2$ 
& \text{trivializable} 
&$\Z/2$
\\ \hline 

$\Z/(2^{k+1})^F,\ k\geq 2$ 
& $\Z/2^k$ 
&  
& $y$ 
& \text{non-trivializable} 
&$\Z/2$ 
\\ \hline  
\hline
\end{tabular}$$

\noindent 
In the table above, the class $x$  is the degree one generator for $\rH^*(\rB \Z/2; \Z/2)=\Z/2[x]$, and the class $y$ is the degree two generator for $\rH^*(\rB \Z/2^k; \Z/2) = \Z/2[x,y]/(x^2=0)$.

The class $\tau$ determines a central extension $G_f$ of $G$ by $\Z/2^F$, which we interpret as supplying fermion parity. The super-group $G_f$ corresponds to the global symmetry of the fermionic theory. When $\varsigma\circ\tau$ is trivializable, each choice of trivialization can be used as the Lagrangian for a Dijkgraaf-Witten construction of a fermionic (3+1)d TO.

We now briefly justify the data given above.
The nontrivial class $\varsigma \in \SH^{5+\kappa}(\rB^2 \Z/2)$ corresponds to the nontrivial element of $\rH^5(\rB^2 \Z/2;\bC^\times)$, as can be seen by considering the Atiyah-Hirzebruch spectral sequence to compute supercohomology. The group $\rH^5(\rB^2 \Z/2;\bC^\times)=\Z/2$ is generated by the class $(-1)^{T\Sq^1 T}$, where $T$ is the generator of $\rH^2(\rB^2 \Z/2;\Z/2)$. The composition $\varsigma \circ \tau$ is thus trivial when the pullback of $(-1)^{T\Sq^1 T}$ along the map $\rH^5(\rB^2 \Z/2; \bC^\times)\to \rH^5(\rB \Z/2^k; \bC^\times)$ is trivial. In the case when $T$ pulls back to $x^2$, this generator is trivialized.
On the other hand, when $T$ pulls back to $y$, this generator is not trivialized as can be seen by the Steenrod action.
\end{ex}

\subsection{Digression:\ The Classification of Braided Fusion 2-Categories}\label{subsection:allBraided}

We refine the strategy we used to classify nondegenerate braided fusion 2-categories so as to classify all braided fusion 2-categories.
In order to do so, we employ the methods of \S\ref{subsection:extensions} to construct $G$-crossed braided extensions of braided strongly fusion 2-categories. In contrast to the main results of \S\ref{subsection:BBF2C} and \S\ref{subsection:FBF2C}, the $G$-grading need not be faithful, thereby introducing additional difficulties.

Let $\mathfrak{A}$ be a braided fusion 2-category.
For simplicity, we will assume that $\mathfrak{A}$ is bosonic, so that $\Omega\mathfrak{A} = \mathbf{Rep}(G)$ as symmetric fusion 1-categories for some finite group $G$.
Now, consider $\mathcal{Z}_{(2)}(\mathfrak{A})$, the sylleptic center of $\mathfrak{A}$.
By definition, we have that $\Omega\mathcal{Z}_{(2)}(\mathfrak{A})$ is a symmetric fusion sub-1-category of $\Omega\mathfrak{A} = \mathbf{Rep}(G)$. We must therefore have $\mathcal{Z}_{(2)}(\mathfrak{A}) = \mathbf{Rep}(H)$, with $H$ a quotient of $G$. Let us write $N$ for the kernel of the projection $G\twoheadrightarrow H$.
The de-equivariantization of $\mathfrak{A}$ by $\mathbf{2Rep}(H)$ is a braided fusion 2-category $\mathfrak{A}_H$ equipped with an $H$-action by braided autoequivalences.
Moreover, we have that the de-equivariantization of $\mathfrak{A}_H$ by $\mathbf{2Rep}(N)$ is a faithfully $N$-graded $G$-crossed braided fusion 2-category $\mathfrak{C}$. Namely, we have that $\mathfrak{C}$ coincide with the de-equivariantization of $\mathfrak{A}$ by $\mathbf{2Rep}(G)$.
Furthermore, observe that $\mathfrak{C}$ is a bosonic strongly fusion 2-category.
In particular, its trivially graded component $\mathfrak{B}$ is a braided bosonic strongly fusion 2-category.

The above procedure can be reversed.
Begin with the data of a bosonic braided strongly fusion 2-category $\mathfrak{B}$.
Thanks to \Cref{thm:crossedextn}, we can classify all faithfully graded $N$-crossed braided extensions $\mathfrak{C}$ of $\mathfrak{B}$.
Then, the $N$-equivariantization of $\mathfrak{C}$ is a braided fusion 2-category $\mathfrak{C}^N$.
Equipping $\mathfrak{C}^N$ with an action of the finite group $H$ by braided autoequivalences, we can consider the braided fusion 2-category $\mathfrak{B}$ given by the $H$-equivariantization of $\mathfrak{C}^N$.
In order to ensure that $\Omega\mathfrak{B} = \mathbf{Rep}(G)$, the action of $H$ on $\mathfrak{C}^N$ must act on $\Omega\mathfrak{C}^N=\mathbf{Rep}(N)$ so that $\mathbf{Rep}(N)^H = \mathbf{Rep}(G)$.

Summarizing the above discussion, bosonic braided fusion 2-categories are classified by:
\begin{enumerate}
    \item A finite abelian group $E$;
    \item \label{number2} A class $\beta \in \rH^5(\rB^2 E; \mathbb{C}^\times)$;
    \item A finite group $N$;
    \item A map $f: \rB N \to \rB\sPic(\fB)$, i.e.\ a faithfully graded $N$-crossed braided extension $\mathfrak{C}$ of $\mathfrak{B} = \mathbf{2Vect}_E^{\beta}$;
    \item A finite group $H$;
    \item An action $\rho:\rB H \rightarrow \rB\sAut^{\br}(\fC^N)$ of $H$ on the braided fusion 2-category $\fC^N$, the $N$-equivariantization of $\mathfrak{C}$.
\end{enumerate}
The last point, regarding the action $\rho$, corresponds equivalently to the data of an action of $H$ on the $N$-crossed braided fusion 2-category $\mathfrak{C}$.
We also wish to note that the classification of fermionic braided fusion 2-categories is analogous. The only difference is that instead of a single class $\beta \in \rH^5(\rB^2 E; \mathbb{C}^\times)$, we require both a class $\kappa\in \rH^2(\rB^2 E;\mathbb{Z}/2)$ and a class $\varsigma \in \SH^{5+\kappa}(\rB^2 E)$.
 
We now explain how some familiar examples of braided bosonic fusion 2-categories are recovered using the above classification.

\begin{ex}
If $E$ is trivial and $N=G$, we recover the nondegenerate braided fusion 2-categories in \Cref{BosonicNondegen}. 
For a general finite abelian group $E$, we can choose a class $\beta\in \rH^5(\rB^2 E; \mathbb{C}^\times)$. Still working under the assumption that $N=G$, there is no choice of action $\rho$. We therefore only have the freedom of choosing a map $f:\rB G \rightarrow \rB\sPic(\fB)$.
If we assume that the map $f$ is trivial, we recover the braided fusion 2-categories of the form $\tVect^\beta_E\boxtimes \cZ(\tVect_G)$. Slightly more generally, if the map $f$ factors as $$\rB G\xrightarrow{\pi} \rB^4\mathbb{C}^{\times}\hookrightarrow \rB\sPic(\fB),$$ where $\rB^4\mathbb{C}^{\times}\hookrightarrow \rB\sPic(\fB)$ is the canonical inclusion, then the corresponding braided fusion 2-category is $\tVect^\beta_E\boxtimes \cZ(\tVect_G^{\pi})$.
\end{ex}

\begin{ex}
If $N$ is trivial, so that $G=H$, but $E$ is nontrivial, we only need to consider the action $\rho:\rB H \rightarrow \rB\sAut^{\br}(\fB)$. Suppose also that $\beta$ is trivial, then the homotopy groups of $\sAut^{\br}(\fB)$, where $\fB = \tVect_E$, are given by:
\begin{equation}\label{eq:groupsAutB}
    \begin{tabular}{|c|c|c|c|c|}
\hline
$\pi_0$ & $\pi_1$ & $\pi_2$ & $\pi_3$\\
\hline \\[-1em]
$\mathrm{Aut}(E)\ltimes \rH^4(\rB^2 E;\mathbb{C}^\times)$ & $0$ &  $\rH^2(\rB^2 E;\mathbb{C}^\times)$ & $0$ \\
\hline 
\end{tabular}\vspace{4mm}\,\,.
\end{equation}
The second entry is explained by $\pi_1(\sAut^{\br}(\fB))\cong\rH^3(\rB^2 E; \mathbb{C}^\times) = 0$. It is also convenient to rewrite $\pi_2(\sAut^{\br}(\fB))\cong\widehat{E}$, the Pontryagin dual of $E$. An action $\rho: \rB G \rightarrow  \mathrm{B}\mathscr{A}ut^{\br}(\mathfrak{B})$ therefore includes the data of a group homomorphism $G\rightarrow \mathrm{Aut}(E)\ltimes \rH^4(\rB^2 E;\mathbb{C}^\times)$. Assuming that this homomorphism factors through $\mathrm{Aut}(E)$, the only remaining data in $\rho$ is a class $\Theta \in H^3(\rB G;\underline{\widehat{E}})$, valued in coefficients that are twisted by the homomorphism $G\rightarrow \mathrm{Aut}(E)$. The class $\Theta$ can be viewed as a mixed anomaly between $H$ and $\widehat{E}$. Taking the equivariantization of the corresponding $G$-crossed braided fusion 2-category produces the braided fusion 2-category $\mathbf{2Rep}(\mathcal{G})$ of 2-representations of the 2-group $\cG$ classified by $G$, $\widehat{E}$, and the class $\Theta$.
\end{ex}

\begin{rem}
There are bosonic topological theories called \textit{mixed state} topological orders \cite{Wang:2023uoj,Ellison:2024svg,Yang:2025pke}. These occur in the presence of noise, e.g.\ interactions with background that disturb the pure state features. These topological order are not expected to occur in the ground
state of local gapped Hamiltonians. It was claimed in \cite{Sohal:2024qvq} that a partial classification of (2+1)d intrinsically mixed state topological orders is given by braided fusion 1-categories $\cC$ that are not nondegenerate. But, as $\cC$ corresponds to a bosonic topological theory, that is, a theory with no emergent fermions, its symmetric center must be Tannakian, and can therefore be de-equivariantized. This produces a nondegenerate braided fusion 1-category $\widetilde{\mathcal{C}}$, called the modularization of $\mathcal{C}$. In the other direction, the braided fusion category $\cC$ can be recovered as the equivariantization of its modularization $\widetilde{\mathcal{C}}$. Given the discussion in this section, which focuses on general braided fusion 2-categories, \Cref{thm:AllBoson} can be applied to classify mixed state topological orders in (3+1)d as will be discussed in \cite{KPY}. We see again that they all arise via equivariantization.
\end{rem}

\section{3+1D $G$-SETs and their anomalies}\label{section:G-TQFTs}

The topological orders arising from \Cref{def:toporder} are in general nonanomalous i.e.\ ``closed'' in the sense of \cite{Kong:2014qka,kong2015boundary,kong2017boundary}. They can however potentially harbor a gravitational anomaly. We explain here how to define a fermionic (3+1)d TO, in the sense of \Cref{def:degeneracy}, with an anomaly for a finite $G$-symmetry.
So as to motivate our approach, we find it instructive to first consider $G$-SETs in (2+1)d.

\subsection{Categorical Description of Fermionic (2+1)d Topological Order with $G$-Symmetry}\label{subsection:G2TQFT}

As reviewed in \S\ref{subsection:extensions}, equipping a (2+1)d TO with a $G$-symmetry amounts to performing a $G$-crossed braided extension of a nondegenerate braided fusion 1-category. For use below, let us additionally point out that it follows from \cite[Theorem 4.18(iii)]{DGNO}, that the condition of being fusion for a $G$-crossed braided multifusion 1-category corresponds to the condition that the braided functor from $\mathbf{Rep}(G)$ being fully faithful for the corresponding braided multifusion 1-category.

We now consider a fermionic version of this procedure. Given a fermionic (2+1)d TO described by a nondegenerate $\mathbf{2SVect}$-central braided fusion 1-category, we explain how to equip it with a $G$-symmetry. Mathematically, this process is captured by the notion of $\mathbf{2SVect}$-central $G$-crossed braided extension. The key difference with the bosonic case is that the $G$-extension is required to be compatible with the $\tSVect$-central structure.

In order to build towards the fermionic case, we turn our attention to braided rigid algebras in $\cZ(\tSVect_G)$.
This braided fusion 2-category is equivalent to $\cZ(\mathbf{2SRep}(G))$.
But, we have $\mathbf{2SRep}(G) \simeq \Mod (\mathbf{SRep}(G))$, so that braided rigid algebras in $\cZ(\mathbf{2SRep}(G))$ are precisely braided multifusion 1-categories equipped with a braided functor from $\mathbf{SRep}(G) = \mathbf{Rep}(G)\boxtimes \SVect$ thanks to \cite[Theorem 4.11]{davydov2021braided}.
The de-equivariantization of such a braided multifusion 1-category is a $G$-crossed braided multifusion 1-category $\mathcal{A} = \oplus_{g\in G}\mathcal{A}_g$ equipped with a $G$-\textit{equivariant} braided tensor functor $\SVect\rightarrow\mathcal{A}_e$. The key point here is that the $G$-equivariant structure on the functor ensures that, upon equivariantization, we obtain a braided monoidal functor from $\mathbf{SRep}(G)$. This is explained in detail in \cite[Theorem 3.13]{GVR}.

In order to describe fermionic (2+1)d $G$-SETs, we will make use of braided rigid algebras in the braided monoidal 2-category $\tSVect\boxtimes\cZ(\tVect_G)$,\footnote{This braided monoidal 2-category is equivalent to the centralizer of the braided embedding $\tSVect\rightarrow\cZ(\tSVect_G)$.} which are $\SVect$-central $G$-crossed braided multifusion 1-categories.
Unpacking the definition, we find that these are $G$-crossed braided multifusion 1-categories $\mathcal{A} = \oplus_{g\in G}\mathcal{A}_g$ equipped with a $G$-equivariant symmetric monoidal functor $\mathbf{SVect}\rightarrow \mathcal{Z}_{(2)}(\mathcal{A}_e)$.\footnote{Extension theory for central fusion 1-categories was studied in \cite{jones2022extension}. To our knowledge extension theory of central crossed braided fusion 1-categories has not yet been developed.}
Here we emphasize that the symmetric fusion 1-category $\mathbf{SVect}$ is implicitly endowed with the trivial $G$-action.
Then, fermionic (2+1)d $G$-SETs are precisely given by $\SVect$-central faithfully graded $G$-crossed braided fusion 1-categories that are nondegenerate in the sense that the functor $\mathbf{SVect}\rightarrow \mathcal{Z}_{(2)}(\mathcal{A}_e)$ is an equivalence.
Thanks to \cite[Proposition 4.56]{DGNO}, this nondegeneracy condition corresponds under (de-)equivariantization exactly to the requirement that the full symmetric fusion sub-1-category $\mathbf{SVect}\subset\mathbf{SRep}(G)$ of the equivariantization $\mathcal{A}^G$ coincides with the symmetric center.

\subsection{Categorical Description of Fermionic (3+1)d Topological Order with $G$-Symmetry}\label{subsection:GTQFT}

Following the 1-categorical line of reasoning presented above, the first step towards defining $\mathbf{2SVect}$-central $G$-crossed braided fusion 2-categories is the notion of a $G$-crossed braided fusion 2-category discussed in \S\ref{subsection:extensions}.
The second step is to consider braided rigid algebras in $\cZ(\mathbf{3SVect}_G)$.

\begin{lem}\label{lem:AlgInZ}
    Braided rigid algebras in $\cZ(\mathbf{3SVect}_G)$ are $G$-crossed braided multifusion 2-categories $\mathfrak{A} = \boxplus_{g\in G}\mathfrak{A}_g$ equipped with a $G$-equivariant braided monoidal 2-functor $\tSVect\rightarrow \mathfrak{A}_e$, where $\mathbf{2SVect}$ carries the trivial $G$-action.
\end{lem}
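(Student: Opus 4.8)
The plan is to mimic the 2-categorical de-equivariantization argument used in the proof of \Cref{thm:deequiv}, now in the $\tSVect$-enriched setting, and then to unpack the resulting algebra structure into the enrichment data. First I would invoke the Morita equivalence between the fusion 3-categories $\mathbf{3SVect}_G$ and $\mathbf{3SRep}(G)$, implemented by $\mathbf{3SVect} = \mathbf{Mod}(\mathbf{2SVect})$, which yields a braided equivalence of Drinfeld centers $\cZ(\mathbf{3SVect}_G)\simeq\cZ(\mathbf{3SRep}(G))$, exactly parallel to \eqref{eq:Z3VGZ3RG}. Since $\mathbf{3SRep}(G)$ is connected (being $\mathbf{Mod}(\mathbf{2SRep}(G))$), braided rigid algebras in $\cZ(\mathbf{Mod}(\mathbf{2SRep}(G)))$ are, by the same formal reasoning as in \S\ref{subsection:braidedext}, precisely braided multifusion 2-categories $\mathfrak{A}$ equipped with a braided monoidal 2-functor $\mathbf{2SRep}(G)\rightarrow\mathfrak{A}$.

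The next step is to translate this ``braided functor from $\mathbf{2SRep}(G)$'' datum across the de-equivariantization correspondence of \Cref{thm:deequiv} into a $G$-crossed braided multifusion 2-category together with extra structure on its trivially graded part. Writing $\mathbf{2SRep}(G)=\mathbf{Mod}(\mathbf{SRep}(G))$ and $\mathbf{SRep}(G)=\mathbf{Rep}(G)\boxtimes\SVect$, a braided 2-functor $\mathbf{2SRep}(G)\rightarrow\mathfrak{A}$ is the same as a braided 2-functor $\mathbf{2Rep}(G)\rightarrow\mathfrak{A}$ (which de-equivariantizes to a $G$-crossed braided multifusion 2-category $\mathfrak{A}^{\mathrm{de}}=\boxplus_{g}\mathfrak{A}^{\mathrm{de}}_g$ by \Cref{thm:deequiv}) together with a braided 2-functor $\tSVect\rightarrow\mathfrak{A}$ whose essential image centralizes that of $\mathbf{2Rep}(G)$; equivalently a braided 2-functor $\tSVect\rightarrow\mathfrak{A}_e^{\mathrm{de}}$ for the trivially graded piece, equipped with a $G$-equivariant structure (here $\tSVect$ carries the trivial $G$-action). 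This mirrors the 1-categorical discussion in \S\ref{subsection:G2TQFT} almost verbatim, one categorical level up: the point is that the $G$-equivariant structure on $\tSVect\rightarrow\mathfrak{A}_e^{\mathrm{de}}$ is exactly what, upon equivariantizing, reassembles into a braided functor from $\mathbf{2SRep}(G)$, as in \cite[Theorem 3.13]{GVR}. I would then observe that the de-equivariantization of a braided \emph{rigid} algebra in $\cZ(\mathbf{3SVect}_G)$ is still a rigid algebra, so the resulting $G$-crossed gadget is genuinely multifusion.

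The main obstacle I anticipate is bookkeeping the compatibility between the $G$-crossed braiding, the ordinary braiding on $\mathfrak{A}_e$, and the $G$-equivariant structure on the enrichment functor — i.e.\ verifying that ``braided functor from $\mathbf{2SRep}(G)$'' decomposes cleanly as ``$G$-crossed braided structure'' plus ``$G$-equivariant braided functor $\tSVect\rightarrow\mathfrak{A}_e$'' with no cross terms left over. At the level of $k$-invariants / homotopy data this is the statement that the relevant square of spaces (analogous to \eqref{eq:DavydovNikshychpullback}) is a pullback, but since $\tSVect$ is itself a nontrivial sylleptic fusion 2-category rather than $\Vect$, one must be careful that the syllepsis data on the $\tSVect$ side matches the centralizer condition inside $\cZ(\mathbf{3SVect}_G)$. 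I would handle this by reducing it, via the equivalence $\tSVect\boxtimes\cZ(\mathbf{3Vect}_G)\simeq$ (centralizer of $\tSVect\hookrightarrow\cZ(\mathbf{3SVect}_G)$), to the already-established $G$-crossed statement for $\cZ(\mathbf{3Vect}_G)$ tensored with the (trivially $G$-graded) enrichment data, so that the claimed description falls out of \Cref{thm:deequiv} applied to the $\cZ(\mathbf{3Vect}_G)$-factor together with a purely formal unpacking of $\tSVect$-module structure. Everything else is a routine translation between rigid algebras and (multi)fusion 2-categories of the kind already carried out repeatedly in \S\ref{subsection:extensions} and \S\ref{subsection:braidedext}.
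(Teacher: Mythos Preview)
Your proposal is correct and follows essentially the same route as the paper: establish the Morita equivalence $\mathbf{3SVect}_G\sim\mathbf{3SRep}(G)$, identify braided rigid algebras in $\cZ(\mathbf{3SRep}(G))\simeq\cZ(\mathbf{Mod}(\mathbf{2SRep}(G)))$ with braided multifusion 2-categories receiving a braided 2-functor from $\mathbf{2SRep}(G)$, and then de-equivariantize. The paper's proof is terser at the final step --- it simply records that the de-equivariantization of $\mathbf{2SRep}(G)$ is $\mathbf{2SVect}$ with the trivial $G$-action and invokes \Cref{thm:deequiv} --- whereas you unpack this via the factorization $\mathbf{SRep}(G)\simeq\mathbf{Rep}(G)\boxtimes\SVect$ and worry about cross terms; but these concerns are not substantive obstacles and the paper treats this passage as formal.
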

\begin{proof}
By definition, we have that $\mathbf{3SVect}_G\simeq \mathbf{3SVect}\boxtimes\mathbf{3Vect}_G$ as fusion 2-categories. But, $\mathbf{3Vect}_G$ is Morita equivalent to $\mathbf{3Rep}(G)$, so that $\mathbf{3SVect}_G$ is Morita equivalent to $\mathbf{3SRep}(G)\simeq \mathbf{3Rep}(G)\boxtimes \mathbf{3SVect}$, and therefore $\mathcal{Z}(\mathbf{3SVect}_G)\simeq \mathcal{Z}(\mathbf{3SRep}(G))$ as braided fusion 3-categories. In particular, the spaces of braided rigid algebras within these two braided fusion 3-categories are equivalent. This is a variant of (de-)equivariantization. Now, braided rigid algebras in $\mathcal{Z}(\mathbf{3SRep}(G))$ are precisely braided multifusion 2-categories $\mathfrak{B}$ equipped with a braided 2-functor $\mathbf{2SRep}(G)\rightarrow \mathfrak{B}$. The result therefore follows from the plain version of (de-)equivariantization given in Theorem \ref{thm:deequiv}, as the de-equivariantization of $\mathbf{2SRep}(G)$ is $\mathbf{2SVect}$ with the trivial $G$-action.
\end{proof}

After this initial observation, we move on to defining $\mathbf{2SVect}$-central $G$-crossed braided multifusion 2-categories.

\begin{defn}\label{def:enrichedGcrossed}
    A $\mathbf{2SVect}$-central $G$-crossed braided fusion 2-category is a strongly connected braided rigid algebra in $\mathbf{3SVect}\boxtimes \mathcal{Z}(\mathbf{3Vect}_G)$.
\end{defn}

We now present a $\mathbf{2SVect}$-central version of Theorem \ref{thm:deequiv}.

\begin{prop}\label{prop:application}
There is an equivalence of 3-categories between
$\mathbf{2SVect}$-central $G$-crossed braided fusion 2-categories and $\mathbf{2SVect}$-central braided fusion 2-categories equipped with a fully faithful braided 2-functor from $\mathbf{2Rep}(G)$.
\end{prop}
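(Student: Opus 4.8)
The plan is to mimic the proof of Theorem~\ref{thm:deequiv} (and its enriched 1-categorical shadow in \S\ref{subsection:G2TQFT}), replacing $\mathbf{3Vect}_G$-based statements with their $\tSVect$-linear analogues. By \Cref{def:enrichedGcrossed}, a $\tSVect$-enriched $G$-crossed braided fusion 2-category is a strongly connected braided rigid algebra in $\mathbf{3SVect}\boxtimes\cZ(\mathbf{3Vect}_G)$. The first step is to identify the ambient braided fusion 3-category: since $\mathbf{3SVect}\boxtimes\cZ(\mathbf{3Vect}_G)\simeq\cZ(\mathbf{3SVect}\boxtimes\mathbf{3Vect}_G)=\cZ(\mathbf{3SVect}_G)$, and, as in \Cref{lem:AlgInZ}, $\mathbf{3SVect}_G$ is Morita equivalent to $\mathbf{3SRep}(G)\simeq\mathbf{3Rep}(G)\boxtimes\mathbf{3SVect}$ via $\mathbf{3SVect}$, we get an equivalence of braided fusion 3-categories $\cZ(\mathbf{3SVect}_G)\simeq\cZ(\mathbf{3SRep}(G))$. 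This identifies the spaces of (strongly connected) braided rigid algebras on the two sides.

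Next I would unpack the right-hand side. Since $\mathbf{3SRep}(G)\simeq\mathbf{Mod}(\mathbf{2SRep}(G))$ and the latter is connected, braided rigid algebras in $\cZ(\mathbf{Mod}(\mathbf{2SRep}(G)))$ are precisely braided multifusion 2-categories $\mathfrak{B}$ equipped with a braided monoidal 2-functor $\mathbf{2SRep}(G)\rightarrow\mathfrak{B}$, exactly as in the proof of \Cref{thm:deequiv}. The strong connectedness condition translates, just as there, into the requirement that this braided 2-functor be fully faithful. So far this gives an equivalence between $\tSVect$-enriched $G$-crossed braided fusion 2-categories and braided fusion 2-categories equipped with a fully faithful braided 2-functor from $\mathbf{2SRep}(G)$; I would record this as an intermediate claim.

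The remaining — and genuinely content-bearing — step is to pass from ``fully faithful braided 2-functor from $\mathbf{2SRep}(G)$'' to ``$\tSVect$-enriched braided fusion 2-category equipped with a fully faithful braided 2-functor from $\mathbf{2Rep}(G)$''. Here I would use that $\mathbf{2SRep}(G)\simeq\mathbf{2Rep}(G)\boxtimes\tSVect$ is generated, as a braided fusion 2-category, by the sub-2-categories $\mathbf{2Rep}(G)$ and $\tSVect$, which moreover centralize each other. Thus a braided 2-functor $\mathbf{2SRep}(G)\rightarrow\mathfrak{B}$ is the same data as a pair of braided 2-functors $\mathbf{2Rep}(G)\rightarrow\mathfrak{B}$ and $\tSVect\rightarrow\mathfrak{B}$ whose images centralize one another, i.e.\ the second functor lands in the centralizer of the image of the first. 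I would then argue that, because $\tSVect$ is sylleptic (even symmetric as a braided 2-category) and $\mathbf{2Rep}(G)$ maps fully faithfully, the functor $\tSVect\rightarrow\mathfrak{B}$ necessarily factors through $\cZ_{(2)}(\mathfrak{B})$ as a \emph{sylleptic} monoidal 2-functor, which is exactly the datum of a $\tSVect$-enrichment of $\mathfrak{B}$ (\Cref{subsection:perlimFTO}); conversely a $\tSVect$-enrichment together with a compatible fully faithful braided 2-functor $\mathbf{2Rep}(G)\rightarrow\mathfrak{B}$ reassembles into a fully faithful braided 2-functor $\mathbf{2SRep}(G)\rightarrow\mathfrak{B}$. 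Fully faithfulness of the combined functor follows from fully faithfulness on the $\mathbf{2Rep}(G)$ factor together with the fact that $\tSVect$ is the de-equivariantization, so its image meets each component of $\mathfrak{B}$ in a way forced by the enrichment.

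The main obstacle is this last centralizer bookkeeping: checking that a braided 2-functor out of the Deligne product $\mathbf{2Rep}(G)\boxtimes\tSVect$ really does decompose into the two pieces with the stated centralizing/syllepticity compatibilities, and that the enrichment data (a sylleptic functor to $\cZ_{(2)}(\mathfrak{B})$) is faithfully captured. I would handle this by invoking the centralizer formalism of \cite[Definition~4.1.1]{xu2024higher} and the observation, already used around Equation~\eqref{eq:centralizer}, that $\tSVect\boxtimes\cZ(\tVect_G)$ is the centralizer of the braided embedding $\tSVect\rightarrow\cZ(\tSVect_G)$; the rest is a diagram chase analogous to, but one categorical level up from, the $\mathbf{SRep}(G)=\mathbf{Rep}(G)\boxtimes\mathbf{SVect}$ argument recalled in \S\ref{subsection:G2TQFT}. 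Everything else is a direct transcription of the proof of \Cref{thm:deequiv}.
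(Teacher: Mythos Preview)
Your very first identification is wrong, and the error propagates. You claim $\mathbf{3SVect}\boxtimes\cZ(\mathbf{3Vect}_G)\simeq\cZ(\mathbf{3SVect}\boxtimes\mathbf{3Vect}_G)=\cZ(\mathbf{3SVect}_G)$, but Drinfeld centers are multiplicative, so $\cZ(\mathbf{3SVect}_G)\simeq\cZ(\mathbf{3SVect})\boxtimes\cZ(\mathbf{3Vect}_G)$, and $\cZ(\mathbf{3SVect})$ is \emph{not} $\mathbf{3SVect}$:\ as computed in \Cref{lem:preenrichedfiebrsequence}, $\cZ(\mathbf{3SVect})$ has two connected components. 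The correct statement is that $\mathbf{3SVect}\boxtimes\cZ(\mathbf{3Vect}_G)$ is the centralizer of the braided embedding $\mathbf{3SVect}\hookrightarrow\cZ(\mathbf{3SVect}_G)$, a proper sub-3-category. Consequently, when you pass to $\cZ(\mathbf{3SRep}(G))$ and unpack braided rigid algebras there, you are describing exactly the objects of \Cref{lem:AlgInZ} --- $G$-crossed braided fusion 2-categories with a $G$-equivariant \emph{braided} 2-functor $\tSVect\to\mathfrak{A}_e$ --- which is strictly more data than \Cref{def:enrichedGcrossed} asks for. The difference between algebras in the full center and algebras in the centralizer is precisely the difference between a braided 2-functor from $\tSVect$ and a sylleptic 2-functor landing in $\cZ_{(2)}$.

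Your third paragraph tries to recover from this by arguing that ``because $\tSVect$ is sylleptic and $\mathbf{2Rep}(G)$ maps fully faithfully, the functor $\tSVect\rightarrow\mathfrak{B}$ necessarily factors through $\cZ_{(2)}(\mathfrak{B})$ as a sylleptic monoidal 2-functor''. This is not true:\ a braided 2-functor out of a sylleptic 2-category need not factor through the sylleptic center of the target, regardless of what other functors are around. The paper avoids this issue entirely by never leaving the product $\mathbf{3SVect}\boxtimes\cZ(\mathbf{3Vect}_G)\simeq\mathbf{3SVect}\boxtimes\cZ(\mathbf{3Rep}(G))$:\ since $\mathbf{3SVect}$ sits there as a symmetric factor, braided rigid algebras in this 3-category are automatically $\tSVect$-enriched (with the sylleptic enrichment landing in $\cZ_{(2)}$), and the $\cZ(\mathbf{3Rep}(G))$ factor contributes the braided 2-functor from $\mathbf{2Rep}(G)$ separately. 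No decomposition of $\mathbf{2SRep}(G)$ or centralizer bookkeeping is needed.
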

\begin{proof}
We begin by observing that $\mathbf{2SVect}$-central braided multifusion 2-categories equipped with a braided 2-functor from $\mathbf{2Rep}(G)$ are precisely rigid braided algebras in the braided fusion 3-category $\mathbf{3SVect}\boxtimes\mathcal{Z}(\mathbf{3Rep}(G))$.
But, there is an equivalence of braided fusion 3-categories
$$\mathbf{3SVect}\boxtimes \mathcal{Z}(\mathbf{3Vect}_G)\simeq \mathbf{3SVect}\boxtimes\mathcal{Z}(\mathbf{3Rep}(G))\,,$$
from which it follows that the 3-categories of braided rigid algebras are equivalent. Plainly, this establishes that there is an equivalence of 3-categories between $\mathbf{2SVect}$-central braided multifusion 2-categories equipped with a braided 2-functor from $\mathbf{2Rep}(G)$ and $\mathbf{2SVect}$-central $G$-crossed braided multifusion 2-categories.
The remaining part of the statement now follows exactly as in Theorem \ref{thm:deequiv} by unpacking what the condition of being strongly connected amounts to for a braided rigid algebra in either of these two braided fusion 3-categories.
\end{proof}

A variant of the proof of Proposition \ref{prop:nondegdeequiv} finally yields the desired result, which affords a description of fermionic (3+1)d TOs with $G$-symmetry.

\begin{prop}\label{prop:genuinelyfermionicTOclassification}
There is an equivalence of 3-categories between nondegenerate
$\mathbf{2SVect}$-central faithfully graded $G$-crossed braided fusion 2-categories and nondegenerate $\mathbf{2SVect}$-central braided fusion 2-categories equipped with a fully faithful braided 2-functor from $\mathbf{2Rep}(G)$.
\end{prop}

\begin{rem}
Nondegenerate $\mathbf{2SVect}$-central braided fusion 2-categories describe fermionic (3+1)d TOs, and, by \Cref{thm:fermionicTQFT}, are classified by a class in $\SH^4(\rB H)$.
For such braided fusion 2-categories, the data of a fully faithful braided 2-functor from $\mathbf{2Rep}(G)$ is precisely that of a surjective group homomorphism $H \twoheadrightarrow G$. We emphasize that the functor from $\tRep(G)$ does not factor through the sylleptic center. Namely, the sylleptic center only contains $\tSVect$ by nondegeneracy.
To finish recovering the data in \Cref{ansatz:fernionicTQFT}, we can think of the class in $\SH^4(\rB H)$ giving the action for the Dijkgraaf-Witten theory as equivalently labeling an invertible theory with $H$-symmetry. Then, writing $K \hookrightarrow H$ for the kernel of $H \twoheadrightarrow G$, there are $\SH^4(\rB K)$ choices of gauging for $K$.
\end{rem}

\begin{rem}
Let us now comment on the classification of (3+1)d $G$-SETs with an emergent fermion, but when the theory is bosonic.
It follows from \Cref{FermionicNondegen} that the theory without $G$-symmetry is described by a finite group $K$, a class $\varsigma\in \SH^{5+\kappa}(\rB^2\mathbb{Z}/2)$, $\tau \in \rH^2(\rB K;\Z/2)$, and $\varpi \in \SH^{4+\tau}(\rB K)$.
For simplicity, we will assume that $\varsigma$ is trivial.
In this case, we have argued in \Cref{rem:fermioniccenters} that the corresponding braided fermionic fusion 2-category is $\cZ(\tSVect^{(\tau,\varpi)}_K)$, where $\tSVect^{(\tau,\varpi)}_K$ is the fermionic strongly fusion 2-category classified by $\tau$ and $\varpi$.
In this case, there is an equivalence of group-like topological monoids $$\sPic\big(\cZ(\tSVect^{(\tau,\varpi)}_K)\big) \simeq \mathscr{B}r\mathscr{P}ic\big(\cZ(\mathbf{2Vect}^{(\tau,\varpi)}_{K})\big),$$
so that faithfully graded $G$-crossed braided extensions of $\cZ(\tSVect^{(\tau,\varpi)}_K)$ are in correspondence with faithfully $G$-graded extensions of $\tSVect^{(\tau,\varpi)}_K$.
Thanks to \cite{D11}, the latter extensions are classified by exact sequences of finite groups
\begin{equation}
    1 \longrightarrow K \longrightarrow H \longrightarrow G \longrightarrow 1,
\end{equation}
together with $\tau' \in \rH^2(\rB H;\Z/2)$ and $\varpi' \in \SH^{4+\tau'}(\rB H)$ such that $\tau'|_K = \tau$ and $\varpi'|_K = \varpi$.
\end{rem}

\subsection{Homotopical Description of Fermionic (3+1)d Topological Order with $G$-Symmetry}

We give a homotopically flavored description of $\mathbf{2SVect}$-central $G$-crossed braided fusion 2-categories by way of a $\mathbf{2SVect}$-central version of Theorem \ref{thm:crossedextn}. Physically, this corresponds to the process of equipping a fermionic (3+1)d TOs with $G$-defects.
This description will also be important for our subsequent discussion of fermionic $G$-anomalies in \S\ref{subsection:obstructiongauging}. In order to do so, we begin by recording the following unpacking of the definition of a $\mathbf{2SVect}$-central $G$-crossed braided fusion 2-category.

\begin{lem}\label{lem:enrichedcrossedbraided}
A $\mathbf{2SVect}$-central $G$-crossed braided fusion 2-category is a $G$-crossed braided fusion 2-category $\mathfrak{A} = \boxplus_{g\in G}\mathfrak{A}_g$ whose trivially graded component $\mathfrak{A}_e$ is equipped with a $G$-equivariant sylleptic 2-functor $\mathbf{2SVect}\rightarrow\mathcal{Z}_{(2)}(\mathfrak{A}_e)$, where $\mathbf{2SVect}$ carries the trivial $G$-action.
\end{lem}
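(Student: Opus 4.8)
The plan is to deduce the claimed unpacking from the two de-equivariantization results already established, Theorem~\ref{thm:deequiv} and Proposition~\ref{prop:application}, by transporting the enrichment datum through the equivalence. By Proposition~\ref{prop:application}, a $\tSVect$-enriched $G$-crossed braided fusion 2-category is precisely a $\tSVect$-enriched braided fusion 2-category $\mathfrak{B}$ --- that is, a braided fusion 2-category $\mathfrak{B}$ together with a sylleptic 2-functor $\tSVect\to\cZ_{(2)}(\mathfrak{B})$ --- equipped with a fully faithful braided 2-functor $\mathbf{2Rep}(G)\to\mathfrak{B}$. First I would forget the enrichment and apply the plain de-equivariantization correspondence of Theorem~\ref{thm:deequiv} to the pair $(\mathfrak{B},\,\mathbf{2Rep}(G)\hookrightarrow\mathfrak{B})$: this produces a $G$-crossed braided fusion 2-category $\mathfrak{A}=\boxplus_{g\in G}\mathfrak{A}_g$, with $\mathfrak{A}_e$ the $\mathbf{2Rep}(G)$-de-equivariantization of $\mathfrak{B}$, carrying the $G$-action that permutes the grading by conjugation. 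What remains is to see what the sylleptic 2-functor $\tSVect\to\cZ_{(2)}(\mathfrak{B})$ becomes under this equivalence.

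Here the claim is that it corresponds to a $G$-equivariant sylleptic 2-functor $\tSVect\to\cZ_{(2)}(\mathfrak{A}_e)$ for which $\tSVect$ carries the trivial $G$-action, and I would establish it in two steps. For the $G$-action: the image of $\tSVect$ in $\mathfrak{B}$ is central, hence in particular centralizes the image of $\mathbf{2Rep}(G)$, so it is carried essentially unchanged by the $\mathbf{2Rep}(G)$-de-equivariantization except for acquiring the residual $G$-action coming from the $G$-crossed structure on $\mathfrak{A}$; that this residual action is trivial is exactly the observation used at the end of the proof of Lemma~\ref{lem:AlgInZ}, namely that the de-equivariantization of the canonical copy of $\mathbf{2SRep}(G)$ is $\tSVect$ with the trivial $G$-action. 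For the target: since the $G$-crossed braiding on $\mathfrak{A}$ restricts to a genuine braiding only on the identity component $\mathfrak{A}_e$, a 2-functor whose image lies in $\cZ_{(2)}$ is forced to factor through $\cZ_{(2)}(\mathfrak{A}_e)$, and the precise matching of the relevant part of $\cZ_{(2)}(\mathfrak{B})$ with $\cZ_{(2)}(\mathfrak{A}_e)$ is a variant of the sylleptic-center computation in the proof of Proposition~\ref{prop:nondegdeequiv}. The reverse direction runs by equivariantization, reassembling a $G$-crossed braided fusion 2-category with $G$-equivariantly enriched identity component into a $\tSVect$-enriched braided fusion 2-category with a fully faithful 2-functor from $\mathbf{2Rep}(G)$, and then invoking Proposition~\ref{prop:application} again.

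I expect the main obstacle to be precisely this transport of the enrichment, and within it the case where the $G$-grading on $\mathfrak{A}$ is not faithful: then $\cZ_{(2)}(\mathfrak{B})=\cZ_{(2)}(\mathfrak{A}^G)$ is strictly larger than $\cZ_{(2)}(\mathfrak{A}_e)$, so Proposition~\ref{prop:nondegdeequiv} cannot be quoted verbatim and one must instead follow the enrichment 2-functor at the level of the centralizer of the image of $\mathbf{2Rep}(G)$, checking both that it still lands in the sylleptic center of $\mathfrak{A}_e$ and that it carries a syllepsis rather than merely a braiding --- this last point being the feature that singles out the tensor factor $\mathbf{3SVect}=\mathbf{Mod}(\tSVect)$ in Definition~\ref{def:enrichedGcrossed} as opposed to its center $\cZ(\mathbf{3SVect})$. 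Everything else --- the identification of strong connectedness with the fusion (as opposed to multifusion) condition, and the mutual compatibility of the two structures carried by the braided rigid algebra --- is a routine transcription of the arguments in Theorem~\ref{thm:deequiv}, Lemma~\ref{lem:AlgInZ}, and Proposition~\ref{prop:application}.
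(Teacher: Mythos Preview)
The paper does not supply a proof of this lemma; it is stated as ``recording the following unpacking of the definition,'' and the intended argument is a direct unpacking of Definition~\ref{def:enrichedGcrossed} parallel to Lemma~\ref{lem:AlgInZ}. Concretely: a strongly connected braided rigid algebra in $\mathbf{3SVect}\boxtimes\mathcal{Z}(\mathbf{3Vect}_G)$ is, via the $\mathcal{Z}(\mathbf{3Vect}_G)$ factor, a $G$-crossed braided fusion 2-category $\mathfrak{A}$, and via the $\mathbf{3SVect}=\mathbf{Mod}(\tSVect)$ factor it carries a $G$-equivariant braided 2-functor $\tSVect\to\mathfrak{A}_e$ (this much is Lemma~\ref{lem:AlgInZ} after noting the footnote identifying $\mathbf{3SVect}\boxtimes\mathcal{Z}(\mathbf{3Vect}_G)$ with the centralizer of $\tSVect$ in $\cZ(\mathbf{3SVect}_G)$); the refinement to a \emph{sylleptic} 2-functor landing in $\mathcal{Z}_{(2)}(\mathfrak{A}_e)$ is exactly what is forced by working in the symmetric factor $\mathbf{3SVect}$ rather than in $\cZ(\mathbf{3SVect})$.

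Your approach is correct but takes a genuine detour: you pass through Proposition~\ref{prop:application} to the equivariantized side, then re-apply Theorem~\ref{thm:deequiv} to come back and track the enrichment. This works, and you correctly identify the two checkpoints (trivial $G$-action on $\tSVect$; target $\cZ_{(2)}(\mathfrak{A}_e)$), but it introduces the non-faithful-grading subtlety you flag precisely because you are comparing $\cZ_{(2)}(\mathfrak{B})=\cZ_{(2)}(\mathfrak{A}^G)$ with $\cZ_{(2)}(\mathfrak{A}_e)$ after a round trip. The direct unpacking avoids this entirely, since it never leaves the $G$-crossed side: the sylleptic enrichment is read off from the ambient $\mathbf{3SVect}$ factor without ever invoking equivariantization, so the faithfulness of the grading is irrelevant. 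Your route has the virtue of making the compatibility with Proposition~\ref{prop:application} manifest, but for the bare statement of the lemma the paper's implicit argument is both shorter and cleaner.
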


Let us now fix a braided fusion 2-category $\mathfrak{B}$ equipped with a $G$-action, which we will express as a map of spaces $\rB G\rightarrow \rB \sAut^{\mathrm{br}}(\fB)$.
We give a homotopical description of the data of a $G$-equivariant structure on a sylleptic monoidal 2-functor $F:\mathbf{2SVect}\rightarrow\mathcal{Z}_{(2)}(\mathfrak{B})$.
By functoriality of the sylleptic center $\mathcal{Z}_{(2)}(-)$, there is a canonical map of spaces $\rB \sAut^{\mathrm{br}}(\fB)\rightarrow \rB\sAut^{\mathrm{syl}}( \cZ_{(2)}(\fB))$.
We will also consider the space $\sAut^{\mathrm{syl}}_{\tSVect}(F)$ of sylleptic monoidal autoequivalences of $F$, which is the space whose objects are pairs consisting of a sylleptic equivalence $E:\mathcal{Z}_{(2)}(\mathfrak{B})\simeq\mathcal{Z}_{(2)}(\mathfrak{B})$ together with a sylleptic natural equivalence $\phi$ witnessing the commutativity of the diagram
$$\begin{tikzcd}[sep=tiny]
&  & \mathcal{Z}_{(2)}(\mathfrak{B})\arrow[dd, "E"] \\
\mathbf{2SVect} \arrow[rru, "F"] \arrow[rrd, "F"'] &  & \\
&  & \mathcal{Z}_{(2)}(\mathfrak{B}).               
\end{tikzcd}$$
In particular, there is a map $\rB \sAut^{\mathrm{syl}}_{\tSVect}(F)\rightarrow \rB\sAut^{\mathrm{syl}}(\cZ_{(2)}(\fB))$ only recording the action on $\cZ_{(2)}(\fB)$. A $G$-equivariant structure on the sylleptic monoidal 2-functor $F$ then corresponds precisely to the data of a lift in the following diagram of spaces:
\begin{equation}
    \begin{tikzcd}[sep=small]
        && \rB\sAut^{\mathrm{syl}}_{\tSVect}(F) \arrow[dd]\\
        &&\\
        \rB G \arrow[r,"",swap ]  \arrow[uurr,dotted] & \rB \sAut^{\mathrm{br}}( \fB) \arrow[r] &\rB\sAut^{\mathrm{syl}}( \cZ_{(2)}(\fB)) \,.
    \end{tikzcd}
\end{equation}
For later use, it will be convenient to re-express the above definition more compactly. In order to do so, we define the space $\sAut^{\mathrm{br}}_{\mathbf{2SVect}}(\mathfrak{B})$ via the following pullback diagram:
$$\begin{tikzcd}[sep=tiny]
\arrow["\lrcorner"{anchor=center, pos=0.125}, draw=none, from=1-1, to=2-2]
\rB \sAut^{\mathrm{br}}_{\mathbf{2SVect}}(\mathfrak{B}) \arrow[rr]\arrow[dd]
& & \rB \mathscr{A}ut^{\mathrm{syl}}_{\tSVect
}(F) \arrow[dd]                     \\
& {} & \\
\rB \mathscr{A}ut^{\mathrm{br}}(\mathfrak{B}) \arrow[rr] & & \rB \mathscr{A}ut^{\mathrm{syl}}(\mathcal{Z}_{(2)}(\mathfrak{B})).
\end{tikzcd}$$
Then, a $G$-equivariant structure on $F$ corresponds to the data of a map $\rB G\rightarrow \rB\sAut^{\mathrm{br}}_{\mathbf{2SVect}}(\mathfrak{B})$.

Let now $\mathfrak{A}$ be a faithfully graded $G$-crossed braided extension of $\mathfrak{B}$, or, equivalently, the data of a map $\rB G\rightarrow \rB\sPic(\mathfrak{B})$. It follows from the above discussion that a $\mathbf{2SVect}$-central structure for the $G$-crossed braided extension $\mathfrak{A}$ corresponds precisely to a filler for the diagram below:
\begin{equation}
    \begin{tikzcd}[sep=small]
\rB G \arrow[dd] \arrow[rr] &                                            & \rB \mathscr{A}ut^{\mathrm{syl}}_{\tSVect
}(F) \arrow[dd]                     \\
& & \\
\rB \mathscr{P}ic(\mathfrak{B}) \arrow[r]                              & \rB \mathscr{A}ut^{\mathrm{br}}(\mathfrak{B}) \arrow[r] & \rB \mathscr{A}ut^{\mathrm{syl}}(\mathcal{Z}_{(2)}(\mathfrak{B})).
\end{tikzcd}
\end{equation}
This motivates the definition of the super Picard space $\mathscr{S}\!\mathscr{P}ic( \fB )$ of $\mathfrak{B}$ as the following pullback:

\begin{equation}\label{eq:enrichedPic}
\begin{tikzcd}[sep=small]
\arrow["\lrcorner"{anchor=center, pos=0.125}, draw=none, from=1-1, to=3-2]
\arrow["\lrcorner"{anchor=center, pos=0.125}, draw=none, from=1-2, to=3-3]
\rB \mathscr{S}\!\mathscr{P}ic(\mathfrak{B}) \arrow[dd] \arrow[r] &  \rB \sAut^{\mathrm{br}}_{\mathbf{2SVect}}(\mathfrak{B}) \arrow[r]\arrow[dd]
& \rB \mathscr{A}ut^{\mathrm{syl}}_{\tSVect
}(F) \arrow[dd]                     \\
& & \\
\rB \mathscr{P}ic(\mathfrak{B}) \arrow[r]                              & \rB \mathscr{A}ut^{\mathrm{br}}(\mathfrak{B}) \arrow[r] & \rB \mathscr{A}ut^{\mathrm{syl}}(\mathcal{Z}_{(2)}(\mathfrak{B})).
\end{tikzcd}
\end{equation}

\noindent From the above discussion, we deduce the following result.

\begin{thm}\label{thm:FGSET}
    Let $\fB$ be a $\mathbf{2SVect}$-central braided fusion 2-category. Then, $\mathbf{2SVect}$-central faithfully graded $G$-crossed braided extensions of $\fB$ are classified by homotopy classes of maps 
    \begin{equation}
        \rB G \rightarrow \rB \mathscr{S}\!\mathscr{P}ic( \fB )\,.
    \end{equation}
\end{thm}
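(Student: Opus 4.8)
The plan is to bootstrap off Theorem \ref{thm:crossedextn} and Lemma \ref{lem:enrichedcrossedbraided}, assembling the data of a $\tSVect$-enriched extension as a homotopy class of maps into an iterated pullback of spaces. First I would recall that, by Theorem \ref{thm:crossedextn}, a faithfully graded $G$-crossed braided extension $\mathfrak{A} = \boxplus_{g\in G}\mathfrak{A}_g$ of $\fB = \mathfrak{A}_e$ is the same datum as a homotopy class of maps $\rho\colon \rB G \to \rB\sPic(\fB)$; in particular, such a $\rho$ equips $\fB$ with a $G$-action, namely the composite $\rB G \xrightarrow{\rho} \rB\sPic(\fB) \to \rB\sAut^{\mathrm{br}}(\fB)$. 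By Lemma \ref{lem:enrichedcrossedbraided}, promoting $\mathfrak{A}$ to a $\tSVect$-enriched $G$-crossed braided fusion 2-category is precisely the further datum of a $G$-equivariant structure on the fixed sylleptic $2$-functor $F\colon \tSVect \to \cZ_{(2)}(\fB)$ underlying the $\tSVect$-enrichment of $\fB$, relative to this $G$-action on $\fB$ and the trivial $G$-action on $\tSVect$. Since we are classifying extensions \emph{of the given} $\tSVect$-enriched $\fB$, the sylleptic functor $F$ is not allowed to vary, so this equivariant structure is genuinely the only new ingredient.

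Next I would invoke the homotopical reformulation carried out in the paragraphs preceding the theorem: such a $G$-equivariant structure on $F$ is exactly a lift of the $G$-action $\rB G \to \rB\sAut^{\mathrm{br}}(\fB)$ along the map $\rB\sAut^{\mathrm{br}}_{\tSVect}(\fB) \to \rB\sAut^{\mathrm{br}}(\fB)$, where $\rB\sAut^{\mathrm{br}}_{\tSVect}(\fB)$ is by definition the pullback of $\rB\sAut^{\mathrm{br}}(\fB) \to \rB\sAut^{\mathrm{syl}}(\cZ_{(2)}(\fB)) \leftarrow \rB\sAut^{\mathrm{syl}}_{\tSVect}(F)$. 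Combining this with the previous paragraph, the complete data of a $\tSVect$-enriched faithfully graded $G$-crossed braided extension of $\fB$ consists of a map $\rho\colon \rB G \to \rB\sPic(\fB)$ together with a lift of its composite into $\rB\sAut^{\mathrm{br}}(\fB)$ through $\rB\sAut^{\mathrm{br}}_{\tSVect}(\fB)$, and the equivalences of such extensions match up with homotopies of these lifts.

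I would then conclude by appealing to the universal property of the pullback square \eqref{eq:enrichedPic} that defines $\rB\mathscr{S}\!\mathscr{P}ic(\fB)$: since its left-hand square exhibits $\rB\mathscr{S}\!\mathscr{P}ic(\fB)$ as the fiber product $\rB\sPic(\fB)\times_{\rB\sAut^{\mathrm{br}}(\fB)}\rB\sAut^{\mathrm{br}}_{\tSVect}(\fB)$ (and the whole rectangle is a pullback because both constituent squares are), a homotopy class of maps $\rB G \to \rB\mathscr{S}\!\mathscr{P}ic(\fB)$ is exactly a map $\rho\colon \rB G \to \rB\sPic(\fB)$, a map $\rB G \to \rB\sAut^{\mathrm{br}}_{\tSVect}(\fB)$, and a homotopy identifying the two induced $G$-actions on $\fB$ inside $\rB\sAut^{\mathrm{br}}(\fB)$ — precisely the data enumerated above. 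The same dictionary applied to $1$- and $2$-morphisms of the relevant mapping spaces matches equivalences (and $2$-isomorphisms) of $\tSVect$-enriched extensions with the corresponding higher homotopies, yielding the asserted classification up to equivalence.

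The main obstacle I expect is not the formal pullback bookkeeping but making precise that the $G$-action figuring in the $\tSVect$-enrichment datum on $\mathfrak{A}_e$ is \emph{the very same} $G$-action on $\fB = \mathfrak{A}_e$ induced by the underlying $G$-crossed braided extension — i.e. the one recorded by $\rB G \xrightarrow{\rho} \rB\sPic(\fB) \to \rB\sAut^{\mathrm{br}}(\fB)$ — so that the two pieces of data genuinely glue over the single node $\rB\sAut^{\mathrm{br}}(\fB)$ rather than over two a priori distinct $G$-actions. Unwinding Lemma \ref{lem:enrichedcrossedbraided}, that is, tracking how the $G$-equivariant sylleptic functor $F$ on $\mathfrak{A}_e$ interacts with the action by conjugation by the invertible module $2$-categories $\mathfrak{A}_g$, is where the coherence content of the statement actually sits; once this identification is verified, the description of the classifying space as $\rB\mathscr{S}\!\mathscr{P}ic(\fB)$ is a formal consequence of iterating pullbacks.
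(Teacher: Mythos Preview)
Your proposal is correct and follows essentially the same route as the paper: the paper's proof is precisely the discussion immediately preceding the theorem, which combines Theorem~\ref{thm:crossedextn} with Lemma~\ref{lem:enrichedcrossedbraided}, reformulates the $G$-equivariant structure on $F$ as a lift through $\rB\sAut^{\mathrm{br}}_{\tSVect}(\fB)$, and then packages both pieces of data via the pullback square~\eqref{eq:enrichedPic} defining $\rB\mathscr{S}\!\mathscr{P}ic(\fB)$. Your closing observation about the $G$-action on $\mathfrak{A}_e$ coinciding with the one induced by the extension is exactly the coherence implicit in the paper's diagram~\eqref{eq:enrichedPic}, and your more explicit invocation of the universal property of the pullback is a welcome spelling-out of what the paper leaves tacit.
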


\subsection{Obstruction to $G$-Gauging}\label{subsection:obstructiongauging}

We now turn our attention to quantifying the obstruction to gauging a $G$-symmetry on a  (3+1)d TO. In physics, gauging a global $G$-symmetry is a two step process consisting of coupling to background fields for the symmetry and summing over them in the path integral. A failure to couple to background fields results in a 't Hooft anomaly. Categorically, this anomaly translates into the obstruction to building a (faithfully graded) $G$-crossed braided extension.
Let us warm up by considering the case of a bosonic TO in (3+1)d equipped with a $G$-action. Mathematically, this corresponds to a nondegenerate braided fusion 2-category $\mathfrak{B}$ equipped with a $G$-action.
It is natural to ask whether there exists a faithfully graded $G$-crossed braided extension $\mathfrak{A} = \boxplus_{g\in G}\mathfrak{A}_g$ of $\mathfrak{B}$ for which the induced action on $\mathfrak{A}_e=\mathfrak{B}$ coincides with the given $G$-action on $\mathfrak{B}$. Homotopically, this is precisely the problem of lifting the map $\rho:\rB G \rightarrow \rB\mathscr{A}ut^{\mathrm{br}}(\mathfrak{B})$ representing the given $G$-action on $\mathfrak{B}$ to a map $\rB G \rightarrow \rB \mathscr{P}ic(\fB)$.
The obstruction to the existence of such lifts can be understood in a systematic way. In order to do so, we start by recalling a fiber sequence. Associated to any fusion $n$-category $\mathbf{C}$, there is a fiber sequence:
\begin{equation}\label{eq:fiberseq}
   \rB  \mathbf{C}^{\times}\longrightarrow  \rB\mathscr{A}ut^{\otimes}(\mathbf{C})\longrightarrow  \rB \mathbf{Bimod}(\mathbf{C})^{\times}\,.
\end{equation}
We outline a proof of this result in \Cref{sec:fiber}, noting that it was announced by Jones-Reutter but has not yet appeared.
Letting now $\mathbf{C}  = \Mod(\fB)$ for $\fB$ a braided fusion 2-category. This gives 
\begin{equation}\label{eq:SES2cat}
     \rB \sPic(\fB)\longrightarrow  \rB\mathscr{A}ut^{\mathrm{br}}(\mathfrak{B})\longrightarrow \rB \mathbf{Bimod}(\Mod(\fB))^{\times}\,.
\end{equation}
from which we see that the obstruction to lifting the map $\rho:\rB G \rightarrow \rB\sAut^{\br}(\fB)$ to a map  $ \rB G \rightarrow \rB \sPic(\fB)$ is given by $\rB \mathbf{Bimod}(\Mod(\fB))^{\times}$. This can be simplified using the fact that 
\begin{equation}\label{eq:Bimodrelation}
    \Bimod(\Mod(\fB)) \simeq \Mod(\cZ(\Mod(\fB))).
\end{equation}
As $\fB$ is nondegenerate, we have that $\mathbf{Mod}(\mathfrak{B})$ is invertible, so that $\mathcal{Z}(\mathbf{Mod}(\mathfrak{B}))\simeq\mathbf{3Vect}$, and 
$$\Bimod(\Mod(\fB))\simeq \mathbf{Mod}(\mathbf{3Vect})=\mathbf{4Vect}.$$
The corresponding space $\mathbf{4Vect}^{\times}$ coincides with the Witt space of nondegenerate braided fusion 1-categories introduced in \cite{DMNO}. We refer to the composite
$$\rB G \xrightarrow{\rho} \rB\sAut^{\br}(\fB)\rightarrow \rB \mathbf{4Vect}^{\times}$$
as the anomaly of $\rho$. Summarizing the above discussion, we have the following result.

\begin{prop}\label{prop:bosonicanomaly}
Let $\mathfrak{B}$ be a nondegenerate braided fusion 2-category equipped with a $G$-action $\rho$. Then, there exists a faithfully graded $G$-crossed braided extension of $\mathfrak{B}$ compatible with $\rho$ if and only if the anomaly of $\rho$ vanishes.
\end{prop}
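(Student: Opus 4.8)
The plan is to recast the statement as an obstruction-theoretic lifting problem governed by the fiber sequence \eqref{eq:SES2cat}. First I would recall from Theorem~\ref{thm:crossedextn} that faithfully graded $G$-crossed braided extensions $\mathfrak{A}=\boxplus_{g\in G}\mathfrak{A}_g$ of $\mathfrak{B}$ are classified by homotopy classes of maps $\rB G\to\rB\sPic(\mathfrak{B})$. The key point to extract from the proof of that theorem is that, under this classification, the $G$-action induced on the trivially graded component $\mathfrak{A}_e=\mathfrak{B}$ is precisely the composite of the classifying map with the canonical map $\rB\sPic(\mathfrak{B})\to\rB\sAut^{\mathrm{br}}(\mathfrak{B})$; indeed, the pullback square \eqref{eq:DavydovNikshychpullback} defining $\sPic(\mathfrak{B})$ sits over $\rB\sAut^{\mathrm{br}}(\mathfrak{B})$, and a $G$-crossed braiding on $\mathfrak{A}$ compatible with the braiding on $\mathfrak{B}$ was seen there to be exactly an identification of the two induced $G$-actions on $\cZ(\mathfrak{B})$. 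Consequently, a faithfully graded $G$-crossed braided extension of $\mathfrak{B}$ compatible with $\rho$ exists if and only if the map $\rho\colon\rB G\to\rB\sAut^{\mathrm{br}}(\mathfrak{B})$ admits a lift through $\rB\sPic(\mathfrak{B})\to\rB\sAut^{\mathrm{br}}(\mathfrak{B})$.

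With this reduction, I would invoke the fiber sequence \eqref{eq:SES2cat},
\[
\rB\sPic(\mathfrak{B})\longrightarrow\rB\sAut^{\mathrm{br}}(\mathfrak{B})\longrightarrow\rB\mathbf{Bimod}(\Mod(\mathfrak{B}))^{\times},
\]
an instance of \eqref{eq:fiberseq} with $\mathbf{C}=\Mod(\mathfrak{B})$. Since $\rB\sPic(\mathfrak{B})$ is the homotopy fiber of the second map, the pointed map $\rho$ lifts through it if and only if the composite $\rB G\xrightarrow{\rho}\rB\sAut^{\mathrm{br}}(\mathfrak{B})\to\rB\mathbf{Bimod}(\Mod(\mathfrak{B}))^{\times}$ is null-homotopic. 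It then remains to identify this composite with the anomaly of $\rho$: by \eqref{eq:Bimodrelation} one has $\mathbf{Bimod}(\Mod(\mathfrak{B}))\simeq\Mod(\cZ(\Mod(\mathfrak{B})))$, and since $\mathfrak{B}$ is nondegenerate this is $\Mod(\mathbf{3Vect})=\mathbf{4Vect}$, so that $\mathbf{Bimod}(\Mod(\mathfrak{B}))^{\times}\simeq\mathbf{4Vect}^{\times}$ is the Witt space and the displayed composite is, by definition, the anomaly of $\rho$. Assembling the two halves yields both implications at once: if the anomaly vanishes the lift exists, and Theorem~\ref{thm:crossedextn} turns it into the sought extension; conversely any compatible extension supplies such a lift, hence a null-homotopy of the anomaly.

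I expect the only genuinely delicate step to be the first one, namely verifying that ``compatible with $\rho$'' is \emph{exactly} the lifting condition above, i.e.\ that the action functor induced on the trivially graded component of $\mathfrak{A}$ is recovered by postcomposition with $\rB\sPic(\mathfrak{B})\to\rB\sAut^{\mathrm{br}}(\mathfrak{B})$; this amounts to a careful bookkeeping inside the proof of Theorem~\ref{thm:crossedextn} and the square \eqref{eq:DavydovNikshychpullback}. Everything else is formal once the fiber sequence \eqref{eq:fiberseq} and the nondegeneracy simplification are granted. One further minor point to watch is that the obstruction is the \emph{pointed} homotopy class of the composite into $\rB\mathbf{Bimod}(\Mod(\mathfrak{B}))^{\times}$; since $\rB G$ is connected and all spaces here arise as deloopings of group-like monoids, free and pointed null-homotopies coincide, so this matches the notion of the anomaly ``vanishing'' used throughout \S\ref{subsection:obstructiongauging}.
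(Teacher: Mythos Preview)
Your proposal is correct and follows essentially the same approach as the paper: the paper's argument, given in the discussion immediately preceding the proposition, likewise reduces the question to lifting $\rho$ along $\rB\sPic(\mathfrak{B})\to\rB\sAut^{\mathrm{br}}(\mathfrak{B})$, invokes the fiber sequence \eqref{eq:SES2cat}, and then uses \eqref{eq:Bimodrelation} together with nondegeneracy to identify the obstruction target with $\mathbf{4Vect}^{\times}$. Your write-up is somewhat more explicit in justifying why ``compatible with $\rho$'' is exactly the lifting condition (via Theorem~\ref{thm:crossedextn} and the square \eqref{eq:DavydovNikshychpullback}), but the substance is identical.
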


We now turn our attention to quantifying the obstruction to gauging a $G$-symmetry on a fermionic TO in (3+1)d.
More precisely, given a fermionic TO corresponding to a nondegenerate $\mathbf{2SVect}$-central braided fusion 2-category $\mathfrak{B}$ equipped with a $G$-action, it is natural to ask whether there exists a $\mathbf{2SVect}$-central faithfully graded $G$-crossed braided extension $\mathfrak{A} = \boxplus_{g\in G}\mathfrak{A}_g$ of $\mathfrak{B}$ for which the induced action on $\mathfrak{A}_e=\mathfrak{B}$ coincides with the given $G$-action on $\mathfrak{B}$. Homotopically, this is precisely the problem of lifting the map $\rB G \rightarrow \rB\mathscr{A}ut^{\mathrm{br}}_{\mathbf{2SVect}}(\mathfrak{B})$ representing the given $G$-action on $\mathfrak{B}$ to a map $\rB G \rightarrow \rB \mathscr{S}\!\mathscr{P}ic(\fB)$ as in the following diagram 
\begin{equation}\label{eq:lifttoPic}
    \begin{tikzcd}
        & \rB \mathscr{S}\!\mathscr{P}ic( \fB )\arrow[d] \\
        \rB G \arrow[r] \arrow[ru,dotted]&  \rB \mathscr{A}ut^{\mathrm{br}}_{\mathbf{2SVect}}(\mathfrak{B})\,.
    \end{tikzcd}
\end{equation}

In order to do so, we will make use of the following fiber sequence, which is a $\mathbf{2SVect}$-central version of the fiber sequence presented above in Equation \eqref{eq:SES2cat}. In particular, this sequence involves the space $$\mathbf{4SVect}^{\times} := \mathbf{Mod}(\mathbf{Mod}(\mathbf{2SVect}))^{\times},$$ whose objects are Morita invertible $\mathbf{2SVect}$-central fusion 2-categories. Alternatively, thanks to \cite{D9}, one may also think of $\mathbf{4SVect}^{\times}$ as the space of Witt equivalence classes of nondegenerate $\mathbf{SVect}$-central braided fusion 1-categories as introduced in \cite{DNO}.

\begin{prop}\label{prop:superfibersequence}
Let $\mathfrak{B}$ be a nondegenerate $\mathbf{2SVect}$-central braided fusion 2-category. There is a fiber sequence
\begin{equation}\label{eqn:superfibersequence}
{\mathrm{B}\mathscr{S}\!\mathscr{P}ic(\mathfrak{B})}\rightarrow{\rB \mathscr{A}ut^{\mathrm{br}}_{\mathbf{2SVect}}(\mathfrak{B})}\rightarrow\rB \mathbf{4SVect}^{\times}.
\end{equation}
\end{prop}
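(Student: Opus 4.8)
The plan is to obtain the asserted fiber sequence by pasting the non-enriched fiber sequence of Equation~\eqref{eq:SES2cat} onto the defining homotopy pullback square of the super-Picard space, and then to identify the resulting base using the nondegeneracy hypothesis. Recall first that \eqref{eq:SES2cat} gives, for any braided fusion 2-category $\fB$, a fiber sequence
\[
\rB\sPic(\fB)\longrightarrow\rB\sAut^{\mathrm{br}}(\fB)\longrightarrow\rB\Bimod(\Mod(\fB))^{\times},
\]
equivalently a homotopy pullback square whose lower-left vertex is a point. On the other hand, the left-hand square of \eqref{eq:enrichedPic} is, by construction, a homotopy pullback exhibiting $\rB\mathscr{S}\!\mathscr{P}ic(\fB)$ as the fibered product $\rB\sPic(\fB)\times_{\rB\sAut^{\mathrm{br}}(\fB)}\rB\sAut^{\mathrm{br}}_{\tSVect}(\fB)$. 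Stacking these two squares along the common edge $\rB\sPic(\fB)\to\rB\sAut^{\mathrm{br}}(\fB)$ and invoking the pasting law for homotopy pullbacks, the outer rectangle is again a homotopy pullback, which says precisely that
\[
\rB\mathscr{S}\!\mathscr{P}ic(\fB)\longrightarrow\rB\sAut^{\mathrm{br}}_{\tSVect}(\fB)\longrightarrow\rB\Bimod(\Mod(\fB))^{\times}
\]
is a fiber sequence, the first map being the canonical one in \eqref{eq:enrichedPic}.

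It then remains to identify the base $\Bimod(\Mod(\fB))^{\times}$ with $\mathbf{4SVect}^{\times}$, which is the $\tSVect$-enriched analogue of the simplification carried out in the proof of Proposition~\ref{prop:bosonicanomaly}. Using Equation~\eqref{eq:Bimodrelation}, $\Bimod(\Mod(\fB))\simeq\Mod(\cZ(\Mod(\fB)))$; since $\Omega\cZ(\Mod(\fB))\simeq\cZ_{(2)}(\fB)$ and the nondegeneracy of the $\tSVect$-enrichment forces $\cZ_{(2)}(\fB)\simeq\tSVect$, while the remaining homotopy is pinned down by an $S$-matrix argument (following \cite{JFR:Smatrix}, as in the proof of Lemma~\ref{eq:piPicbr}, using that Drinfeld centers are nondegenerate), one is led to $\Bimod(\Mod(\fB))^{\times}\simeq\mathbf{4SVect}^{\times}$, with $\mathbf{4SVect}^{\times}$ viewed as the super-Witt space of nondegenerate $\mathbf{SVect}$-enriched braided fusion 1-categories in the sense of \cite{DNO,D9}. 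Substituting this identification into the fiber sequence above completes the proof.

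I expect this last identification to be the main obstacle. In the bosonic setting of Proposition~\ref{prop:bosonicanomaly}, nondegeneracy makes $\cZ(\Mod(\fB))$ the \emph{unit} fusion 3-category $\mathbf{3Vect}$, so that $\Bimod(\mathbf{3Vect})\simeq\Mod(\mathbf{3Vect})=\mathbf{4Vect}$ is immediate. Here, by contrast, $\cZ(\Mod(\fB))$ is a nondegenerate braided fusion 3-category whose loop 2-category is $\tSVect$ but which is not the unit, so one genuinely has to argue --- presumably by exhibiting $\fB$ as a finite gauge theory via Theorem~\ref{thm:fermionicTQFT}, hence Witt-trivial at the 2-categorical level --- that $\Mod(\fB)$ is Morita equivalent to $\mathbf{3SVect}=\Mod(\tSVect)$, and then that the space of invertible bimodules over the latter is exactly $\mathbf{4SVect}^{\times}$ as described in \cite{D9}. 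The remaining content --- the pasting of pullback squares and the unwinding of \eqref{eq:enrichedPic} --- is purely formal.
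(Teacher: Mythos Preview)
Your pasting argument in the first paragraph is correct and does give a fiber sequence
\[
\rB\mathscr{S}\!\mathscr{P}ic(\fB)\longrightarrow\rB\sAut^{\mathrm{br}}_{\tSVect}(\fB)\longrightarrow\rB\Bimod(\Mod(\fB))^{\times}.
\]
The problem is the second step: the identification $\Bimod(\Mod(\fB))^{\times}\simeq\mathbf{4SVect}^{\times}$ is simply false. Using \eqref{eq:Bimodrelation} together with Lemma~\ref{lem:SH6} one has $\Bimod(\Mod(\fB))\simeq\Mod(\cZ(\mathbf{3SVect}))\simeq\Bimod(\mathbf{3SVect})$, whereas by definition $\mathbf{4SVect}=\Mod(\mathbf{3SVect})$. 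These two spaces of invertibles disagree already on $\pi_1$: one has $\pi_1\big(\Bimod(\mathbf{3SVect})^{\times}\big)=\pi_0\big(\cZ(\mathbf{3SVect})^{\times}\big)\cong\Z/2$ by the $S$-matrix argument you allude to, while $\pi_1(\mathbf{4SVect}^{\times})=0$. Your proposed fix in the last paragraph runs into exactly this: Morita equivalence of $\Mod(\fB)$ with $\mathbf{3SVect}$ identifies their invertible \emph{bimodules}, but invertible $\mathbf{3SVect}$-bimodules are not the same as invertible $\mathbf{3SVect}$-modules.

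The paper does not attempt this identification. Instead, it exploits the nondegeneracy hypothesis to recognize that the defining pullback for $\rB\sAut^{\mathrm{br}}_{\tSVect}(\fB)$ reduces to the fiber of a map $\rB\sAut^{\mathrm{br}}(\fB)\to\rB\sAut^{\mathrm{syl}}(\cZ_{(2)}(\fB))\simeq\rB^2\Z/2$. The crucial point is then that this map \emph{factors through} $\rB\Bimod(\Mod(\fB))^{\times}$ --- this is the commuting square checked at the very end of the paper's proof --- and that the fiber of $\rB\Bimod(\Mod(\fB))^{\times}\to\rB^2\Z/2$ is precisely $\rB\mathbf{4SVect}^{\times}$ (Lemma~\ref{lem:piBimod}). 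Pasting pullbacks then produces the diagram \eqref{eq:pullbackfibersequences}, whose middle square exhibits $\rB\sAut^{\mathrm{br}}_{\tSVect}(\fB)$ as a pullback over $\rB\mathbf{4SVect}^{\times}$ rather than over $\rB\Bimod(\Mod(\fB))^{\times}$. In short, the extra $\Z/2$ you are missing is killed not by massaging the base of your fiber sequence, but by recognizing that the map from $\rB\sAut^{\mathrm{br}}_{\tSVect}(\fB)$ already lands in the smaller space $\rB\mathbf{4SVect}^{\times}\subset\rB\Bimod(\Mod(\fB))^{\times}$.
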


The fiber sequence above follows a pattern similar to \cite[Section 5]{xu2025etale}, where super fusion 1-categories are considered.

As a preliminary result, we begin by identifying the space $$\mathbf{Bimod}(\Mod(\fB))^{\times}\simeq \mathbf{Mod}(\mathcal{Z}(\mathbf{Mod}(\fB)))^{\times}.$$

\begin{lem}\label{lem:SH6}
Let $\fB$ be a nondegenerate $\mathbf{2SVect}$-central braided fusion 2-category, there is an equivalence of braided fusion 3-categories $$\cZ(\Mod(\fB))\simeq \cZ(\mathbf{3SVect}).$$
\end{lem}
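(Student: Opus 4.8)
The plan is to deduce the equivalence from a Morita equivalence of fusion $3$-categories $\Mod(\fB)\simeq\mathbf{3SVect}=\Mod(\tSVect)$; since Morita equivalent fusion higher categories have equivalent Drinfeld centers (see \cite{JF}), this gives the asserted braided equivalence $\cZ(\Mod(\fB))\simeq\cZ(\mathbf{3SVect})$. The first step is to extract a central functor from the enrichment datum. By hypothesis $\fB$ carries a sylleptic $2$-functor $\tSVect\to\cZ_{(2)}(\fB)$ which, by nondegeneracy, is an equivalence. Post-composing with the forgetful $2$-functor $\cZ_{(2)}(\fB)\to\fB$ of Definition \ref{def:syllepticcenter} yields a braided $2$-functor $\iota\colon\tSVect\to\fB$ that factors through the sylleptic center, i.e.\ is central in the strongest sense. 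Applying $\Mod(-)$ produces a monoidal $2$-functor $\Mod(\iota)\colon\mathbf{3SVect}\to\Mod(\fB)$ exhibiting $\Mod(\fB)$ as a $\mathbf{3SVect}$-central fusion $3$-category, equivalently a $\mathbf{3SVect}$-enriched fusion $3$-category in the sense of \cite{D9}.

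The heart of the argument is to promote this to a Morita equivalence, i.e.\ to show that $\fB$ and $\tSVect$ have the same super-Witt class. Here I would invoke the refined form of \cite[Corollary V.4]{JF} recalled at the end of \S\ref{subsection:perlimFTO}: every nondegenerate $\tSVect$-enriched braided fusion $2$-category $\fB$ arises as the centralizer, in the sense of \cite[Definition 4.1.1]{xu2024higher}, of a braided $2$-functor $\tSVect\to\cZ(\tSVect^\varpi_G)$ for some finite group $G$ and $\varpi\in\SH^4(\rB G)$. Using the identification of the centralizer of a central subcategory inside a Drinfeld center with the relative Drinfeld center, and the fact that the relative center of $\fB$ over $\tSVect$ is $\tSVect$ itself (which is exactly the nondegeneracy condition $\cZ_{(2)}(\fB)\simeq\tSVect$), one obtains that $\Mod(\fB)$ is invertible, \emph{relative to} $\mathbf{3SVect}$, in the Morita $4$-category of $\mathbf{3SVect}$-module fusion $3$-categories. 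Concretely this reduces to checking that the "group-theoretical'' fermionic datum $\tSVect^{\varpi}_G$ is Morita trivial, i.e.\ Morita equivalent to $\tSVect$ after passing to $\Mod$, which follows from the classification of Morita equivalence classes of fermionic fusion $2$-categories in \cite{D9} (pointed/group-theoretical inputs are Morita trivial). As a consistency check one verifies on loop categories that both sides have the expected invariant, using $\Omega\cZ(\Mod(\fB))\simeq\cZ_{(2)}(\fB)\simeq\tSVect\simeq\cZ_{(2)}(\tSVect)\simeq\Omega\cZ(\mathbf{3SVect})$ together with the $S$-matrix computation of $\pi_0\bigl(\cZ(\mathbf{3SVect})^\times\bigr)\cong\Z/2$ used in Lemma \ref{eq:piPicbr}.

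The main obstacle I anticipate is the bookkeeping connecting the centralizer presentation $\fB=Z_{\cZ(\tSVect^\varpi_G)}(\tSVect)$ to a statement purely about $\Mod(\fB)$: one must transport the "double centralizer / relative-center'' identities from the braided fusion $2$-category level (as developed in \cite{xu2024higher}) to the fusion $3$-category level, and keep track of the fact that $\tSVect$ is only \emph{sylleptic}, so that all constructions must be carried out relative to $\tSVect$ rather than over $\mathbf{2Vect}$. An alternative, if one wishes to avoid the centralizer presentation, is to argue directly that nondegenerate $\tSVect$-enriched braided fusion $3$-categories with loop $2$-category $\tSVect$ are unique up to equivalence — a $\tSVect$-enriched refinement of the $(4+1)$d classification of \cite{JFY} — and identify both $\cZ(\Mod(\fB))$ and $\cZ(\mathbf{3SVect})$ as that unique object; but establishing such a uniqueness statement is itself the difficult point and does not obviously appear in the literature, so the Morita route above is the one I would pursue.
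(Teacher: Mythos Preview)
Your overall strategy coincides with the paper's: invoke the centralizer presentation from \cite[Corollary V.4]{JF}, deduce a Morita equivalence $\Mod(\fB)\simeq\mathbf{3SVect}$, and conclude by Morita invariance of Drinfeld centers. The paper's proof is a single clean step at the point where you take a detour: the centralizer presentation $\fB\simeq Z_{\cZ(\tSVect^\varpi_G)}(\tSVect)$ says precisely that $\fA:=\cZ(\tSVect^\varpi_G)$ is a \emph{minimal nondegenerate extension} of $\fB$, and existence of an MNE immediately yields the Morita equivalence $\Mod(\fB)\simeq\Mod(\cZ_{(2)}(\fB))=\Mod(\tSVect)=\mathbf{3SVect}$ (with $\Mod(\fA)$ serving as the invertible bimodule).

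Your ``concrete reduction'' contains a genuine error. You claim that the Morita equivalence reduces to $\tSVect^\varpi_G$ being Morita equivalent to $\tSVect$, and that this follows from \cite{D9} because ``pointed/group-theoretical inputs are Morita trivial.'' This is false: Morita equivalent fusion 2-categories have equivalent Drinfeld centers, but $\cZ(\tSVect^\varpi_G)$ depends nontrivially on $G$ and $\varpi$ (cf.\ Theorem~\ref{FermionicNondegen}), so $\tSVect^\varpi_G$ is \emph{not} Morita equivalent to $\tSVect$ when $G$ is nontrivial. The role of $\tSVect^\varpi_G$ here is not to be Morita trivial itself, but to furnish the nondegenerate braided fusion 2-category $\fA=\cZ(\tSVect^\varpi_G)$ sitting over $\fB$; it is this $\fA$, via the MNE mechanism, that witnesses the desired Morita equivalence one categorical level up. Once you replace your detour with this observation, the argument is complete and matches the paper's.
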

\begin{proof}
This follows from the fact, proven in \cite[Corollary V.4]{JF}, that every nondegenerate $\mathbf{2SVect}$-central braided fusion 2-category arises as the centralizer of $\tSVect$ in a nondegenerate braided fusion 2-category. It thence follows that $\fB$ has a minimal nondegenerate extension, so that there is a Morita equivalence between $\Mod(\fB)$ and $\mathbf{3SVect}$, and consequently an equivalence of braided fusion 3-categories $\cZ(\Mod(\fB))\simeq \cZ(\mathbf{3SVect})$ between their Drinfeld centers.
\end{proof}

The last result reduces our efforts to understanding the braided fusion 3-category $\mathcal{Z}(\mathbf{3SVect})$. We will need the following observation related to its structure.

\begin{lem}\label{lem:preenrichedfiebrsequence}
    There is a fiber sequence
    $$\rB^2\mathbf{3SVect}^{\times}\rightarrow \rB^2\mathcal{Z}(\mathbf{3SVect})^{\times}\rightarrow\rB\sAut^{\mathrm{br}}(\mathbf{3SVect})\simeq \rB^2\mathbb{Z}/2$$
    classifying the canonical action of $\sAut^{\mathrm{br}}(\mathbf{3SVect})$ on $\mathbf{3SVect}$.
\end{lem}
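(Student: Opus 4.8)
The plan is to recognise the asserted fiber sequence as the one already recorded in \eqref{eq:actionSH5}, after unwinding the relevant definitions, and then to identify its base with $\rB\sAut^{\br}(\mathbf{3SVect})$.

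First I would handle the left and middle terms. Recall $\mathbf{3SVect}=\Mod(\tSVect)$, so by the definition \eqref{eq:picbrequiv} of the braided Picard space, $\cZ(\mathbf{3SVect})^{\times}=\cZ(\Mod(\tSVect))^{\times}=\sPic^{\br}(\tSVect)$. This space has $\pi_{0}\cong\Z/2$ by \Cref{eq:piPicbr}, and its loop space is $\big(\Omega\cZ(\mathbf{3SVect})\big)^{\times}=\cZ_{(2)}(\tSVect)^{\times}=\tSVect^{\times}$ (using $\cZ_{(2)}(\tSVect)\cong\tSVect$, as $\tSVect$ is symmetric). Hence the basepoint component of $\cZ(\mathbf{3SVect})^{\times}$ is connected with loop space $\tSVect^{\times}$, i.e.\ a copy of $\rB\tSVect^{\times}$. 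On the other hand $\mathbf{3SVect}^{\times}=\Mod(\tSVect)^{\times}=\sPic(\tSVect)$ is connected by \Cref{lem:PicBFSF2C} and likewise has loop space $\big(\Omega\Mod(\tSVect)\big)^{\times}=\tSVect^{\times}$, so $\mathbf{3SVect}^{\times}\simeq\rB\tSVect^{\times}$ as well, and the canonical braided functor $\Mod(\tSVect)\to\cZ(\Mod(\tSVect))$ identifies $\mathbf{3SVect}^{\times}$ with the basepoint component of $\cZ(\mathbf{3SVect})^{\times}$. Splitting off this component gives a fiber sequence $\mathbf{3SVect}^{\times}\to\cZ(\mathbf{3SVect})^{\times}\to\Z/2$, and delooping twice (legitimate as all spaces involved are at least $E_{2}$) yields
\[
\rB^{2}\mathbf{3SVect}^{\times}\longrightarrow\rB^{2}\cZ(\mathbf{3SVect})^{\times}\longrightarrow\rB^{2}\Z/2 .
\]
Since $\rB^{2}\mathbf{3SVect}^{\times}\simeq\rB^{3}\tSVect^{\times}$ is the space representing $\SH^{5}$, this is precisely \eqref{eq:actionSH5}.

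Next I would identify the base $\rB^{2}\Z/2$ with $\rB\sAut^{\br}(\mathbf{3SVect})$ and check that the sequence classifies the canonical action. Both follow from the functoriality of $\Mod(-)$ and $\cZ(-)$, which makes the whole construction natural in the braided fusion $3$-category $\mathbf{3SVect}$: the base is therefore $\rB\sAut^{\br}(\mathbf{3SVect})$ and the action the tautological one of $\sAut^{\br}(\mathbf{3SVect})$ on $\mathbf{3SVect}$. Concretely, $\sAut^{\br}(\mathbf{3SVect})\simeq\rB\Z/2$: it is connected, since $\mathbf{3SVect}$ is symmetric, $\pi_{0}(\mathbf{3SVect}^{\times})=\ast$ forbids an autoequivalence given by tensoring with a nontrivial invertible object, and $\Omega\mathbf{3SVect}=\tSVect$ carries no nontrivial braided autoequivalence; and its only nontrivial homotopy is $\pi_{1}\cong\Z/2$, generated by the fermion-parity transformation $(-1)^{F}$. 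The induced action of this $\Z/2$ on $\rB^{2}\mathbf{3SVect}^{\times}$ is the $k$-invariant of the fiber sequence, and it is nontrivial precisely because $\cZ(\mathbf{3SVect})$ is nondegenerate — this is the content of \eqref{eq:actionSH5} together with the discussion preceding it. (If one prefers not to analyse $\sAut^{\br}(\mathbf{3SVect})$ directly, the long exact sequence of the fiber sequence above, combined with the homotopy groups of $\sPic^{\br}(\tSVect)$ from \Cref{eq:piPicbr}, already forces the base to be $\rB^{2}\Z/2$.)

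The step I expect to be the main obstacle is making the last point rigorous, i.e.\ showing that the extension $\mathbf{3SVect}^{\times}\to\cZ(\mathbf{3SVect})^{\times}\to\Z/2$ does not split and that its $k$-invariant is exactly the fermion-parity action. This is where nondegeneracy of $\cZ(\mathbf{3SVect})$ — equivalently the $S$-matrix argument of \cite{JFR:Smatrix} already invoked in the proof of \Cref{eq:piPicbr} — does the real work; the remainder is bookkeeping of homotopy groups and unwinding of definitions.
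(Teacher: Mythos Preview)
Your argument is correct and follows essentially the same path as the paper: both identify $\Omega\cZ(\mathbf{3SVect})\simeq\tSVect$, obtain $\pi_0\cong\Z/2$ via the $S$-matrix/nondegeneracy argument, and then invoke nondegeneracy of $\cZ(\mathbf{3SVect})$ to conclude that the resulting $\Z/2$ acts nontrivially on the fiber. The only real difference is packaging --- you route through the already-established identifications $\cZ(\mathbf{3SVect})^\times=\sPic^{\br}(\tSVect)$, \Cref{eq:piPicbr}, \Cref{lem:PicBFSF2C}, and the fiber sequence \eqref{eq:actionSH5}, whereas the paper re-derives the relevant homotopy groups directly and simply asserts $\rB\sAut^{\br}(\mathbf{3SVect})\simeq\rB^2\Z/2$; your added discussion of why $\sAut^{\br}(\mathbf{3SVect})$ is connected with $\pi_1\cong\Z/2$ is a useful supplement that the paper leaves implicit.
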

\begin{proof}
It follows from \Cref{lem:piPicbr} that $\pi_0(\mathcal{Z}(\mathbf{3SVect}))=\pi_0(\sPic^{\br}(\mathbf{2SVect})\cong\mathbb{Z}/2$.
More precisely, we have that $\mathcal{Z}(\mathbf{3SVect})$ contains an invertible object $G$ of order $2$.
The induced map of spaces $\rB^2\mathcal{Z}(\mathbf{3SVect})^{\times}\rightarrow\rB^2\mathbb{Z}/2$ must record the nontrivial action of $\sAut^{\mathrm{br}}(\mathbf{3SVect})\simeq \rB\mathbb{Z}/2$. Namely, otherwise the braided fusion 3-category $\mathcal{Z}(\mathbf{3SVect})$ would fail to be nondegenerate.
\end{proof}

We will also need the following refinement of the previous lemma.

\begin{lem}\label{lem:piBimod}
    There is a fiber sequence
    $$\rB\mathbf{Mod}(\mathbf{3SVect})^{\times}\rightarrow \rB\mathbf{Mod}(\mathcal{Z}(\mathbf{3SVect}))^{\times}\rightarrow\rB\sAut^{\mathrm{br}}(\mathbf{3SVect})\simeq \rB^2\mathbb{Z}/2.$$
\end{lem}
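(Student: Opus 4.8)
The plan is to deloop the fiber sequence of \Cref{lem:preenrichedfiebrsequence} by one level, using the fact that $\mathbf{Mod}(-)$ is a symmetric monoidal functor from (braided) fusion $n$-categories to (braided) fusion $(n+1)$-categories, and hence induces a map of group-like topological monoids which, applied to a fiber sequence of \emph{invertible} objects, again yields a fiber sequence after one further delooping. Concretely, applying $\mathbf{Mod}(-)$ to the fiber sequence
$$\rB^2\mathbf{3SVect}^{\times}\rightarrow \rB^2\mathcal{Z}(\mathbf{3SVect})^{\times}\rightarrow\rB^2\mathbb{Z}/2$$
should produce the claimed sequence
$$\rB\mathbf{Mod}(\mathbf{3SVect})^{\times}\rightarrow \rB\mathbf{Mod}(\mathcal{Z}(\mathbf{3SVect}))^{\times}\rightarrow\rB^2\mathbb{Z}/2,$$
since $\mathbf{Mod}$ takes $\rB^k(-)^\times$ of an $n$-category to $\rB^{k-1}(-)^\times$ of its module $(n+1)$-category (it implements a single delooping, cf.\ the conventions of \cite{Gaiotto:2019xmp,JF}), and $\sAut^{\mathrm{br}}(\mathbf{3SVect})\simeq\rB^2\mathbb{Z}/2$ is unchanged because $\mathbf{Mod}$ is fully faithful on invertible bimodules here (every invertible $\mathbf{3SVect}$-bimodule 3-category is quasi-trivial, so $\mathbf{Mod}(\mathbf{3SVect})^{\times}$ has the same higher homotopy as $\sAut^{\mathrm{br}}(\mathbf{3SVect})$ shifted appropriately). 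More precisely, I would argue that $\mathbf{Mod}(-)$ sends the base $\rB^2\mathbb{Z}/2 \simeq \rB\sAut^{\mathrm{br}}(\mathbf{3SVect})$ to $\rB\sAut^{\mathrm{br}}(\mathbf{3SVect})$ again, using that the Morita 4-category in which these modules live has the property that autoequivalences of $\mathbf{3SVect}$ as an object correspond to invertible bimodules.

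The key steps, in order: (i) recall that $\mathbf{Mod}(-)$, being symmetric monoidal, sends fiber sequences of spaces of invertible objects to fiber sequences, shifting the delooping degree down by one on the nose (this is the statement underlying \Cref{eq:picbrequiv} and used throughout \S\ref{subsection:FBF2C}); (ii) identify the image of each term: $\mathbf{Mod}(\rB^2\mathbf{3SVect}^\times)\simeq\rB\mathbf{Mod}(\mathbf{3SVect})^\times$, $\mathbf{Mod}(\rB^2\mathcal{Z}(\mathbf{3SVect})^\times)\simeq\rB\mathbf{Mod}(\mathcal{Z}(\mathbf{3SVect}))^\times$, and $\mathbf{Mod}(\rB^2\mathbb{Z}/2)\simeq\rB\sAut^{\mathrm{br}}(\mathbf{3SVect})\simeq\rB^2\mathbb{Z}/2$, where the last identification uses \Cref{lem:preenrichedfiebrsequence} together with the quasi-triviality of invertible $\mathbf{3SVect}$-bimodules; (iii) check that the resulting map of fiber sequences is compatible with the action, i.e.\ the map to $\rB^2\mathbb{Z}/2$ still records the nontrivial action of $\sAut^{\mathrm{br}}(\mathbf{3SVect})$ — this follows since $\mathbf{Mod}(-)$ is functorial and the action before applying $\mathbf{Mod}$ was already nontrivial.

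The main obstacle I expect is step (ii), specifically verifying that $\mathbf{Mod}(\rB^2\mathbb{Z}/2)$ is exactly $\rB^2\mathbb{Z}/2$ rather than something larger — one must rule out the appearance of extra homotopy coming from the $\mathbb{C}^\times$ layers of $\mathbf{3SVect}^\times$ leaking into the quotient. The cleanest way around this is probably not to compute $\mathbf{Mod}(\rB^2\mathbb{Z}/2)$ directly, but to observe that $\rB^2\mathbb{Z}/2 \simeq \rB\sAut^{\mathrm{br}}(\mathbf{3SVect})$ is the space of \emph{autoequivalences} of $\mathbf{3SVect}$, that these are detected by the underlying object-level action on $\pi_0$, and that passing to module $(n+1)$-categories is fully faithful on this locus by the quasi-triviality result invoked in \Cref{lem:preenrichedfiebrsequence}; equivalently, one notes that the composite $\rB\mathbf{Mod}(\mathcal{Z}(\mathbf{3SVect}))^\times \to \rB\mathbf{Mod}(\mathbf{3SVect})^\times$-cofiber is forced to be $\rB^2\mathbb{Z}/2$ because the fiber of the previous sequence was, and $\mathbf{Mod}$ preserves the relevant connectivity. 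Once this identification is secured, the rest of the argument is formal delooping.
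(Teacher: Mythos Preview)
Your proposal has a genuine gap at the step you yourself flag. The operation ``apply $\mathbf{Mod}(-)$ to the fiber sequence'' does not work as described: the identifications in (ii) are not equivalences. Concretely, $\rB\mathbf{Mod}(\mathbf{3SVect})^{\times}$ is \emph{strictly larger} than $\rB^2\mathbf{3SVect}^{\times}$, because $\pi_1$ of the former equals $\pi_0(\mathbf{Mod}(\mathbf{3SVect})^{\times}) = s\mathcal{W}$, the super-Witt group, which is nontrivial; likewise $\rB\mathbf{Mod}(\mathcal{Z}(\mathbf{3SVect}))^{\times}$ acquires an extra $\pi_1 = \pi_0(\mathbf{Bimod}(\mathbf{3SVect})^{\times})$. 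So the passage from the fiber sequence of \Cref{lem:preenrichedfiebrsequence} to the one claimed is not ``formal delooping'': the entire content of the lemma lives at this new $\pi_0$, and one must show that the natural map $\mathbf{Mod}(\mathbf{3SVect})^{\times}\rightarrow \mathbf{Bimod}(\mathbf{3SVect})^{\times}$ is a bijection on $\pi_0$, so that the cofiber remains $\rB^2\mathbb{Z}/2$ rather than something with an extra $\pi_1$.

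Your suggested workaround, invoking quasi-triviality of invertible $\mathbf{3SVect}$-bimodules, is not established in the paper and is not what the paper uses. The paper's argument is a direct computation at this $\pi_0$: an invertible $\mathbf{3SVect}$-bimodule is, via \cite{D9}, a Witt class of a braided fusion 1-category $\mathcal{B}$ equipped with \emph{two} symmetric monoidal functors $\mathbf{SVect}\rightarrow\mathcal{Z}_{(2)}(\mathcal{B})$; invertibility forces both to be equivalences, and since $\mathbf{SVect}$ has no nontrivial symmetric monoidal autoequivalences the two functors coincide. This is precisely the data of an invertible $\mathbf{3SVect}$-module, giving the required $\pi_0$-bijection. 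Once that is in hand, \Cref{lem:preenrichedfiebrsequence} supplies the higher homotopy and the fiber sequence follows.
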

\begin{proof}
This will follow from Lemma \ref{lem:preenrichedfiebrsequence} once we have shown that the map $$\mathbf{Mod}(\mathbf{3SVect})^{\times}\rightarrow \mathbf{Bimod}(\mathbf{3SVect})^{\times}$$ induces a bijection on $\pi_0$.
In order to see this, note that objects of $\mathbf{Bimod}(\mathbf{3SVect})$ can be identified with Morita equivalence classes of multifusion 2-categories $\mathfrak{C}$ equipped with two (potentially distinct) braided monoidal 2-functor $\mathbf{2SVect}\rightarrow \mathcal{Z}(\mathfrak{C})$.
Thanks to \cite{D9}, these correspond precisely to Witt equivalence classes of braided multifusion 1-categories $\mathcal{B}$ equipped with two (potentially distinct) symmetric monoidal functor $\mathbf{SVect}\rightarrow\mathcal{Z}_{(2)}(\mathcal{B})$. Moreover, such braided multifusion 1-categories yield invertible bimodules, i.e.\ objects of $\mathbf{Bimod}(\mathbf{3SVect})^{\times}$, if and only if the two symmetric monoidal functors $\mathbf{SVect}\rightarrow\mathcal{Z}_{(2)}(\mathcal{B})$ are equivalences. But the group of autoequivalences of the symmetric fusion 1-category $\mathbf{SVect}$ is trivial. This concludes the proof as the objects of $\mathbf{Mod}(\mathbf{3SVect})^{\times}$ correspond to Witt equivalence classes of braided fusion 1-categories $\mathcal{A}$ equipped with an equivalence $\mathbf{SVect}\simeq\mathcal{Z}_{(2)}(\mathcal{A})$.
\end{proof}

\begin{proof}[Proof of Prop. \ref{prop:superfibersequence}]
Let $\mathfrak{B}$ be a nondegenerate $\mathbf{2SVect}$-central braided fusion 2-category. Given that we have assumed that $\mathfrak{B}$ is nondegenerate, the sylleptic 2-functor $F:\mathbf{2SVect}\rightarrow\mathcal{Z}_{(2)}(\mathfrak{B})$ is an equivalence. It follows that
$$\pt\simeq\rB\sAut^{\mathrm{syl}}_{\mathbf{2SVect}}(F)\rightarrow \rB\sAut^{\mathrm{syl}}(\mathbf{2SVect})\simeq \rB^2\mathbb{Z}/2.$$
Thanks to Lemma \ref{lem:piBimod}, there is a canonical map of spaces
$$\rB\mathbf{Bimod}(\mathbf{Mod}(\fB))^{\times}\simeq \rB\mathbf{Mod}(\mathcal{Z}(\mathbf{3SVect}))^{\times}\rightarrow\rB\sAut^{\mathrm{br}}(\mathbf{3SVect}).$$
We may therefore consider the following diagram of pullback squares, whose bottom row includes the fiber sequence in \Cref{eq:SES2cat}:
\begin{equation}\label{eq:pullbackfibersequences}
    \begin{tikzcd}[sep=small]
    {\mathrm{B}\mathscr{S}\!\mathscr{P}ic(\mathfrak{B})} & {\rB \mathscr{A}ut^{\mathrm{br}}_{\mathbf{2SVect}}(\mathfrak{B})} & {\rB \mathbf{4SVect}^{\times}} & {\mathrm{pt}} \\
	{\mathrm{B}\mathscr{P}ic(\mathfrak{B})} & {\mathrm{B}\mathscr{A}ut^{\mathrm{br}}(\mathfrak{B})} & {\mathrm{B}\mathbf{Bimod}(\mathbf{Mod}(\mathfrak{B}))^{\times}} & {\mathrm{B}^2\Z/2}. \\
	& {} & {}
	\arrow[from=1-1, to=1-2]
	\arrow[from=1-1, to=2-1]
	\arrow["\lrcorner"{anchor=center, pos=0.025}, draw=none, from=1-1, to=3-2]
	\arrow[from=1-2, to=1-3]
	\arrow[from=1-2, to=2-2]
	\arrow["\lrcorner"{anchor=center, pos=0.025}, draw=none, from=1-2, to=3-3]
	\arrow[from=1-3, to=1-4]
	\arrow[from=1-3, to=2-3]
	\arrow["\lrcorner"{anchor=center, pos=0.125}, draw=none, from=1-3, to=2-4]
	\arrow[from=1-4, to=2-4]
	\arrow[from=2-1, to=2-2]
	\arrow[from=2-2, to=2-3]
	\arrow[from=2-3, to=2-4]
\end{tikzcd}
\end{equation}
The identification of the right most pullback square is merely a reformulation of Lemma \ref{lem:piBimod}.
The descriptions of the remaining entries follow from the definitions together with the observation that the diagram of spaces
$$
\begin{tikzcd}[sep=small]
{\mathrm{B}\sAut^{\mathrm{br}}(\mathfrak{B})}\arrow[r]\arrow[d] & {\mathrm{B}\sAut^{\mathrm{syl}}(\mathcal{Z}_{(2)}(\mathfrak{B}))}\arrow[d, "\rotatebox{270}{$\simeq$}"]. \\
{\mathrm{B}\mathbf{Bimod}(\mathbf{Mod}(\mathfrak{B}))^{\times}}\arrow[r] & {\mathrm{B}\sAut^{\mathrm{br}}(\mathbf{3SVect})}.\\
\end{tikzcd}
$$
commutes as a consequence of the definitions.
\end{proof}

We see from Proposition \ref{prop:superfibersequence} that the space $\sWitt:=\mathbf{4SVect}^{\times}$ plays a special role. This space already appeared in \cite[Section 3.3]{JF2} where it was used to classify fermionic TOs up to gapped boundaries.\footnote{In \cite[Section 3.3]{JF2}, a connective super Witt spectrum was introduced. It was argued therein that this spectrum classifies fermionic TQFTs in all dimensions up to gapped boundary, Our definition of $\sWitt$ is a finite truncation of the super Witt spectrum of \cite{JF2}.}
The space $\sWitt$ has the following homotopy groups that can be expressed in terms of Morita classes of fermionic topological order:

\begin{itemize}
    \item $\pi_0\, \sWitt = s\mathcal{W}$: Morita equivalence classes of (2+1)d spin TQFTs
    \item $\pi_1\,\sWitt =0$
     \item $\pi_2 \, \sWitt=\Z/2$: In (1+1)d there is the trivial TQFT and Kitaev's Majorana chain
    \item $\pi_3 \, \sWitt=\Z/2$: In (0+1)d there is the trivial TQFT and the invisible fermion
     \item $\pi_4\, \sWitt =\mathbb{C}^\times$\,.
\end{itemize}

Let us denote by $\mathrm{SW}^5(\rB G)$ the abelian group of (homotopy classes of) maps from $\rB G$ to $\rB \sWitt$. This group can be used to give a mathematical definition of the 't Hooft anomaly of a fermionic TO in (3+1)d equipped with a $G$-symmetry.

\begin{defn}
Let $\mathfrak{B}$ be a nondegenerate $\mathbf{2SVect}$-central braided fusion 2-category $\mathfrak{B}$ equipped with a $G$-action represented by a map of spaces $\rho:\rB G\rightarrow\rB\sAut^{\mathrm{br}}_{\mathbf{2SVect}}(\mathfrak{B})$. The anomaly of the action $\rho$ is the class in $\mathrm{SW}^5(\rB G)$ given by the composite $$\rB G\xrightarrow{\rho} \rB\sAut^{\mathrm{br}}_{\mathbf{2SVect}}(\mathfrak{B})\rightarrow \rB \sWitt.$$
\end{defn}

By way of the fiber sequence in Equation \eqref{eqn:superfibersequence}, this yields the following result.

\begin{thm}\label{prop:sWittAnomaly}
Let $\mathfrak{B}$ be a nondegenerate $\mathbf{2SVect}$-central braided fusion 2-category equipped with a $G$-action $\rho$. Then, there exists a $\mathbf{2SVect}$-central faithfully graded $G$-crossed braided extension of $\mathfrak{B}$ compatible with $\rho$ if and only if the anomaly of $\rho$ vanishes.
\end{thm}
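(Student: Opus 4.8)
The plan is to argue exactly as in the bosonic case treated in Proposition~\ref{prop:bosonicanomaly}, replacing the fiber sequence \eqref{eq:SES2cat} by its $\tSVect$-enriched refinement from Proposition~\ref{prop:superfibersequence}, and the classification of crossed braided extensions by its enriched counterpart Theorem~\ref{thm:FGSET}. Everything will then follow from the standard obstruction-theoretic fact that, for a fiber sequence $F\to E\to B$ and any space $X$, the induced sequence of pointed sets $[X,F]\to[X,E]\to[X,B]$ is exact, so a map $X\to E$ admits a homotopy lift through $F$ if and only if its composite to $B$ is nullhomotopic.

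First I would translate ``there is a $\tSVect$-enriched faithfully graded $G$-crossed braided extension of $\fB$ compatible with $\rho$'' into homotopy theory. By Theorem~\ref{thm:FGSET} such extensions are classified by homotopy classes of maps $\rB G\to\rB\mathscr{S}\!\mathscr{P}ic(\fB)$; and because, by construction, $\mathscr{S}\!\mathscr{P}ic(\fB)$ sits in a pullback square over $\rB\sAut^{\mathrm{br}}_{\tSVect}(\fB)$ (cf.~\eqref{eq:pullbackfibersequences}), postcomposing such a map with the canonical projection $\rB\mathscr{S}\!\mathscr{P}ic(\fB)\to\rB\sAut^{\mathrm{br}}_{\tSVect}(\fB)$ records precisely the induced $G$-action on the trivially graded component $\mathfrak{A}_e=\fB$ together with its compatible $\tSVect$-enrichment, i.e.\ exactly the datum $\rho$. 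Hence the existence of an extension compatible with $\rho$ is equivalent to the existence of a homotopy lift of $\rho\colon\rB G\to\rB\sAut^{\mathrm{br}}_{\tSVect}(\fB)$ through $\rB\mathscr{S}\!\mathscr{P}ic(\fB)$.

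Next I would invoke Proposition~\ref{prop:superfibersequence}, which exhibits
\[
\rB\mathscr{S}\!\mathscr{P}ic(\fB)\longrightarrow \rB\sAut^{\mathrm{br}}_{\tSVect}(\fB)\longrightarrow \rB\sWitt
\]
as a fiber sequence, with $\rB\mathscr{S}\!\mathscr{P}ic(\fB)$ the homotopy fiber over the basepoint of the connected space $\rB\sWitt=\rB\mathbf{4SVect}^{\times}$. Applying $[\rB G,-]$ and using the exactness recalled above, the lift of $\rho$ exists if and only if the composite $\rB G\xrightarrow{\rho}\rB\sAut^{\mathrm{br}}_{\tSVect}(\fB)\to\rB\sWitt$ is nullhomotopic, i.e.\ represents the trivial class in $[\rB G,\rB\sWitt]=\mathrm{SW}^{5}(\rB G)$. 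By definition this composite is the anomaly of $\rho$, so combining with the previous step gives the claimed equivalence and completes the argument.

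The genuinely hard work is already in place: it is Proposition~\ref{prop:superfibersequence} together with the accompanying Lemmas~\ref{lem:SH6}, \ref{lem:preenrichedfiebrsequence}, and~\ref{lem:piBimod}, which construct the fiber sequence and identify its base with the relevant truncation of the super-Witt space by way of the minimal nondegenerate extension results of \cite{JF}. Within the present argument the only point to treat carefully is the identification ``extension inducing $\rho$'' versus ``homotopy lift of $\rho$'': since the theorem asks only for \emph{existence}, it suffices to work at the level of $\pi_0$ of the relevant mapping spaces, so the higher coherences packaged in the pullback squares of \eqref{eq:pullbackfibersequences} need not be unwound; and one should note that the basepoint of $\rB\sWitt$ corresponds to the trivial Morita class of fermionic topological order, so that ``the anomaly vanishes'' really does mean ``the composite lands at the basepoint''. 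I do not anticipate any further obstacle.
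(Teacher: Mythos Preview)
Your proposal is correct and follows essentially the same approach as the paper: both reduce the statement to the lifting property of the fiber sequence \eqref{eqn:superfibersequence} from Proposition~\ref{prop:superfibersequence}, combined with Theorem~\ref{thm:FGSET} to interpret the lift as an enriched $G$-crossed braided extension. The paper's own proof is in fact a single sentence invoking that fiber sequence, so you have simply unpacked the obstruction-theoretic reasoning that the paper leaves implicit.
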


 In conclusion, we see that for fermionic (3+1)d TQFTs with $G$-symmetry, the classification of the obstruction to gauging the $G$-symmetry goes beyond what the symmetry extension construction in \cite{Wang:2017loc} provides. While there is a map from $\SH^5 \to \mathrm{SW}^5$, which implies there is a group cocycle part of the anomaly, there is also a layer for which a cocycle description is not clear. In order to complete the analogy with \cite{Wang:2017loc} in terms of the anomaly data, we note that it is possible that the homomorphism $\pi_1( \rB \sAut^{\br}_{\tSVect}(\fB)) \rightarrow \pi_1 (\rB \sWitt)$ is trivial, and hence the $G$-anomalies are classified by $\SH^5(\rB G)$. Physically, one can also take a limited approach in \Cref{ansatz:fernionicTQFT}  and only study a subgroup of the full obstruction, captured by $\SH^5(\rB G)$.  Then, the supercohomology data is exactly analogous to the data in \Cref{ansatz:WWW}. We refer the interested reader to \cite[Appendix A]{DYY1} for more details on the comparison. 
 In order to motivate why taking a limited approach is reasonable, we point out that the bosonic symmetry extension construction of \cite{Wang:2017loc}, which is claimed to work in any dimension, is itself already making a simplification by looking at cocycles in $\rH^{n+1}(\rB G; \mathbb{C}^\times)$. Even in the bosonic case, in high enough dimensions, there are anomaly contributions that go beyond ordinary cohomology. In (3+1)d, this is made apparent in \Cref{prop:bosonicanomaly} given that $\pi_0(\mathbf{4Vect}^{\times})\cong\Witt$ is the Witt group of nondegenerate braided fusion 1-categories. This last point is discussed in more detail in \cite[Section 3.3]{JF2}.

\subsection{Matching Anomalies with SETs}
With the theory of symmetry enriched topological orders at our disposal, we explain how it can be applied to the construction of anomalous (3+1)d fermionic TQFTs, extending the Wang-Wen-Witten construction. Consider the following table corresponding to the symmetry group $G = (\Z/2)^{\oplus 3}$ with anomaly in $\SH^5(\rB (\Z/2)^{\oplus 3})$. 

\begin{equation}
\renewcommand{\arraystretch}{1.4}
\begin{tabular}
{c c c | c|| c c c  c|| cc }
\hline \hline 
 $G$ &  &  & $\SH^{5}(\rB G)$ & $H$ &  &  & $\SH^4(\rB H)$ & $K$ & $\SH^4(\rB K)$ \\ \hline\hline 
 {$(\Z/2)^{\oplus 3}$} &  &  & $\text{at least}$ $\Z/2^{\oplus 2}$ & $(\Z/4)^{\oplus 3}$ & & & $(\Z/4)^{\oplus 8}\times (\Z/2)^{\oplus 4}$ & $(\Z/2)^{\oplus 3}$& $(\Z/2)^{\oplus 8}$ \\ \hline
\hline
\end{tabular}
\end{equation}

\noindent 
The groups $\SH^4(\rB H)$ and $\SH^4(\rB K)$ can be deduced from Theorem 11 and Theorem 3 respectively in \cite{Guo:2018vij}.
We consider an anomaly given by a class in a $(\Z/2)^{\oplus 2}$ subgroup of $\SH^{5}(\rB G)$.
The group $H=(\Z/4)^{\oplus 3}$ surjects onto $G$ and the  anomaly in the $(\Z/2)^{\oplus 2}$ subgroup of $\SH^{5}(\rB G)$ can be trivialized upon pulling back to $H$.
The choices of trivialization form a torsor over the group $\SH^{4}(\rB H)$.
The group $K$ on the right denotes the kernel of $H\twoheadrightarrow G$.
This group is associated to the TQFT constructed using the symmetry extension procedure. The map $\SH^{4}(\rB H)\rightarrow \SH^{4}(\rB K)$ is nonzero because the generators of $(\Z/4)^{\oplus 8}$ in $\SH^4(\rB H)$ map nontrivially to $(\Z/2)^{\oplus 8}$ in $\SH^{4}(\rB K)$. 

Following \Cref{ansatz:fernionicTQFT}, the above procedure constructs an anomalous fermionic theory. 
We can also generalize \Cref{ansatz:fernionicTQFT} to anomalies in twisted supercohomology $\SH^{5+\tau}(\rB G)$, and run an analogous construction.
As shown in \cite[Theorem 3.2]{DYY1}, any class $\pi \in \SH^{5+\tau}(\rB G)$ can be trivialized by pulling back along a finite group extension $H$ of $G$.
It is therefore always possible to realize a TQFT that saturates an anomaly given by a class in $\SH^{5+\tau}(\rB G)$. However, as we know from \cite{Freed:2016rqq}, anomalies for QFTs with emergent fermions i.e. those that couple to twisted spin structure  are classified by the cobordism group $\Hom(\Omega^{\mathrm{Spin}}_5(\rB G,\tau),\mathbb{C}^\times)$. Furthermore, it was observed in \cite[Table III]{Debray:2026sqw} that some of these anomalies are not captured by $\SH^{5+\tau}(\rB G)$.
The other main result of  \cite[Theorem 3.1]{DYY1} is that those anomalies in $\Hom(\Omega^{\mathrm{Spin}}_5(\rB G,\tau),\mathbb{C}^\times)$ that lie outside of supercohomology cannot be matched by a gapped topological theory. We direct the reader to \cite[Appendix A]{DYY1} for an in depth discussion relating the three notions of anomaly provided by supercohomology $\mathrm{SH}$, $\mathrm{SW}$, and spin cobordism.

\subsection{Lagrangian Algebras in $\cZ(\mathbf{3Vect}_G)$ and $\cZ(\mathbf{3SVect}_G)$}\label{subsection:lagrangianalg}

Our classification results for (3+1)d $G$-SETs can be leveraged to explicitly describe Lagrangian algebras in certain nondegenerate braided fusion 3-categories associated to grouplike (4+1)d SymTFTs.
Here, we use the term nondegenerate in the context of \Cref{def:toporder} to mean that the braided fusion 3-category describing the (4+1)d SymTFTs has trivial sylleptic center.
When a topological phase of matter is described by a nondegenerate braided fusion category, inequivalent Lagrangian algebras correspond to distinct irreducible gapped boundaries for the phase.
In (2+1)d, this idea originates in \cite{BSS02,BSS03} and was subsequently fully realized in \cite{kong2014anyon}.
More precisely, given a (2+1)d theory of anyons described by a nondegenerate braided fusion 1-category $\cB$, a Lagrangian algebra in $\cB$ is a connected commutative separable algebra $A$ for which the braided fusion 1-category of \textit{local modules} results in the trivial phase, that is $\Mod^{\loc}_\cB(A)\simeq\Vect$.
Said differently, all the anyons that are not local with respect to the algebra $A$ get confined on the gapped boundary, which corresponds to $\Mod_\cB(A)$, and do not reside in the new phase described by $\Mod^{\loc}_\cB(A)$.
More recently, Lagrangian algebras for a (2+1)d SymTFT were used to classify symmetry breaking phases of a boundary theory \cite{Kong:2019byq,Kong:2019cuu,Kong:2021equ,Bhardwaj:2023fca}.
In (3+1)d, Lagrangian algebras in a nondegenerate braided fusion 2-categories are defined analogously \cite{ZLZHKT,DX}. More precisely, a connected braided separable algebra $A$ in a nondegenerate braided fusion 2-category is Lagrangian if the braided fusion 2-category $\Mod^{\loc}_\fB(A)$ of local modules is trivial.
The problem of classifying Lagrangian algebras in certain nondegenerate braided fusion 2-categories has also been undertaken in
\cite{xu:etale,Kong:2024ykr,Wen:2025thg}, which has had applications for classifying gapped and gapless phases of higher dimensional theories \cite{Bhardwaj:2024qiv,Bhardwaj:2023ayw}.
This motivates the following definition in (4+1)d.

\begin{defn}
    A connected rigid braided algebra $A$ in a nondegenerate braided fusion 3-category $\mathbf{B}$ is called Lagrangian provided that $\Mod^{\loc}_{\mathbf{B}}(A)\simeq \mathbf{3Vect}$.
\end{defn}

It was established in \cite[Corollary 3.3.4]{DX} that, for a braided fusion 1-category $\cB$, Lagrangian algebras in $\cZ(\Mod(\cB))$ correspond to nondegenerate braided fusion 1-categories $\cA$ equipped with a braided monoidal functor $\cB \rightarrow \cA$.
By mimicking their argument, we obtain an analogous classification of Lagrangian algebras in $\cZ(\mathbf{Mod}(\mathfrak{B}))$, where $\mathfrak{B}$ is a braided fusion 2-category.
Firstly, the categorification of \cite[Theorem 4.11]{davydov2021braided}, asserts that $\Mod^{\loc}(\fB) = \cZ(\Mod(\fB))$.
Moreover, for any braided fusion 3-category $\mathbf{B}$ and map of connected braided rigid algebras $B\rightarrow A$ in $\mathbf{B}$, we have $$\Mod^{\loc}_{\Mod^{\loc}_{\mathbf{B}}(B)}(A)\simeq \Mod^{\loc}_\mathbf{B}(A).$$ In particular, if we set $\mathbf{B} = \mathbf{3Vect}$, a connected braided rigid algebra $A$ in $\mathbf{B}$ is precisely a braided fusion 2-category $\fB$. In addition, such an algebra $A$ is Lagrangian if and only if $$\Mod^{\loc}(\fB) \simeq \cZ(\Mod(\fB)) \simeq \mathbf{3Vect}\,.$$ But, since $\cZ(\Mod(\fB))$ is nondegenerate, this is the case if and only if $$\Omega\cZ(\Mod(\fB))\simeq\mathcal{Z}_{(2)}(\fB) \simeq \tVect\,,$$ where the first step uses \cite[Section IV.B]{JF}. The above discussion yields the next result.

\begin{prop}\label{prop:LagrangianAlgebras}
    Let $\mathfrak{B}$ be a braided fusion 2-category, Lagrangian algebras in $\cZ(\mathbf{Mod}(\mathfrak{B}))$ correspond to nondegenerate braided fusion 2-categories equipped with a braided monoidal 2-functor from $\mathfrak{B}$.
\end{prop}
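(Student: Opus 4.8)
The plan is to adapt the argument from \cite[Corollary 3.3.4]{DX} one categorical level up, exactly as the surrounding text already sketches, and then organize the bijection cleanly. First I would recall the two structural inputs that make the argument work: the categorification of \cite[Theorem 4.11]{davydov2021braided}, which identifies $\Mod^{\loc}(\fB)$ with $\cZ(\Mod(\fB))$ for any braided fusion 2-category $\fB$, and the ``transitivity of local modules'' statement that for a map of connected braided rigid algebras $B\rightarrow A$ in a braided fusion 3-category $\mathbf{B}$ one has $\Mod^{\loc}_{\Mod^{\loc}_{\mathbf{B}}(B)}(A)\simeq \Mod^{\loc}_{\mathbf{B}}(A)$. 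Specializing $\mathbf{B}=\mathbf{3Vect}$, a connected braided rigid algebra $A$ in $\mathbf{3Vect}$ is precisely the data of a braided fusion 2-category $\mathfrak{A}$ (with $A=\mathfrak{A}$ as the algebra object), and a map of such algebras $\mathfrak{B}\rightarrow\mathfrak{A}$ is precisely a braided monoidal 2-functor. This already sets up the correspondence: objects of $\cZ(\mathbf{Mod}(\mathfrak{B}))$ under consideration are pairs $(\mathfrak{A}, \mathfrak{B}\rightarrow\mathfrak{A})$.

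The second step is to characterize, among such pairs, which give \emph{Lagrangian} algebras. Using transitivity with $B=\mathfrak{B}$ and $A=\mathfrak{A}$ inside $\mathbf{B}=\mathbf{3Vect}$, the Lagrangian condition $\Mod^{\loc}_{\cZ(\mathbf{Mod}(\mathfrak{B}))}(\mathfrak{A})\simeq\mathbf{3Vect}$ becomes $\Mod^{\loc}(\mathfrak{A})\simeq\mathbf{3Vect}$, i.e.\ $\cZ(\mathbf{Mod}(\mathfrak{A}))\simeq\mathbf{3Vect}$ again by the categorified \cite[Theorem 4.11]{davydov2021braided}. Now I would pass down one loop: since $\cZ(\mathbf{Mod}(\mathfrak{A}))$ is always nondegenerate as a braided fusion 3-category, the equivalence with $\mathbf{3Vect}$ holds if and only if $\Omega\cZ(\mathbf{Mod}(\mathfrak{A}))\simeq\Omega\mathbf{3Vect}=\tVect$; and $\Omega\cZ(\mathbf{Mod}(\mathfrak{A}))\simeq\cZ_{(2)}(\mathfrak{A})$ by \cite[Section IV.B]{JF}. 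So the Lagrangian condition on $\mathfrak{A}$ is exactly $\cZ_{(2)}(\mathfrak{A})\simeq\tVect$, which by \Cref{def:degeneracy} is precisely nondegeneracy of $\mathfrak{A}$. Conversely, if $\mathfrak{A}$ is nondegenerate then reading these equivalences backwards shows $\Mod^{\loc}_{\cZ(\mathbf{Mod}(\mathfrak{B}))}(\mathfrak{A})\simeq\mathbf{3Vect}$, so $\mathfrak{A}$ is Lagrangian. Combining the two steps gives the claimed correspondence: Lagrangian algebras in $\cZ(\mathbf{Mod}(\mathfrak{B}))$ are exactly nondegenerate braided fusion 2-categories $\mathfrak{A}$ equipped with a braided monoidal 2-functor $\mathfrak{B}\rightarrow\mathfrak{A}$.

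I expect the main obstacle to be the transitivity formula $\Mod^{\loc}_{\Mod^{\loc}_{\mathbf{B}}(B)}(A)\simeq\Mod^{\loc}_{\mathbf{B}}(A)$ at the level of braided fusion 3-categories, together with the precise bookkeeping needed to know that a connected braided rigid algebra in $\mathbf{3Vect}$ ``is'' a braided fusion 2-category and that the induced algebra structure on $\mathfrak{A}$ inside $\cZ(\mathbf{Mod}(\mathfrak{B}))$ is the correct one — i.e.\ that the forgetful/base-change functors along $\mathbf{3Vect}\to\cZ(\mathbf{Mod}(\mathfrak{B}))$ match up coherently. These are the one-categorical-level-up analogues of statements that are routine in \cite{DX} but require the 3-categorical condensation/Cauchy-completion machinery of \cite{JF,Gaiotto:2019xmp} to be invoked carefully; I would lean on \cite{DX,xu2024higher} for the needed coherence and treat the rest of the argument as a formal consequence of the cited results, just as the paragraph preceding the proposition already indicates.
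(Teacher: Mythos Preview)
Your proposal is correct and follows essentially the same argument as the paper, which is given in the paragraph immediately preceding the proposition: the same two structural inputs (the categorified $\Mod^{\loc}(\fB)\simeq\cZ(\Mod(\fB))$ and transitivity of local modules), the same specialization to $\mathbf{B}=\mathbf{3Vect}$, and the same reduction of the Lagrangian condition to $\cZ_{(2)}(\mathfrak{A})\simeq\tVect$ via $\Omega\cZ(\Mod(\mathfrak{A}))\simeq\cZ_{(2)}(\mathfrak{A})$ from \cite[Section IV.B]{JF}. Your caveats about the 3-categorical transitivity formula and the identification of connected braided rigid algebras in $\mathbf{3Vect}$ with braided fusion 2-categories are apt, and the paper likewise treats these as formal inputs rather than re-proving them.
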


Taking $\fB = \tRep(G)$, the above result describes Lagrangian algebras in $\cZ(\mathbf{3Rep}(G)) = \mathcal{Z}(\mathbf{Mod}(\mathbf{2Rep}(G)))$. 
Such braided fusion 3-categories are particularly interesting because they serve as SymTFTs for grouplike symmetries in (3+1)d. Classifying their Lagrangian algebras is therefore a first step towards understanding the gapped phases for the boundary theory with such grouplike categorical symmetry, as prescribed in \cite{Bhardwaj:2023fca,Bhardwaj:2024qiv,Bhardwaj:2025piv} for (1+1)d and (2+1)d boundary theories. 
Using Theorems \ref{BosonicNondegen} and \ref{FermionicNondegen}, we can unpack the data of such Lagrangian algebras further.
Namely, as $\tRep(G)$ is a connected fusion 2-category, braided 2-functors $\tRep(G)\rightarrow\fA$ are completely determined by symmetric monoidal functors $\Rep(G)\rightarrow\Omega\fA$. If the nondegenerate braided fusion 2-category $\fA$ has all bosons, the corresponding Lagrangian algebras in $\cZ(\mathbf{3Rep}(G))$ are classified via the following data:
\begin{itemize}
    \item A finite group $H$ together with a group homomorphism $\varphi: H \rightarrow G$,
  \item A class $\pi\in\rH^4(\rB H; \mathbb{C}^\times)$.
\end{itemize}
If the nondegenerate braided fusion 2-category $\fB$ has emergent fermions, the corresponding Lagrangian algebras in $\cZ(\mathbf{3Rep}(G))$ are classified explicitly by:
\begin{itemize}
    \item A finite group $H$ together with a group homomorphism $\varphi: H\rightarrow G$,
    \item A class $\tau\in \rH^2(\rB H;\Z/2)$,
    \item A class $\varsigma \in \SH^{5+\kappa}(\rB^2\mathbb{Z}/2)$ such that $\varsigma\circ\tau$ is trivial, where $\kappa$ is the nontrivial class in $\rH^2(\rB^2\mathbb{Z}/2;\mathbb{Z}/2)$,
    \item A class $\varpi\in \SH^{4+\tau}(\rB H)$.
\end{itemize} 

\noindent Additionally, in the course of the proof of \Cref{thm:deequiv}, we have seen that there is an equivalence of braided fusion 3-categories $\cZ(\mathbf{3Vect}_G)\simeq \cZ(\mathbf{3Rep}(G))$.
In particular, the above classification of Lagrangian algebras in $\cZ(\mathbf{3Rep}(G))$ yields an analogous classification of Lagrangians algebras in $\cZ(\mathbf{3Vect}_G)$.

We now consider the case when the (4+1)d SymTFT is still grouplike, but has fermions.
This corresponds to considering the braided fusion 2-category $\fB = \mathbf{2Rep}(\widetilde{G},z)$. 
In order for a nondegenerate braided fusion 2-category to receive a map from $\mathbf{2Rep}(\widetilde{G},z)$, it must have emergent fermions, so that we only have to consider the nondegenerate braided fusion 2-categories classified in \Cref{FermionicNondegen}.
We then find that Lagrangian algebras in $\cZ(\mathbf{3Rep}(\widetilde{G},z))$ are classified explicitly by the following data:
\begin{itemize}
    \item A finite super-group $(\widetilde{H},z)$ together with a homomorphism $\varphi: (\widetilde{H},z)\rightarrow (\widetilde{G},z)$ of super-groups,
    \item A class $\varsigma \in \SH^{5+\kappa}(\rB^2\mathbb{Z}/2)$ such that $\varsigma\circ\tau$ is trivial, where $\kappa$ is the nontrivial class in $\rH^2(\rB^2\mathbb{Z}/2;\mathbb{Z}/2)$ and $\tau\in \rH^2(\rB H;\Z/2)$ classifies the central extension $\mathbb{Z}/2\rightarrow \widetilde{H}\rightarrow H$,
    \item A class $\varpi\in\SH^{4+ \tau}(\rB H)$.
\end{itemize}

\appendix

\section{A Fiber Sequence}\label{sec:fiber}

The purpose of this appendix is to justify the fiber sequence appearing in Equation \eqref{eq:fiberseq}.
Many variants thereof have already been considered \cite{GJS,davydov2021braided,JF2,Bhardwaj:2024xcx,Sa}.
In fact, a very general version of this fiber sequence was announced by Jones-Reutter but has not yet appeared.

We begin by recording the following technical result.

\begin{lem}\label{lem:technicalfiber}
Let $\mathcal{C}$ be an $(\infty,1)$-category.
Associated to any 1-morphism $f:\mathbbm{1}\rightarrow A$ in $\mathcal{C}$, there is a fiber sequence of spaces:
$$\rB \Omega_f\mathrm{End}_{\mathcal{C}}(\mathbbm{1},A)\rightarrow\rB \mathrm{End}_{\mathcal{C}_{\mathbbm{1}/}}(f)^{\times}\rightarrow\rB \mathrm{End}_{\mathcal{C}}(A)^{\times}\,.$$
\end{lem}
\begin{proof}
The forgetful map $p:\mathcal{C}_{\mathbbm{1}/}\rightarrow\mathcal{C}$ is a left fibration by \cite[Proposition 018F]{Ker}.
Viewing the 1-morphism $f:\mathbbm{1}\rightarrow A$ as an object in $\mathcal{C}_{\mathbbm{1}/}$, we can consider the endomorphism space $\mathrm{End}_{\mathcal{C}_{\mathbbm{1}/}}(f)$.
Further, we define a space $\mathrm{End}_{\mathcal{C}_{\mathbbm{1}/}}(f)_{\mathrm{id_A}}$ via the following pullback of spaces:
$$
\begin{tikzcd}[sep=small]
\mathrm{End}_{\mathcal{C}_{\mathbbm{1}/}}(f)_{\mathrm{id_A}} \arrow[r] \arrow[d] & \mathrm{End}_{\mathcal{C}_{\mathbbm{1}/}}(f) \arrow[d] \\
\{\mathrm{id}_A\} \arrow[r] & \mathrm{End}_{\mathcal{C}}(A)\,.
\end{tikzcd}
$$
As the right vertical arrow is a Kan fibration by \cite[Proposition 01P8]{Ker}, we obtain a fiber sequence of spaces
\begin{equation}\label{eq:intermediatefiber}
\mathrm{End}_{\mathcal{C}_{\mathbbm{1}/}}(f)_{\mathrm{id_A}}\rightarrow\mathrm{End}_{\mathcal{C}_{\mathbbm{1}/}}(f)\rightarrow\mathrm{End}_{\mathcal{C}}(A)\,.
\end{equation}

We now argue that $\mathrm{End}_{\mathcal{C}_{\mathbbm{1}/}}(f)_{\mathrm{id_A}}\simeq \Omega_f\mathrm{Hom}_{\mathcal{C}}(\mathbbm{1},A)$.
To see this, note that, by unpacking definition, we have 
$$\mathrm{End}_{\mathcal{C}_{\mathbbm{1}/}}(f)_{\mathrm{id_A}}\simeq \mathrm{Fun}(\Delta^2,\mathcal{C})\times_{\mathrm{Fun}(\partial\Delta^2,\mathcal{C})}\{F\}\,,$$
where $F:\partial\Delta^2\rightarrow\mathcal{C}$ is given by:
$$\begin{tikzcd}[sep=small]
& {A} \arrow[d, "\mathrm{id}_A"] \\
\mathbbm{1} \arrow[ru, "f"] \arrow[r, "f"'] & {A} 
\end{tikzcd}$$
But, using $\mathrm{Hom}^L_{\mathcal{C}}(\mathbbm{1},A)$ to denote the so-called left-pinched $\mathrm{Hom}$-space, we also have
$$\mathrm{Fun}(\Delta^2,\mathcal{C})\times_{\mathrm{Fun}(\partial\Delta^2,\mathcal{C})}\{F\}\simeq\Omega_f\mathrm{Hom}^L_{\mathcal{C}}(\mathbbm{1},A)\,.$$
The desired equivalence therefore follows from \cite[Proposition 01L5]{Ker}.

Finally, observe that the fiber sequence \eqref{eq:intermediatefiber} consists of topological monoids.
This follows for instance from the fact established in \cite[Proposition 2.2.5.1]{HTT} that the theory of $(\infty,1)$-categories can be modeled using simplicially enriched categories.
As a consequence, we obtain a fiber sequence on the associated group-like submonoids:
$$\mathrm{End}_{\mathcal{C}_{\mathbbm{1}/}}(f)_{\mathrm{id_A}}^{\times}\rightarrow\mathrm{End}_{\mathcal{C}_{\mathbbm{1}/}}(f)^{\times}\rightarrow\mathrm{End}_{\mathcal{C}}(A)^{\times}\,.$$
Delooping this last fiber sequence yields that appearing in the statement of the lemma.
\end{proof}

Let $\mathscr{V}$ be a monoidal $(\infty,1)$-category with geometric realizations and such that the monoidal product is compatible with geometric realizations. We write $\mathrm{Mor}_1(\mathscr{V})$ for the $(\infty,1)$-category of ($E_1$-)algebras, bimodules, and bimodule morphisms in $\mathscr{V}$ in the sense of \cite{Hau}.

\begin{prop}[The Jones-Reutter Fiber Sequence]\label{prop:JonesReutter}
Let $A$ be an algebra in $\mathscr{V}$. There is a fiber sequence of spaces:
$$\mathrm{B}\,\mathrm{Hom}_{\mathscr{V}}(\mathbbm{1},A)^{\times}
\rightarrow\mathrm{B}\mathrm{End}^{\otimes}_{\mathscr{V}}(A)^{\times}\rightarrow \mathrm{B}\mathbf{Bimod}_{\mathscr{V}}(A)^{\times}\,.$$
\end{prop}
\begin{proof}[Proof outline]
We apply \Cref{lem:technicalfiber} with $\mathcal{C}=\mathrm{Mor}_1(\mathscr{V})$, $A$ any algebra in $\mathscr{V}$, and $f$ the regular $\mathbbm{1}$-$A$-bimodule $A$.
With the above conventions, we have $$\Omega_f\mathrm{End}_{\mathcal{C}}(\mathbbm{1}) \cong \mathrm{End}_A(A_A)^{\times}\cong \mathrm{Hom}_{\mathscr{V}}(\mathbbm{1},A)^{\times}$$
as topological monoids.
This identifies the left most term of the fiber sequence.
Moreover, we have by definition 
$$\mathrm{End}_{\mathcal{C}}(A) = \mathbf{Bimod}_{\mathscr{V}}(A)^{\simeq}\,,$$
so that the right most term of the fiber sequence is as claimed.
It therefore suffices to identify the middle term.

In order to prove that $\mathrm{End}_{\mathcal{C}_{\mathbbm{1}/}}(f)^{\times}$ is equivalent to $\mathrm{End}^{\otimes}_{\mathscr{V}}(A)^{\times}$, it will suffice to show that
$$\mathrm{End}_{\mathcal{C}_{\mathbbm{1}/}}(A_A)\simeq\mathrm{End}^{\otimes}_{\mathscr{V}}(A)$$
as topological monoids.
In plain English, this is merely asserting that an $A$-$A$-bimodule $M$ in $\mathscr{V}$ that is equivalent to $A_A$ as a right $A$-module is precisely the data of an algebra homomorphism $\varphi:A\rightarrow A$ in $\mathscr{V}$.
Namely, a left $A$-module structure on $A_A$ compatible with the right action by $A$ is exactly the data of an algebra homomorphism $\varphi:A\rightarrow\mathrm{End}_A(A_A)\simeq A$. This argument can be traced back to the sketch of proof of \cite[Proposition 2.3.2]{JF2}.

We have to give a coherent version of argument of the previous paragraph.
In order to do so, we appeal to the following fact:
Let $M$ be an object of $\mathscr{V}$.
Assume that the $\mathscr{V}$-enriched endomorphism object $\underline{\mathrm{End}}(M)\in\mathscr{V}$ exists.
Then, it follows from \cite[Theorem 4.7.1.34]{HA} that $\underline{\mathrm{End}}(M)$ is an algebra in $\mathscr{V}$ and there is an equivalence between the space of left $A$-module structures on $M$ and the space of algebra homomorphisms $A\rightarrow\underline{\mathrm{End}}(M)$.
The above principle can be made relative.
Recall from \cite[Proposition 4.3.2.5]{HA} that $\mathscr{V}$ acts from the left on $\mathbf{Mod}_{\mathscr{V}}(A)$.
Let $N$ be a right $A$-module such that the $\mathscr{V}$-enriched endomorphism object $\underline{\mathrm{End}}_A(N)\in\mathscr{V}$ exists.
Again thanks to \cite[Theorem 4.7.1.34]{HA}, we have that $\underline{\mathrm{End}}_A(N)$ is an algebra in $\mathscr{V}$ and there is an equivalence between the space of left $A$-module structures on $N$ (as an object of $\mathbf{Mod}_{\mathscr{V}}(A)$) and the space of algebra homomorphisms $A\rightarrow\underline{\mathrm{End}}_A(N)$.

We now consider $N=A_A$, the regular right $A$-module.
Thanks to \cite[Corollary 4.2.4.6]{HA}, we have $\underline{\mathrm{End}}_A(A)\simeq A$ canonically as algebras in $\mathscr{V}$.
In particular, the discussion of the previous paragraph applies, so that there is an equivalence of spaces:
$$\mathbf{LMod}_{\mathbf{Mod}_{\mathscr{V}}(A)}(A)^{\simeq}\times_{\mathbf{Mod}_{\mathscr{V}}(A)^{\simeq}}\{A_A\}\simeq\mathrm{End}_{\mathscr{V}}^{\otimes}(A)\,.$$
But, there are also equivalences of spaces
$$\mathrm{End}_{\mathcal{C}_{\mathbbm{1}/}}(A_A)\simeq \mathbf{Bimod}_{\mathscr{V}}(A)^{\simeq}\times_{\mathbf{Mod}_{\mathscr{V}}(A)^{\simeq}}\{A_A\}\simeq \mathbf{LMod}_{\mathbf{Mod}_{\mathscr{V}}(A)}(A)^{\simeq}\times_{\mathbf{Mod}_{\mathscr{V}}(A)^{\simeq}}\{A_A\}\,.$$
The first one holds by unpacking the definitions, whereas the second is \cite[Theorem 4.3.2.7]{HA}.
Combining these three equivalences together, we find that there is an equivalence $\mathrm{End}_{\mathcal{C}_{\mathbbm{1}/}}(A_A)\simeq\mathrm{End}^{\otimes}_{\mathscr{V}}(A)$ as spaces.
One checks that the above equivalence is in fact inverse to the functor $\mathrm{End}_{\mathscr{V}}^{\otimes}(A)\rightarrow\mathrm{End}_{\mathcal{C}_{\mathbbm{1}/}}(A_A)$ sending a map of algebras $\varphi:A\rightarrow A$ to the $A$-$A$-bimodule $_{\varphi}A_A$.
This last assignment is compatible with the monoid structures, which concludes the outline of the proof.
\end{proof}

The proof outline we give above for Proposition \ref{prop:JonesReutter} holds in more generality:\
We only use that $\mathscr{V}$ has geometric realization and that its monoidal structure is compatible with them so as to guarantee the existence of the relative tensor products of bimodules over arbitrary algebras in $\mathscr{V}$.
Alternatively, we can restrict our attention to a specific class of algebras in $\mathscr{V}$ for which such relative tensor products exist.
We are specifically interested in the symmetric monoidal $n$-category $\mathbf{nVect}$ of complex $n$-vector spaces with $n\leq 3$.
As posited in \cite{Gaiotto:2019xmp} see also \cite[Section II]{JF}, the (symmetric monoidal) $(n+1)$-category of $(n+1)$-vector spaces coincides with the Morita category on the condensation algebras in $\mathbf{nVect}$.
The case $n=1$ was established in \cite{D1}, and the case $n=2$ in \cite{D8}. Some discussion of the case $n=3$ appears in \cite{Bhardwaj:2024xcx} to which we refer the reader for additional details.
The above discussion yields the following result which had already been recorded in \cite[Theorem 5.2.24]{Bhardwaj:2024xcx} and \cite[Proposition 2.3.2]{JF2}:

\begin{cor}\label{cor:thefibersequence}
Let $\mathbf{C}$ be a fusion $n$-category, there is a fiber sequence:
$$\mathrm{B}\mathbf{C}^{\times}\rightarrow\mathrm{B}\mathscr{A}ut^{\otimes}(\mathbf{C})\rightarrow\mathrm{B}\mathbf{Bimod}(\mathbf{C})^{\times}\,.$$
\end{cor}

\section*{Acknowledgments}

It is our pleasure to thank
Andrea Antinucci,
Arun Debray,
Ryohei Kobayashi,
Dmitri Nikshych,
Abhinav Prem,
Sakura Schäfer-Nameki,
Juven Wang, and
Weicheng Ye
for helpful discussions regarding the content of this paper.
In addition, we would like to thank the referee for careful reading our manuscript and suggesting many improvements.
TD is supported by the Simons Collaboration on Global Categorical Symmetries. He would also like to thank the University of Oxford where part of this work was completed. 
MY is supported by the EPSRC Open Fellowship EP/X01276X/1. He would also like to thank Harvard University, where part of this work was completed.
This research was supported in part by grant NSF PHY-2309135 to the Kavli Institute for Theoretical Physics (KITP). Part of this work
was completed at the KITP and benefited from interactions with the participants of the  program “Generalized
Symmetries in Quantum Field Theory:\ High Energy Physics, Condensed Matter, and Quantum
Gravity”. 

\section*{Conflict of Interest and Data Availability}
All authors certify that they have no affiliations with or involvement in any organization or entity with any financial interest or non-financial interest in the subject matter or materials discussed in this manuscript.  There is no data available
for this article to declare.
\bibliographystyle{alpha}
\bibliography{ref.bib}
\end{document}